\theoremstyle{plain}
\newtheorem{theorem}{Theorem}
\newtheorem{proposition}{Proposition}
\newtheorem{lemma}{Lemma}
\newtheorem{corollary}[theorem]{Corollary}
\theoremstyle{definition}
\theoremstyle{remark}
\newtheorem{remark}{Remark}
\newcommand{\atn}[1]{\textcolor{black}{#1}}
\newcommand{\atnn}[1]{\textcolor{black}{#1}}
\newcommand{\notice}[1]{\textcolor{black}{#1}}
	\providecommand\BibTeX{{%
			\normalfont B\kern-0.5em{\scshape i\kern-0.25em b}\kern-0.8em\TeX}}}
\begin{document}
	
	\title{Algorithm xxx: Faster Randomized SVD with Dynamic Shifts}
\thanks{This work was supported by the National Natural Science Foundation of China 61872206. W. Yu is the corresponding author.}	
	\author{Xu Feng}
	\email{fx17@mails.tsinghua.edu.cn}
	\author{Wenjian Yu}
	\email{yu-wj@tsinghua.edu.cn}
	\author{Yuyang Xie}
	\email{xyy18@mails.tsinghua.edu.cn}
	\author{Jie Tang}
	\email{jietang@tsinghua.edu.cn}
	\affiliation{%
		\institution{Department of Computer Science and Technology, BNRist, Tsinghua University}
		\city{Beijing}
		\postcode{100084}
		\country{China}
	}

	\renewcommand{\shortauthors}{X. Feng, Y. Xie, W. Yu, and J. Tang}
	
	\begin{abstract}
Aiming to provide a faster and convenient truncated SVD algorithm for large sparse matrices from real applications (i.e. for computing a few of largest singular values and the corresponding singular vectors), a dynamically shifted power iteration technique is applied to improve the accuracy of the randomized SVD method. This results in a \underline{d}yn\underline{a}mic \underline{sh}ifts based randomized \underline{SVD} (dashSVD) algorithm, which also collaborates with the skills for handling sparse matrices. An accuracy-control mechanism is included in the dashSVD algorithm to approximately monitor the per vector error bound of computed singular vectors with negligible overhead. Experiments on real-world data validate that the dashSVD algorithm largely improves the accuracy of randomized SVD algorithm or attains same accuracy with fewer passes over the matrix, and provides an efficient accuracy-control mechanism to the randomized SVD computation, while demonstrating the advantages on runtime  and parallel efficiency. A bound of the approximation error of the randomized SVD with the shifted power iteration is also proved.
	\end{abstract}
	
	\begin{CCSXML}
		<ccs2012>
		<concept>
		<concept_id>10010147.10010169.10010170</concept_id>
		<concept_desc>Computing methodologies~Parallel algorithms</concept_desc>
		<concept_significance>500</concept_significance>
		</concept>
		<concept>
		<concept_id>10010147.10010257.10010321.10010336</concept_id>
		<concept_desc>Computing methodologies~Feature selection</concept_desc>
		<concept_significance>500</concept_significance>
		</concept>
		<concept>
		<concept_id>10002950.10003705.10003707</concept_id>
		<concept_desc>Mathematics of computing~Solvers</concept_desc>
		<concept_significance>500</concept_significance>
		</concept>
		<concept>
		<concept_id>10003752.10003809.10003636.10003815</concept_id>
		<concept_desc>Theory of computation~Numeric approximation algorithms</concept_desc>
		<concept_significance>500</concept_significance>
		</concept>
		</ccs2012>
	\end{CCSXML}
	
	\ccsdesc[500]{Computing methodologies~Parallel algorithms}
	\ccsdesc[500]{Mathematics of computing~Solvers}
	\ccsdesc[500]{Theory of computation~Numeric approximation algorithms}
	
	\keywords{truncated singular value decomposition, random embedding, shifted power iteration, sparse matrix}

	\maketitle

\section{Introduction}\label{sec1}
In machine learning and data mining, truncated singular value decomposition (SVD), i.e. computing a few of largest singular values and corresponding singular vectors, is widely used in dimension reduction, information retrieval, matrix completion, etc \cite{musco2015,pmlr-v95-feng18a,ding2020efficient}. The truncated SVD computes optimal low-rank approximation and principal component analysis (PCA). Specifically, the top singular vector $\mathbf{u}_1$ of matrix $\mathbf{A}$ provides the top principal component, which reflects the direction of the largest variance within $\mathbf{A}$. The $i$-th left singular vector $\mathbf{u}_i$ provides the $i$-th principal component, which reflects the direction of the largest variance orthogonal to all higher principal components.

However, for large and high-dimensional input data from social network analysis, natural language processing and recommender system,  computing truncated SVD of the corresponding matrix often consumes tremendous time and memory usage. For large data matrices which are usually very sparse, the preferred method of computing truncated SVD is using \texttt{svds} in Matlab \cite{baglama2005augmented}. \texttt{svds} is still the most robust and efficient in general cases compared with the variant algorithms of \texttt{svds} like lansvd~\cite{propack}. To tackle the difficulty of computing truncated SVD for large sparse data, various algorithms have been proposed~\cite{Halko2011Finding,musco2015,LazySVD,rsvdpack,alg971,pmlr-v95-feng18a,wu2017primme_svds}. They consume less computational resource, but usually induce a little accuracy loss. Among these algorithms, the randomized method based on random embedding through multiplying a random matrix \cite{martinsson2020randomized} gains  a lot of attention. The randomized method produces a near-optimal truncated SVD of the matrix, while exhibiting performance advantages over the classical methods (like less runtime, fewer passes over the matrix and better parallelizability, etc). Therefore, the randomized method is very favorable for the scenarios where only low-accuracy SVD is required, like in the field of machine learning.
More analysis on relevant techniques and theories can be found in \cite{Halko2011Finding,martinsson2020randomized}.

In this work, we aim to provide an algorithm for computing the truncated SVD of large sparse matrices efficiently on a multi-core computer. Inspired by the shift technique in the power method~\cite{matrix2012}, we develop a dynamic shifts based randomized SVD (called dashSVD) algorithm to improve the computational efficiency of the randomized SVD. An efficient accuracy-control scheme is also developed and integrated into the dashSVD algorithm. 
Our major contributions and results are as follows.
\begin{itemize}
	\setlength{\itemsep}{1pt}
	\setlength{\parsep}{1pt}
	\setlength{\parskip}{1pt}
	\item A dynamic scheme for setting the shift values in the shifted power iteration is developed to accelerate the randomized SVD algorithm. It improves the accuracy of the result or reduces the number of power iterations for attaining same accuracy. Combining the scheme with the accelerating skills for handling sparse matrices, we have further developed a dynamic shifts based randomized SVD (dashSVD) algorithm.
	\item 
	Based on the per vector error (PVE) criterion \cite{musco2015}, an efficient accuracy-control mechanism is developed and integrated into the dashSVD algorithm. It resolves the difficulty of setting a suitable power parameter $p$, and enables automatic termination of the power iteration  according to the PVE bound based accuracy criterion.
	\item Experiments on real-world data have validated the efficiency of these techniques, demonstrating that dashSVD runs 
	faster than the state-of-the-art algorithms for attaining a not very high accuracy (e.g. with PVE error $\epsilon_{\textrm{PVE}}\ge 10^{-2}$), with  comparable memory usage. On dataset uk-2005, dashSVD runs
	3.2X faster than the \texttt{LanczosBD} algortihm in \texttt{svds} for attaining the accuracy corresponding to PVE error $\epsilon_{\textrm{PVE}}=10^{-1}$ with serial computing, and runs 4.0X faster than PRIMME\_SVDS \cite{wu2017primme_svds} with parallel computing employing 8 threads. The experiments also reveal that dashSVD is more robust than the existing fast SVD algorithms \cite{propack,wu2017primme_svds}.
\end{itemize}

It should be noted that, in this work we just consider the efficient implementation of the randomized SVD on a shared-memory machine. Its efficient implementation in a distributed-memory parallelism context is also of great interest \cite{ICML2023}. The codes of dashSVD are shared on GitHub (\url{https://github.com/THU-numbda/dashSVD})

The rest of this paper is structured as follows. Section 2 introduces the basics of the randomized SVD via random embedding and the state-of-the-art truncated SVD algorithms. In Section 3, we present the dashSVD algorithm with the dynamic scheme for setting the shift values in the shifted power iteration, and the accuracy-control mechanism based on PVE criterion. 
Numerical experiments for performance analysis are presented in \notice{Section 4}. Finally, we draw the conclusions. More theoretical proofs and experimental results are given in the Appendix.

\subsection{Related Work}\label{sec1.1}
A basic randomized SVD algorithm was presented in \cite{Halko2011Finding}, 
where a power iteration technique was employed to improve accuracy while sacrificing computational time. A couple of works were later proposed to accelerate it. 
The amount of orthonormalization in the power iteration is reduced and the eigenvalue decomposition (EVD) is used to compute economic SVD faster but sacrificing little numerical stability \cite{rsvdpack}.  H. Li et al. proposed to replace the QR factorization for orthonormalization in power iteration with LU factorization \cite{alg971}, to reduce the computational time. N. B. Erichson et al. employed a variant of the randomized SVD algorithm with several acceleration tricks in image and video processing problems \cite{Erichson_2017_ICCV}. Later, an algorithm called frPCA \cite{pmlr-v95-feng18a} was proposed which collaborates the skills in \cite{rsvdpack,Erichson_2017_ICCV,alg971} and is specialized to accelerate the truncated (top-$k$) SVD computation of large sparse matries.

As an alternative of power iteration, the technique of block Krylov iteration (BKI) also improves the accuracy of the randomized SVD. It consumes less CPU time than the algorithm with power iteration to attain the same accuracy~\cite{musco2015}, but consumes much larger memory. SLEPc in \cite{andez2008robust} is an efficient library to compute both SVD and generalized SVD in both distributed and shared memory environment. PRIMME\_SVDS in \cite{wu2017primme_svds} is based on eigenvalue package PRIMME \cite{2010primme} and could efficiently compute both the largest and smallest singular values, which is more efficient than SLEPc for computing truncated SVD.
LazySVD in \cite{LazySVD} is based on the Lanczos process. It lacks accuracy control and only computes the left singular vectors.

\section{Preliminary}\label{sec2}
	We use the conventions from Matlab in this paper to specify indices of matrices and functions.
	
	\subsection{Basics of Truncated SVD}\label{sec2.1}
	
	The economic SVD of a matrix $\mathbf{A}\in\mathbb{R}^{m\times n}$ ($m\ge n$) is
    \begin{equation}
    \label{esvd}
     \mathbf{A} = \mathbf{U\Sigma V}^\mathrm{T}
	,
    \end{equation}
	where $\mathbf{U}=[\mathbf{u}_1,\mathbf{u}_2,\cdots,\mathbf{u}_n]$ and $\mathbf{V}=[\mathbf{v}_1,\mathbf{v}_2,\cdots,\mathbf{v}_n]$ are matrices containing the left and right singular vectors of $\mathbf{A}$, respectively. And, $\mathbf{\Sigma}$ is an $n\times n$ diagonal matrix  containing the singular values ($\sigma_1, \sigma_2, \cdots,\sigma_n$) of $\mathbf{A}$ in descending order. \{$\mathbf{u}_i, \mathbf{v}_i,\sigma_i$\} is called the $i$-th singular triplet, and $\sigma_i(\cdot)$ also denotes the $i$-th largest singular value. From (\ref{esvd}), one can obtain the truncated SVD of $\mathbf{A}$:
	\begin{equation}
		\mathbf{A}_k =  \mathbf{U}_k\mathbf{\Sigma}_k\mathbf{V}_k^\mathrm{T}, ~ ~ k<\min(m,n),\end{equation}
	where $\mathbf{U}_k$ and $\mathbf{V}_k$ are matrices with the first $k$ columns of $\mathbf{U}$ and $\mathbf{V}$ respectively, and the diagonal matrix $\mathbf{\Sigma}_k$ is the $k\times k$ upper-left submatrix of $\mathbf{\Sigma}$. Notice that $\mathbf{A}_k$ is the best rank-$k$ approximation of $\mathbf{A}$ in both spectral norm and Frobenius norm \citep{eckart1936}. \texttt{svds} in Matlab is based on the Lanczos bidiagonalization process and an augmented restarting scheme which reduces the memory cost and ensures accuracy \citep{baglama2005augmented}. \texttt{svds} has been widely regarded as the standard tool for computing the truncated SVD.

\subsection{Randomized SVD Algorithm with Power Iteration}\label{sec2.2}
	
	The basic randomized SVD algorithm \citep{Halko2011Finding} can be described as Algorithm~1, where $\mathbf{\Omega}$ is a Gaussian i.i.d random matrix, and the orthonormalization operation ``orth($\cdot$)'' can be implemented with a call to a packaged QR factorization. The power iteration in Step 3 through 5 is for improving the accuracy of 
the result, where the orthonormalization alleviating the round-off error in floating-point computation is performed. Here we employ the skill of performing the orthonormalization after every other matrix-matrix multiplication to save computational cost with little accuracy loss \citep{rsvdpack,pmlr-v95-feng18a}, as compared to doing orthonormalization after every matrix-matrix multiplication \cite{Halko2011Finding}.

	\begin{algorithm}[ht]
		\caption{Basic randomized SVD with power iteration}
		\label{alg1}
		\begin{algorithmic}[1]
			\REQUIRE $\mathbf{A}\in\mathbb{R}^{m\times n}$, rank parameter $k$, oversampling parameter $s$, power parameter $p$
			\ENSURE$\mathbf{U}\in\mathbb{R}^{m\times k}$, $\mathbf{S}\in\mathbb{R}^{k\times k}$, $\mathbf{V}\in\mathbb{R}^{n\times k}$
			\STATE $l=k+s$, $\mathbf{\Omega} = \mathrm{randn}(n, l)$
			\STATE $\mathbf{Q} = \mathrm{orth}(\mathbf{A\Omega})$
			\FOR {$j=1, 2, \cdots, p$}
			\STATE $\mathbf{Q} = \mathrm{orth}(\mathbf{AA}^\mathrm{T}\mathbf{Q})$
			\ENDFOR
			\STATE $\mathbf{B}  = \mathbf{Q}^{\mathrm{T}}\mathbf{A}$
			\STATE $[\mathbf{U}, \mathbf{S}, \mathbf{V}] = \mathrm{svd}(\mathbf{B}, \mathrm{'econ'})$
			\STATE $\mathbf{U} = \mathbf{Q}\mathbf{U}(:, 1\!:\!k), \mathbf{S} \!=\!\mathbf{S}(1\!:\!k, 1\!:\!k), \mathbf{V}\! =\! \mathbf{V}(:, 1\!:\!k)$
		\end{algorithmic}
	\end{algorithm}
	
	If there is no power iteration, the $m\times l$ orthonormal matrix $\mathbf{Q}=\mathrm{orth}(\mathbf{A\Omega})$ is an  approximation of the basis of dominant subspace of $range(\mathbf{A})$, i.e.,  $span\{\mathbf{u}_1, \mathbf{u}_2, \cdots, \mathbf{u}_l\}$. 
	Therefore, $\mathbf{A}\approx\mathbf{QQ}^\mathrm{T}\mathbf{A}=\mathbf{QB}$ according to Step 6.
	When the economic SVD is performed  on the short-and-fat $l\!\times\! n$ matrix $\mathbf{B}$, the approximate truncated SVD of $\mathbf{A}$ is finally obtained. 
	Employing the power iteration, one obtains $\mathbf{Q}=\mathrm{orth}((\mathbf{AA}^{\mathrm{T}})^p\mathbf{A}\mathbf{\Omega})$, if the intermediate orthonormalization steps are ignored. This makes $\mathbf{Q}$ better approximate the basis of dominant subspace of $range((\mathbf{AA}^{\mathrm{T}})^p\mathbf{A})$, same as that of $range(\mathbf{A})$, because $(\mathbf{AA}^{\mathrm{T}})^p\mathbf{A}$'s singular values decay more quickly than those of $\mathbf{A}$ \citep{Halko2011Finding}. So, the computed singular triplets are more accurate, and the larger $p$ 
 leads to
 more accurate results and more computational cost as well. According to the following Lemma, i.e. (3.3.17) and (3.3.18) in \cite{horn1991topics}, we can derive \cref{proposition:2} which shows the relationship between the singular values of $\mathbf{A}$ and the singular values computed by Alg.~1. 
	
	\begin{lemma}
	\label{lemma:3}
	Suppose $\mathbf{A},\mathbf{C}\in\mathbb{R}^{m\times n}$. The following inequalities hold for the decreasingly ordered singular values of $\mathbf{A}$, $\mathbf{C}$ and $\mathbf{AC}^\mathrm{T}$ ($1\le i, j, i+j -1  \le \min(m, n)$) 
	\begin{equation}
		\label{lemma3:1}
		\sigma_{i+j-1}(\mathbf{AC}^{\mathrm{T}}) \le \sigma_i(\mathbf{A})\sigma_j(\mathbf{C}) ~,
	\end{equation} 
	and
	\begin{equation}
		\label{lemma3:2}
		\sigma_{i+j-1}(\mathbf{A}+\mathbf{C}) \le \sigma_i(\mathbf{A})+\sigma_j(\mathbf{C}) ~.
	\end{equation}
\end{lemma}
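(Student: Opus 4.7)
The plan is to deduce both inequalities from a single, standard strategy based on the Eckart--Young characterization
\begin{equation}
\sigma_k(\mathbf{X}) = \min_{\mathrm{rank}(\mathbf{Y})<k} \|\mathbf{X}-\mathbf{Y}\|_2,
\end{equation}
combined with the triangle inequality and submultiplicativity of the spectral norm. Concretely, I would fix best rank-$(i-1)$ and rank-$(j-1)$ approximants $\mathbf{A}_{i-1}$ to $\mathbf{A}$ and $\mathbf{C}_{j-1}$ to $\mathbf{C}$, so that the error matrices $\mathbf{E}_A := \mathbf{A}-\mathbf{A}_{i-1}$ and $\mathbf{E}_C := \mathbf{C}-\mathbf{C}_{j-1}$ satisfy $\|\mathbf{E}_A\|_2 = \sigma_i(\mathbf{A})$ and $\|\mathbf{E}_C\|_2 = \sigma_j(\mathbf{C})$. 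The whole proof then reduces to identifying, in each case, a low-rank matrix whose removal from the quantity of interest leaves a residual controlled by the $\mathbf{E}$'s.

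For \eqref{lemma3:1}, expanding $\mathbf{A}\mathbf{C}^{\mathrm{T}} = (\mathbf{A}_{i-1}+\mathbf{E}_A)(\mathbf{C}_{j-1}^{\mathrm{T}}+\mathbf{E}_C^{\mathrm{T}})$ and grouping gives the identity
\begin{equation}
\mathbf{A}\mathbf{C}^{\mathrm{T}} - \bigl(\mathbf{A}_{i-1}\mathbf{C}^{\mathrm{T}} + \mathbf{E}_A\mathbf{C}_{j-1}^{\mathrm{T}}\bigr) = \mathbf{E}_A\mathbf{E}_C^{\mathrm{T}}.
\end{equation}
The subtracted matrix in parentheses has rank at most $(i-1)+(j-1) = i+j-2$, so Eckart--Young (with $k=i+j-1$) yields $\sigma_{i+j-1}(\mathbf{A}\mathbf{C}^{\mathrm{T}}) \le \|\mathbf{E}_A\mathbf{E}_C^{\mathrm{T}}\|_2 \le \|\mathbf{E}_A\|_2\|\mathbf{E}_C\|_2 = \sigma_i(\mathbf{A})\sigma_j(\mathbf{C})$. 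For \eqref{lemma3:2}, the analogous identity is simply $\mathbf{A}+\mathbf{C} - (\mathbf{A}_{i-1}+\mathbf{C}_{j-1}) = \mathbf{E}_A + \mathbf{E}_C$, and the same rank-$(i+j-2)$ bound combined with the triangle inequality immediately finishes it.

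The main obstacle is getting the grouping in \eqref{lemma3:1} right: the naive choice of subtracting $\mathbf{A}_{i-1}\mathbf{C}_{j-1}^{\mathrm{T}}$ leaves the cross terms $\mathbf{A}_{i-1}\mathbf{E}_C^{\mathrm{T}}$ and $\mathbf{E}_A\mathbf{C}_{j-1}^{\mathrm{T}}$ behind, and these cannot be bounded by the target quantity $\sigma_i(\mathbf{A})\sigma_j(\mathbf{C})$ because they involve the potentially large norms $\|\mathbf{A}_{i-1}\|_2$ and $\|\mathbf{C}_{j-1}\|_2$. Absorbing one of the cross terms into the low-rank approximant, as above, is the key trick; the remaining residual is then exactly a product of two error matrices. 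Beyond this, one need only check that the index condition $1\le i+j-1\le\min(m,n)$ stated in the lemma ensures $\sigma_{i+j-1}$ is defined, so that Eckart--Young is being applied within its valid range.
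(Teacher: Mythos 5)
Your proof is correct, but it does not follow the paper's route: the paper offers no proof at all for this lemma, simply citing it as inequalities (3.3.17) and (3.3.18) of Horn and Johnson's \emph{Topics in Matrix Analysis}, where the standard derivation goes through the Courant--Fischer-type variational (min-max over subspaces) characterization of singular values. What you give instead is a self-contained argument from the Schmidt--Mirsky/Eckart--Young characterization $\sigma_k(\mathbf{X})=\min_{\mathrm{rank}(\mathbf{Y})<k}\Vert\mathbf{X}-\mathbf{Y}\Vert_2$ together with subadditivity of rank, and every step checks out: the grouping $\mathbf{A}\mathbf{C}^{\mathrm{T}}-(\mathbf{A}_{i-1}\mathbf{C}^{\mathrm{T}}+\mathbf{E}_A\mathbf{C}_{j-1}^{\mathrm{T}})=\mathbf{E}_A\mathbf{E}_C^{\mathrm{T}}$ is an exact identity, the subtracted matrix indeed has rank at most $i+j-2$ (each summand's rank is bounded by that of its low-rank factor), and the residuals are controlled by $\Vert\mathbf{E}_A\Vert_2=\sigma_i(\mathbf{A})$ and $\Vert\mathbf{E}_C\Vert_2=\sigma_j(\mathbf{C})$ via submultiplicativity for \eqref{lemma3:1} and the triangle inequality for \eqref{lemma3:2}. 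Your identification of the cross-term issue, and the fix of absorbing one cross term into the low-rank approximant, is exactly the nontrivial point of this approach. What your route buys is a short, unified proof of both inequalities from a single principle that is already thematically close to the paper's concerns (low-rank approximation error); what the variational route buys is that it generalizes more readily to the full family of Weyl/Horn-type inequalities and to eigenvalues of Hermitian matrices. Either way, your argument is a valid replacement for the citation.
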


\begin{proposition}
	\label{proposition:2}
	Suppose $\mathbf{A}\in\mathbb{R}^{m\times n}$ and $\mathbf{Q}\in\mathbb{R}^{n\times l}~(l\le \min(m, n)) $ is an orthonormal matrix. Then,
	\begin{equation}
		\label{proposition2:1}
		\sigma_{i}(\mathbf{AQ}) \le \sigma_i(\mathbf{A})~,~\textrm{for any}~i\le l~.
	\end{equation}
\end{proposition}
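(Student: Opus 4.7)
The plan is to invoke \cref{lemma:3} with $\mathbf{C} := \mathbf{Q}^{\mathrm{T}}$, so that $\mathbf{AQ} = \mathbf{A}\mathbf{C}^{\mathrm{T}}$. Since $\mathbf{Q}$ has orthonormal columns, $\mathbf{Q}^{\mathrm{T}}\mathbf{Q} = \mathbf{I}_l$, so every nonzero singular value of $\mathbf{Q}$ (and hence of $\mathbf{C}$) equals $1$; in particular $\sigma_1(\mathbf{C}) = 1$. Specializing \eqref{lemma3:1} to $j = 1$ then yields
\begin{equation*}
\sigma_i(\mathbf{AQ}) \;=\; \sigma_{i+1-1}(\mathbf{A}\mathbf{C}^{\mathrm{T}}) \;\le\; \sigma_i(\mathbf{A})\,\sigma_1(\mathbf{C}) \;=\; \sigma_i(\mathbf{A})
\end{equation*}
for every $i\le l$, which is exactly the desired conclusion. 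The argument is essentially a one-line use of sub-multiplicativity of singular values, so I do not expect any genuine obstacle.

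The only bookkeeping detail is that \cref{lemma:3} is phrased for two matrices of identical shape $m\times n$, whereas $\mathbf{C} = \mathbf{Q}^{\mathrm{T}}$ here is $l\times n$. I would handle this by zero-padding $\mathbf{C}$ with additional zero rows (and, if needed, extra zero columns into $\mathbf{A}$) so that both matrices have shape $m\times n$; such padding changes neither the nonzero singular values of $\mathbf{C}$ nor those of $\mathbf{A}\mathbf{C}^{\mathrm{T}}$, so the inequality transfers back to the original $\mathbf{AQ}$ verbatim. The constraint $i \le l$ is exactly what is needed to ensure $\sigma_i(\mathbf{AQ})$ is a well-defined positive singular value (the matrix $\mathbf{AQ}$ has only $l$ of them).

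If a more self-contained route is preferred, I would fall back on the Courant--Fischer min--max characterization: because $\|\mathbf{Q}y\|=\|y\|$ for all $y\in\mathbb{R}^l$, each $i$-dimensional subspace $S\subseteq\mathbb{R}^l$ maps isometrically to an $i$-dimensional subspace $\mathbf{Q}S\subseteq\mathrm{range}(\mathbf{Q})\subseteq\mathbb{R}^n$, and hence $\sigma_i(\mathbf{AQ})=\max_{\dim S = i}\min_{\|y\|=1,\,y\in S}\|\mathbf{AQ}y\|$ is a maximum over a \emph{sub-family} of the $i$-dimensional subspaces entering the definition of $\sigma_i(\mathbf{A})$. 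Restricting the feasible set can only shrink the max, giving the same bound without any padding.
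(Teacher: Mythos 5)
Your proposal is correct and follows essentially the same route as the paper: it applies \eqref{lemma3:1} with $j=1$ after zero-padding $\mathbf{Q}^{\mathrm{T}}$ to an $m\times n$ matrix $\mathbf{C}$ with $\sigma_1(\mathbf{C})=1$, exactly as in the paper's proof of \cref{proposition:2}. The alternative Courant--Fischer argument you sketch is also valid but is not needed.
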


\begin{proof}
	We append zero columns to $\mathbf{Q}$ to get an $n\times m$ matrix $\mathbf{C}^\mathrm{T}=[\mathbf{Q}, \mathbf{0}]\in\mathbb{R}^{n\times m}$. Since $\mathbf{Q}$ is an orthonormal matrix, $\sigma_1(\mathbf{C})=1$. According to (\ref{lemma3:1}) in \cref{lemma:3},
	\begin{equation}
		\label{p2:1}
		\sigma_i(\mathbf{AC}^\mathrm{T}) \le \sigma_i(\mathbf{A})\sigma_1(\mathbf{C}) = \sigma_i(\mathbf{A}).
	\end{equation}
	Because $\mathbf{AC}^\mathrm{T}\!=\![\mathbf{AQ}, \mathbf{0}]$, for any $i\le l$, $\sigma_i(\mathbf{AQ}) = \sigma_i(\mathbf{AC}^\mathrm{T})$. Then, combining (\ref{p2:1}) we can prove (\ref{proposition2:1}).
\end{proof}

\begin{remark}
	Because the singular values of $\mathbf{B}$ in Step 6 of Alg. 1 equal to those of $\mathbf{B}^\mathrm{T}=\mathbf{A}^\mathrm{T}\mathbf{Q}$%
	, \cref{proposition:2} infers that the singular value computed with Alg. 1 is \emph{equal to or less than} its accurate value, i.e. $\sigma_i(\mathbf{B})\le \sigma_i(\mathbf{A})$. This explains that the singular value curve computed with the randomized SVD algorithm is always \emph{underneath} the accurate curve of singular value, as shown in literature.
\end{remark}
	For a sparse matrix $\mathbf{A}$, the power iteration costs a larger portion of the total time as it involves the manipulation of dense matrices. So, accelerating the computations in power iteration becomes important. Besides reducing the amount of orthonormalization, the skills of using LU factorization to replace QR factorization and eigenvalue decomposition (EVD) based SVD to do orthonormalization have been employed. Along with the techniques for efficiently handling the matrix with more columns than rows and allowing  odd number of passes
	over $\mathbf{A}$, an algorithm called frPCA was developed in \cite{pmlr-v95-feng18a} for faster truncated SVD of sparse matrices. However, for most real-world sparse matrices whose singular values decay slowly, a large number of  power iteration steps are required 
	to attain satisfied accuracy, and there is a lack of efficient mechanism to determine the power parameter $p$. 
 
 Below we use the floating-point operation (flop) count to analyze the computational cost. Suppose $C_{\mathrm{mul}}$, $C_{\mathrm{qr}}$ and $C_{\mathrm{svd}}$ represent the constants in the flop counts of matrix-martrix multiplication, QR factorization and economic SVD, respectively. And, $\mathrm{nnz}(\mathbf{A})$ is the number of nonzero elements in $\mathbf{A}$.
 According to the procedure, the flop count of Alg.~1 is:
    \begin{equation}
    \mathrm{FC}_{1} = (2p+2)C_{mul}\mathrm{nnz}(\mathbf{A})l+(p+1)C_{qr}ml^2+C_{svd}nl^2+C_{mul}mlk,
    \end{equation}
where $(2p+2)C_{mul}\mathrm{nnz}(\mathbf{A})l$ reflects the matrix-matrix multiplication on $\mathbf{A}$, $(p+1)C_{qr}ml^2$ reflects the QR factorization in Step 2 and 4, $C_{svd}nl^2$ reflects the economic SVD in Step 7 and $C_{mul}mlk$ reflects the matrix-matrix multiplication in Step 8.

\subsection{Algorithms in \texttt{svds} and PRIMME\_SVDS}\label{sec2.3}
	
	\texttt{svds} is the most robust and well-known tool to compute truncated SVD based on the Lanczos bidiagonalization process with augmented restarting scheme in \cite{baglama2005augmented}. In \texttt{svds}, the relative  residual of singular vectors 
	\begin{equation}
		\label{err:svds}
		\max\limits_{i\le k}\frac{\Vert\mathbf{A}^\mathrm{T}\mathbf{\hat{u}}_i-\hat{\sigma}_i\mathbf{\hat{v}}_i\Vert_2}{\hat{\sigma}_i}
	\end{equation}
	is calculated to control the accuracy, where $\{\mathbf{\hat{u}}_i, \hat{\sigma}_i, \mathbf{\hat{v}}_i\}$ is the computed $i$-th singular triplet. The Lanczos bidiagonalization process with augmented restarting scheme is called \texttt{LanczosBD} in Matlab. \texttt{svds} first runs \texttt{LanczosBD} until the first $k$ singular triplets make (\ref{err:svds}) smaller than a preset tolerance. Then, one or more extra invocations of \texttt{LanczosBD} for computing the first $k+1$ singular triples are executed to ensure the robustness of \texttt{svds}. Therefore, \texttt{svds} can produce accurate truncated SVD in almost all scenarios.

Recently, a high-performance parallel SVD solver called PRIMME\_SVDS \citep{wu2017primme_svds} was developed. For a matrix $\mathbf{A}$, PRIMME\_SVDS first uses the state-of-the-art truncated EVD library PRIMME \citep{2010primme} to compute EVD of $\mathbf{A}^\mathrm{T}\mathbf{A}$. Then, the obtained results are used as input vectors to solve the truncated EVD of $\left[\begin{smallmatrix}\boldsymbol{0} & \mathbf{A}^{\mathrm{T}}\\\mathbf{A} & \boldsymbol{0}\end{smallmatrix}\right]$ with PRIMME, to ensure the accuracy of computed singular vectors. The relative residuals of eigenvectors and eigenvalues are also used in PRIMME for accuracy control, which is similar to (\ref{err:svds}) in \texttt{svds}.
	
\section{Randomized SVD with Dynamic Shifts} \label{sec3}
In this section, we first introduce how to improve the accuracy of the randomized SVD with the shifted power iteration and
a dynamic scheme of setting the shift value. Then, we discuss the termination criteria for accuracy control and devise an efficient algorithm based on PVE criterion.

\subsection{Shifted Power Iteration and Setting Dynamic Shifts}\label{sec3.1}

The computation $\mathbf{Q}=\mathbf{AA}^\mathrm{T}\mathbf{Q}$ in the power iteration of Alg.~1 is the same as that in the power method for computing the largest eigenvalue and corresponding eigenvector of $\mathbf{AA}^\mathrm{T}$. The shift skill can be used to accelerate the convergence of the power method because of the reduction of the ratio between the second largest eigenvalue and the largest one \citep{matrix2012}. This inspires the idea of using shifts in the power iteration of randomized SVD algorithm. We first give two Lemmas \citep{matrix2012} to derive the power iteration with the shift value.
\begin{lemma}
	\label{lemma:1}
	For a symmetric real-valued matrix $\mathbf{A}$, its singular values are the absolute values of its eigenvalues. For any eigenvalue $\lambda$ of $\mathbf{A}$, the left singular vector corresponding to singular value $\vert\lambda\vert$ is the normalized eigenvector for $\lambda$.
\end{lemma}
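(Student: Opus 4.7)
My plan is to obtain the SVD of $\mathbf{A}$ directly from its spectral decomposition. First I would invoke the spectral theorem: since $\mathbf{A}\in\mathbb{R}^{n\times n}$ is symmetric, there exist an orthogonal $\mathbf{Q}=[\mathbf{q}_1,\ldots,\mathbf{q}_n]$ and a real diagonal $\mathbf{D}=\mathrm{diag}(\lambda_1,\ldots,\lambda_n)$ with $\mathbf{A}=\mathbf{Q}\mathbf{D}\mathbf{Q}^\mathrm{T}$. This is not literally an SVD, because some $\lambda_i$ may be negative while singular values must be non-negative; the whole task is to absorb those signs into one of the orthogonal factors.

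The key step is to factor the signs out of $\mathbf{D}$. I would define the diagonal sign matrix $\mathbf{S}=\mathrm{diag}(\mathrm{sign}(\lambda_1),\ldots,\mathrm{sign}(\lambda_n))$ (with the convention $\mathrm{sign}(0)=1$) so that $\mathbf{D}=\mathbf{S}|\mathbf{D}|$ where $|\mathbf{D}|=\mathrm{diag}(|\lambda_1|,\ldots,|\lambda_n|)$. Then
\begin{equation*}
\mathbf{A}=(\mathbf{Q}\mathbf{S})\,|\mathbf{D}|\,\mathbf{Q}^\mathrm{T},
\end{equation*}
and since $\mathbf{Q}\mathbf{S}$ is a product of two orthogonal matrices it is itself orthogonal. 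Setting $\mathbf{U}=\mathbf{Q}\mathbf{S}$, $\mathbf{\Sigma}=|\mathbf{D}|$, $\mathbf{V}=\mathbf{Q}$ (with an optional permutation that sorts the $|\lambda_i|$ in descending order) gives a valid SVD of $\mathbf{A}$. This immediately establishes the first claim, that the singular values of $\mathbf{A}$ are $|\lambda_1|,\ldots,|\lambda_n|$.

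For the second claim I would just read off the columns of $\mathbf{U}=\mathbf{Q}\mathbf{S}$: the column associated with singular value $|\lambda_i|$ is $\mathrm{sign}(\lambda_i)\mathbf{q}_i$, which is $\pm\mathbf{q}_i$ and hence still a unit eigenvector of $\mathbf{A}$ for eigenvalue $\lambda_i$. Equivalently, starting from $\mathbf{A}\mathbf{q}_i=\lambda_i\mathbf{q}_i$ one divides by $|\lambda_i|$ (for $\lambda_i\ne 0$) to get the defining SVD relation $\mathbf{A}\mathbf{q}_i=|\lambda_i|\bigl(\mathrm{sign}(\lambda_i)\mathbf{q}_i\bigr)$, identifying the left singular vector as a normalized eigenvector for $\lambda_i$; for $\lambda_i=0$ both interpretations give the zero image and the claim is vacuous.

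The argument is entirely elementary and I do not foresee any real obstacle. The only subtlety worth flagging is the inherent sign ambiguity of both eigenvectors and singular vectors, which is precisely what lets the statement identify ``the'' left singular vector with ``the'' normalized eigenvector (they coincide up to a $\mathrm{sign}(\lambda_i)$ that is itself just a choice of sign). Repeated eigenvalues would only force one to pick an orthonormal basis of each eigenspace, but do not alter the structure of the argument.
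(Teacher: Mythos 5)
Your proof is correct. Note that the paper does not actually prove \cref{lemma:1}; it states it as a known fact cited from the matrix-computations literature, so there is no in-paper argument to compare against. Your route --- spectral decomposition $\mathbf{A}=\mathbf{Q}\mathbf{D}\mathbf{Q}^\mathrm{T}$, absorbing the signs via $\mathbf{D}=\mathbf{S}\vert\mathbf{D}\vert$ to exhibit $(\mathbf{Q}\mathbf{S})\vert\mathbf{D}\vert\mathbf{Q}^\mathrm{T}$ as an SVD --- is the standard derivation and is sound, including your handling of the $\lambda_i=0$ case and the sign ambiguity. The one degenerate situation worth a passing remark (though it does not affect the paper's use of the lemma, where $\mathbf{A}\mathbf{A}^\mathrm{T}$ is positive semi-definite) is when both $\lambda$ and $-\lambda$ are eigenvalues: then $\vert\lambda\vert$ is a repeated singular value and an arbitrary left singular vector for it need not be an eigenvector of $\mathbf{A}$; your construction shows only that the left singular vectors \emph{can be chosen} to be eigenvectors, which is the sense in which the lemma should be read.
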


\begin{lemma}
	\label{lemma:2}
	Suppose matrix $\mathbf{A} \in \mathbb{R}^{n\times n}$, and a shift $\alpha \in \mathbb{R}$. For any eigenvalue $\lambda$ of $\mathbf{A}$,  $\lambda-\alpha$ is an eigenvalue of  $\mathbf{A}-\alpha\mathbf{I}$, where $\mathbf{I}$ is the identity matrix. And, the eigenspace of $\mathbf{A}$ for $\lambda$ is the same as the eigenspace of $\mathbf{A}-\alpha\mathbf{I}$ for $\lambda-\alpha$.
\end{lemma}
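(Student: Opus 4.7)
The plan is to verify both claims of the lemma directly from the defining eigenvalue equation, without invoking any machinery beyond the definition of an eigenvalue and eigenspace. This is a classical textbook fact (the ``spectral shift'' property), so the main challenge is stylistic rather than mathematical: one should present a short, self-contained argument that establishes the set equality of eigenspaces as a chain of equivalences rather than a pair of one-sided inclusions.

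First I would fix an arbitrary eigenvalue $\lambda$ of $\mathbf{A}$ and let $\mathbf{x} \in \mathbb{R}^n$ be any nonzero vector in the associated eigenspace, so that $\mathbf{A}\mathbf{x} = \lambda\mathbf{x}$. Subtracting $\alpha\mathbf{x} = \alpha\mathbf{I}\mathbf{x}$ from both sides yields $(\mathbf{A} - \alpha\mathbf{I})\mathbf{x} = (\lambda - \alpha)\mathbf{x}$, which proves simultaneously that $\lambda - \alpha$ is an eigenvalue of $\mathbf{A} - \alpha\mathbf{I}$ and that every eigenvector of $\mathbf{A}$ for $\lambda$ is an eigenvector of $\mathbf{A} - \alpha\mathbf{I}$ for $\lambda - \alpha$. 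This handles the first assertion of the lemma and one inclusion of the eigenspace equality.

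For the reverse inclusion, I would run the same manipulation backwards: given $\mathbf{x}$ with $(\mathbf{A} - \alpha\mathbf{I})\mathbf{x} = (\lambda - \alpha)\mathbf{x}$, add $\alpha\mathbf{x}$ to both sides to recover $\mathbf{A}\mathbf{x} = \lambda\mathbf{x}$. Since each step is reversible, the two eigenspaces coincide as sets (and hence as subspaces). Because the entire argument is a two-line algebraic equivalence, the ``hard part'' really amounts to being careful that the statement is an \emph{if and only if} on the level of eigenvectors, so that the equality of eigenspaces follows immediately rather than requiring a separate dimension-counting or basis argument.

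Because \cref{lemma:2} will be applied in \cref{sec3.1} to the symmetric matrix $\mathbf{A}\mathbf{A}^\mathrm{T}$ in order to reduce the ratio between the top two eigenvalues via a shift, I would note (but not prove) that combining this lemma with \cref{lemma:1} lets one translate the accelerated convergence of the shifted power iteration on $\mathbf{A}\mathbf{A}^\mathrm{T} - \alpha\mathbf{I}$ back into a statement about singular vectors of $\mathbf{A}$. No separate obstacle arises in the proof itself, so the proposal is essentially the two-line computation above stated carefully.
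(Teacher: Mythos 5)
Your proof is correct; the paper does not actually prove \cref{lemma:2} but simply cites it to the textbook reference \cite{matrix2012}, and your two-line equivalence $\mathbf{A}\mathbf{x}=\lambda\mathbf{x} \Leftrightarrow (\mathbf{A}-\alpha\mathbf{I})\mathbf{x}=(\lambda-\alpha)\mathbf{x}$ is exactly the standard argument that reference would supply. The only cosmetic points are that the zero vector lies in both eigenspaces trivially (so the nonzero-vector equivalence does give equality of the subspaces), and that for a general real $\mathbf{A}$ an eigenvalue may be complex so $\mathbf{x}$ should be taken in $\mathbb{C}^n$ rather than $\mathbb{R}^n$ --- though in the paper's application the matrix is the symmetric $\mathbf{A}\mathbf{A}^\mathrm{T}$, for which everything is real.
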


For any matrix $\mathbf{A}$, $\mathbf{AA}^\mathrm{T}$ is a symmetric positive semi-definite matrix, so its singular value is its eigenvalue according to \cref{lemma:1}.  \cref{lemma:2} shows that $\sigma_i(\mathbf{AA}^\mathrm{T})-\alpha$ is the eigenvalue of $\mathbf{AA}^\mathrm{T}\!-\!\alpha\mathbf{I}$. And, $\vert\sigma_i(\mathbf{AA}^\mathrm{T})\!-\!\alpha\vert$ is the singular value of $\mathbf{AA}^\mathrm{T}\!-\!\alpha\mathbf{I}$ according to \cref{lemma:1}.
In Alg. 1, the decay trend of the 
largest $l$ singular values of handled matrix affects the accuracy of resulted SVD. When  $\sigma_i(\mathbf{AA}^\mathrm{T})-\alpha >0$ for any $i \le l$, and they are the $l$ largest singular values of $\mathbf{AA}^\mathrm{T}-\alpha\mathbf{I}$, they obviously exhibit faster decay. The following Proposition states when these conditions are satisfied.

\begin{proposition}
	\label{proposition:1}
	Suppose $0<\alpha \!\le \! \sigma_l(\mathbf{A}\mathbf{A}^\mathrm{T})/2$ and $i\le l$. Then, $\sigma_i(\mathbf{AA}^\mathrm{T}-\alpha \mathbf{I}) = \sigma_i(\mathbf{AA}^\mathrm{T})-\alpha$. Moreover, if $\sigma_i(\mathbf{AA}^\mathrm{T}-\alpha \mathbf{I})\neq \sigma_{l+1}(\mathbf{AA}^\mathrm{T}-\alpha \mathbf{I})$, the left singular vector corresponding to the $i$-th largest singular value of $\mathbf{A}\mathbf{A}^\mathrm{T}\!-\!\alpha\mathbf{I}$ 
	is
	the left singular vector corresponding to the $i$-th largest singular value of $\mathbf{A}\mathbf{A}^\mathrm{T}$, and vice versa.
\end{proposition}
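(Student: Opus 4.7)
The plan is to translate everything about $\mathbf{AA}^\mathrm{T}-\alpha\mathbf{I}$ back to statements about the eigenvalues and eigenvectors of $\mathbf{AA}^\mathrm{T}$ using the two prior lemmas, and then carefully control the absolute values introduced by \cref{lemma:1}. Throughout, write $\sigma_j:=\sigma_j(\mathbf{AA}^\mathrm{T})$, and note that since $\mathbf{AA}^\mathrm{T}$ is symmetric positive semidefinite, its eigenvalues are exactly $\sigma_1\ge\sigma_2\ge\cdots\ge 0$, and $\mathbf{AA}^\mathrm{T}-\alpha\mathbf{I}$ is also symmetric.

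First I would establish the singular-value equality. By \cref{lemma:2}, the eigenvalues of $\mathbf{AA}^\mathrm{T}-\alpha\mathbf{I}$ are $\sigma_j-\alpha$, and by \cref{lemma:1} its singular values are therefore the multiset $\{|\sigma_j-\alpha|\}_j$. The hypothesis $0<\alpha\le\sigma_l/2$ gives two easy inequalities: for $j\le l$, $\sigma_j-\alpha\ge\sigma_l-\alpha\ge\alpha>0$, so $|\sigma_j-\alpha|=\sigma_j-\alpha$; and for $j>l$, $\sigma_j\in[0,\sigma_l]$ yields $|\sigma_j-\alpha|\le\max(\alpha,\sigma_l-\alpha)=\sigma_l-\alpha$, using $\alpha\le\sigma_l-\alpha$. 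Hence every element of $\{|\sigma_j-\alpha|:j>l\}$ is at most $\sigma_l-\alpha$, while $\{\sigma_j-\alpha:j\le l\}$ is a decreasing list bounded below by $\sigma_l-\alpha$. Sorting the combined multiset therefore puts $\sigma_1-\alpha,\ldots,\sigma_l-\alpha$ in the first $l$ positions in order, which gives $\sigma_i(\mathbf{AA}^\mathrm{T}-\alpha\mathbf{I})=\sigma_i-\alpha$ for every $i\le l$.

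Next I would address the singular-vector identification. By \cref{lemma:2}, the eigenspace of $\mathbf{AA}^\mathrm{T}$ for eigenvalue $\sigma_i$ coincides with the eigenspace of $\mathbf{AA}^\mathrm{T}-\alpha\mathbf{I}$ for eigenvalue $\sigma_i-\alpha$, so by \cref{lemma:1} the left singular vectors of $\mathbf{AA}^\mathrm{T}-\alpha\mathbf{I}$ associated with the value $\sigma_i-\alpha$ are precisely the left singular vectors of $\mathbf{AA}^\mathrm{T}$ associated with $\sigma_i$. The subtlety is that, a priori, the singular value $\sigma_i(\mathbf{AA}^\mathrm{T}-\alpha\mathbf{I})=\sigma_i-\alpha$ could also arise as $|\sigma_j-\alpha|$ for some $j>l$ with $\sigma_j<\alpha$ (i.e.\ from a ``flipped'' negative eigenvalue), in which case its singular-vector eigenspace would strictly contain the $\mathbf{AA}^\mathrm{T}$-eigenspace and the identification would fail in the ``vice versa'' direction. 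This is exactly what the extra assumption $\sigma_i(\mathbf{AA}^\mathrm{T}-\alpha\mathbf{I})\neq\sigma_{l+1}(\mathbf{AA}^\mathrm{T}-\alpha\mathbf{I})$ rules out: from step one, any other contribution to the value $\sigma_i-\alpha$ from indices $j>l$ would force $\sigma_i-\alpha\le\sigma_l-\alpha=\sigma_{l+1}(\mathbf{AA}^\mathrm{T}-\alpha\mathbf{I})$, contradicting the hypothesis. Hence the two eigenspaces coincide, and so do the corresponding left-singular-vector sets.

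The main obstacle I expect is precisely this last point: being careful that \cref{lemma:1} introduces absolute values, so equal singular-value magnitudes can come from eigenvalues of opposite sign and thus from different eigenspaces. Once the role of the gap condition $\sigma_i(\mathbf{AA}^\mathrm{T}-\alpha\mathbf{I})\neq\sigma_{l+1}(\mathbf{AA}^\mathrm{T}-\alpha\mathbf{I})$ is isolated as the condition that forbids any such ``sign-flipped'' contribution, the bijection between singular vectors in both directions follows immediately from \cref{lemma:1,lemma:2}.
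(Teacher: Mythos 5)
Your proposal is correct and follows essentially the same route as the paper's proof: both use \cref{lemma:1,lemma:2} to reduce to eigenvalues, split into the cases $j\le l$ (where $|\sigma_j-\alpha|=\sigma_j-\alpha\ge\sigma_l-\alpha$) and $j>l$ (where $|\sigma_j-\alpha|\le\sigma_l-\alpha$), and then invoke the gap condition $\sigma_i(\mathbf{AA}^\mathrm{T}-\alpha\mathbf{I})\neq\sigma_{l+1}(\mathbf{AA}^\mathrm{T}-\alpha\mathbf{I})$ to rule out the value $\sigma_i-\alpha$ arising from an index beyond $l$. Your explanation of why that gap condition is needed (excluding ``sign-flipped'' contributions $|\sigma_j-\alpha|$ with $\sigma_j<\alpha$) is in fact somewhat more explicit than the paper's own wording.
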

\begin{proof}
When $0<\alpha \!\le\! \sigma_l(\mathbf{A}\mathbf{A}^\mathrm{T})/2$, we can derive $2\alpha\le\sigma_l(\mathbf{AA}^\mathrm{T})$ and $0<\alpha\le\sigma_l(\mathbf{AA}^\mathrm{T})-\alpha$. Therefore, $\alpha-\sigma_i(\mathbf{AA}^\mathrm{T})\le\alpha\le\sigma_l(\mathbf{AA}^\mathrm{T})-\alpha$.
	For any $i>l$
	\begin{equation}
		\label{p1:3}		\sigma_i(\mathbf{AA}^\mathrm{T})-\alpha\le\sigma_l(\mathbf{AA}^\mathrm{T})-\alpha.
	\end{equation}
	Then, we can derive
	\begin{equation}
		\label{p1:4}
		\vert\sigma_i(\mathbf{AA}^\mathrm{T})-\alpha\vert\le\sigma_l(\mathbf{AA}^\mathrm{T})-\alpha~,~\textrm{for any}~i>l.
	\end{equation}
	Besides, for any $i\le l$ we can derive
\begin{equation}
		\label{p1:5}
		\vert\sigma_i(\mathbf{AA}^\mathrm{T})-\alpha\vert = \sigma_i(\mathbf{AA}^\mathrm{T})-\alpha\ge\sigma_l(\mathbf{AA}^\mathrm{T})-\alpha>0, ~\textrm{for any}~i \le l.
\end{equation}
Notice that $\sigma_i(\mathbf{AA}^\mathrm{T})$ is also an eigenvalue of $\mathbf{AA}^\mathrm{T}$, and $\vert\sigma_i(\mathbf{AA}^\mathrm{T})-\alpha\vert$ is a singular value of $\mathbf{AA}^\mathrm{T}-\alpha \mathbf{I}$ according to Lemma 2 and 3. So, combining (\ref{p1:4}) and (\ref{p1:5}) we can see that
 $\vert\sigma_i(\mathbf{AA}^\mathrm{T})-\alpha\vert$, $1\le i\le l$, are the $l$ largest singular values of $\mathbf{AA}^\mathrm{T}-\alpha\mathbf{I}$, and 
	\begin{equation}
		\label{p1:6}
		\sigma_i(\mathbf{AA}^\mathrm{T}-\alpha\mathbf{I})=\sigma_i(\mathbf{AA}^\mathrm{T})-\alpha~,~\textrm{for any}~i\le l.
	\end{equation}
	\cref{lemma:1} shows that, for any $i\le l$ the left singular vector corresponding to $\sigma_i(\mathbf{AA}^\mathrm{T})$ is the same as the normalized eigenvector for eigenvalue $\sigma_i(\mathbf{AA}^\mathrm{T})$ of $\mathbf{AA}^\mathrm{T}$, and the left singular vector corresponding to $\sigma_i(\mathbf{AA}^\mathrm{T})-\alpha$ is the same as the normalized eigenvector for eigenvalue $\sigma_i(\mathbf{AA}^\mathrm{T})-\alpha$ of $\mathbf{AA}^\mathrm{T}-\alpha\mathbf{I}$. 
Notice that $\sigma_i(\mathbf{AA}^\mathrm{T}-\alpha \mathbf{I})\neq \sigma_{l+1}(\mathbf{AA}^\mathrm{T}-\alpha \mathbf{I})$ ensures that $\sigma_i(\mathbf{AA}^\mathrm{T}-\alpha \mathbf{I})=\sigma_i(\mathbf{AA}^\mathrm{T})-\alpha$ is a value only existing among the largest $l$ singular values  $\sigma_1(\mathbf{AA}^\mathrm{T}-\alpha \mathbf{I}), \cdots, \sigma_l(\mathbf{AA}^\mathrm{T}-\alpha \mathbf{I})$.	
	According to \cref{lemma:2}, the eigenspace of $\mathbf{AA}^\mathrm{T}$ for $\sigma_i(\mathbf{AA}^\mathrm{T})$ is the same as the eigenspace of $\mathbf{AA}^\mathrm{T}-\alpha\mathbf{I}$ for $\sigma_i(\mathbf{AA}^\mathrm{T})-\alpha$ for any $i\le l$.
 Therefore, combining (\ref{p1:6})
 yields that for any $i\le l$, the left singular vector corresponding to the $i$-th largest singular value $\sigma_i(\mathbf{AA}^\mathrm{T})$ of $\mathbf{AA}^\mathrm{T}$ 
 is the left singular vector corresponding to the $i$-th largest singular value $\sigma_i(\mathbf{AA}^\mathrm{T})-\alpha$ of $\mathbf{AA}^\mathrm{T}-\alpha\mathbf{I}$ , and vice versa. 
\end{proof}

\cref{proposition:1} shows that, if we choose a shift  $0<\alpha \!\le\! \sigma_l(\mathbf{A}\mathbf{A}^\mathrm{T})/2$, we can change the computation $\mathbf{Q}=\mathbf{AA}^\mathrm{T}\mathbf{Q}$ to $\mathbf{Q}=(\mathbf{AA}^\mathrm{T}-\alpha\mathbf{I})\mathbf{Q}$ in the power iteration, with the same approximated dominant subspace. We called this \emph{shifted power iteration}. For each step of \atn{the} shifted power iteration, this makes $\mathbf{Q}$ approximate the basis of dominant subspace of $range(\mathbf{A})$ to a larger extent than executing an original power iteration step, because the singular values of $\mathbf{AA}^\mathrm{T}\!-\!\alpha\mathbf{I}$ decay faster than those of $\mathbf{AA}^\mathrm{T}$.
Therefore, the shifted power iteration can improve the accuracy of the randomized SVD algorithm with same power parameter $p$. Then, how to set the shift $\alpha$ properly is the remaining problem.  

Consider the change of ratio of singular values from $\frac{\sigma_i(\mathbf{AA}^\mathrm{T})}{ \sigma_j(\mathbf{AA}^\mathrm{T})}$ to $\frac{\sigma_i(\mathbf{AA}^\mathrm{T}\!-\!\alpha\mathbf{I}) }{ \sigma_j(\mathbf{AA}^\mathrm{T}\!-\!\alpha\mathbf{I})}$, for $j<i\le l$ and $\sigma_j(\mathbf{AA}^\mathrm{T})>\sigma_i(\mathbf{AA}^\mathrm{T})$. It is easy to see $\frac{\sigma_i(\mathbf{AA}^\mathrm{T}\!-\!\alpha\mathbf{I}) }{ \sigma_j(\mathbf{AA}^\mathrm{T}\!-\!\alpha\mathbf{I})}\!<\!\frac{\sigma_i(\mathbf{AA}^\mathrm{T})}{ \sigma_j(\mathbf{AA}^\mathrm{T})}$
if the assumption of $\alpha$ in \cref{proposition:1} holds. 
And, the larger value of $\alpha$, the smaller the ratio $\frac{\sigma_i(\mathbf{AA}^\mathrm{T}\!-\!\alpha\mathbf{I}) }{ \sigma_j(\mathbf{AA}^\mathrm{T}\!-\!\alpha\mathbf{I})}$, reflecting faster decay of singular \atn{values}.  Therefore, to maximize the effect of \atn{the} shifted power iteration on improving the accuracy, we should choose the shift $\alpha$ as large as possible  while satisfying $\alpha\le\sigma_l(\mathbf{AA}^\mathrm{T})/2$. Notice that  calculating $\sigma_l(\mathbf{A}\mathbf{A}^\mathrm{T})$ directly is very difficult. Our idea is to use the singular value of $\mathbf{A}\mathbf{A}^\mathrm{T}\mathbf{Q}$ 
in the power iteration to approximate $\sigma_l(\mathbf{A}\mathbf{A}^\mathrm{T})$ and set the shift $\alpha$. Suppose $\mathbf{Q}\in\mathbb{R}^{m\times l}$ is the orthonormal matrix in \atn{the} power iteration of Alg.~1.
According to \cref{proposition:2},
\begin{equation}
	\label{proposition2:2}
	\sigma_{i}(\mathbf{A}\mathbf{A}^\mathrm{T}\mathbf{Q}) \le \sigma_i(\mathbf{A}\mathbf{A}^\mathrm{T})~,~\textrm{for any}~i\le l,
\end{equation}
which means that we can set $\alpha=\sigma_l(\mathbf{A}\mathbf{A}^\mathrm{T}\mathbf{Q})/2$ to guarantee the requirement of $\alpha$ in \cref{proposition:1} for performing the shifted power iteration. In order to do the orthonormalization for alleviating round-off error and calculate $\sigma_l(\mathbf{A}\mathbf{A}^\mathrm{T}\mathbf{Q})$, we implement ``orth($\cdot$)'' with the economic SVD. This has similar computational cost as using QR factorization, and the resulted matrix of left singular vectors includes the orthonormal basis of same subspace.

So far, we can compute the value of $\alpha$ at the first step of \atn{the} power iteration, and then we perform $\mathbf{Q}\!=\!(\mathbf{AA}^\mathrm{T}\!-\!\alpha\mathbf{I})\mathbf{Q}$ in the following iteration steps. Notice computing the singular values of $\!(\mathbf{AA}^\mathrm{T}\!-\!\alpha\mathbf{I})\mathbf{Q}$ is convenient. Then, according \atn{to \cref{proposition:1,proposition:2}}, we can derive with $0<\alpha\le\sigma_l(\mathbf{AA}^\mathrm{T})/2$
\begin{equation}
		\label{p3:1}
		\begin{aligned}
			\sigma_i((\mathbf{A}\mathbf{A}^\mathrm{T}\!-\alpha\mathbf{I})\mathbf{Q}) + \alpha \le  \sigma_i(\mathbf{A}\mathbf{A}^\mathrm{T}\!-\alpha\mathbf{I}) + \alpha
			= \sigma_i(\mathbf{A}\mathbf{A}^\mathrm{T})~,~\textrm{for any}~i\le l, 
		\end{aligned}
\end{equation}
which states how to use the singular values of $\!(\mathbf{AA}^\mathrm{T}\!-\!\alpha\mathbf{I})\mathbf{Q}$ to approximate $\sigma_i(\mathbf{A}\mathbf{A}^\mathrm{T})$. Therefore, in each iteration step we can obtain a valid value of shift and update $\alpha$ with it if we have a larger $\alpha$, for faster decay of singular values.

\begin{algorithm}[!b]
	\caption{Randomized SVD with dynamic shifts}
	\label{alg2}
	\begin{algorithmic}[1]
		\REQUIRE $\mathbf{A}\in\mathbb{R}^{m\times n}$, parameters $k$, $s$, $p$
		\ENSURE $\mathbf{U}\in\mathbb{R}^{m\times k}$, $\mathbf{S}\in\mathbb{R}^{k\times k}$, $\mathbf{V}\in\mathbb{R}^{n\times k}$
		\STATE $l=k+s$, $\mathbf{\Omega} = \mathrm{randn}(n, l)$
		\STATE $\mathbf{Q} = \mathrm{orth}(\mathbf{A}\mathbf{\Omega})$,  $\alpha=0$
		\FOR {$j=1, 2, \cdots, p$}
		\STATE  $[\mathbf{Q}, \mathbf{\hat{S}}, \sim] = \mathrm{svd}(\mathbf{A}\mathbf{A}^\mathrm{T}\mathbf{Q}-\alpha\mathbf{Q}, \mathrm{'econ'})$
		\STATE  \textbf{if} $\mathbf{\hat{S}}(l,l) > \alpha$ \textbf{then} $\alpha = \frac{\mathbf{\hat{S}}(l,l)+\alpha}{2}$
		\ENDFOR
		\STATE  $\mathbf{B}  = \mathbf{Q}^{\mathrm{T}}\mathbf{A}$
		\STATE  $[\mathbf{U}, \mathbf{S}, \mathbf{V}] = \mathrm{svd}(\mathbf{B}, \mathrm{'econ'})$
		\STATE  $\mathbf{U} = \mathbf{Q}\mathbf{U}(:, 1\!:\!k), \mathbf{S} \!=\!\mathbf{S}(1\!:\!k, 1\!:\!k), \mathbf{V}\! =\! \mathbf{V}(:, 1\!:\!k)$
	\end{algorithmic}
\end{algorithm}

Combining the shifted power iteration with the dynamic updating scheme of $\alpha$, we derive a randomized SVD algorithm based on \atn{the} dynamically shifted power iteration as Algorithm~2.
In Step 5, it checks if $\frac{\sigma_l((\mathbf{A}\mathbf{A}^\mathrm{T}\!-\alpha\mathbf{I})\mathbf{Q}) + \alpha}{2}$ is larger than $\alpha$.
For same setting of power parameter $p$, Alg. 2 has similar computational time as Alg. 1 (with just QR for a tall-and-skinny matrix
replaced with SVD), but produces results with much better accuracy. 
For matrix $\mathbf{Q}$ computed with Alg. 2, we derive a bound of $\Vert\mathbf{Q}\mathbf{Q}^\mathrm{T}\mathbf{A}-\mathbf{A}\Vert$, which reflects how close the computed truncated SVD is to optimal.
\begin{theorem}
	\label{theorem:1}
	Suppose $\mathbf{A}\in\mathbb{R}^{m\times n}~(m\ge n)$, and $k$, $s$ and $p$ are the parameters in Alg.~2. $l=k+s\le n-k$, and $\mathbf{Q}$ in size $m\times l$ is the obtained orthonormal matrix with Alg.~2. If positive integer $j<k$, and real numbers $\beta$, $\gamma>1$ satisfying $\phi\!=\!$ $\frac{1}{\sqrt{2\pi(l\!-\!j\!+\!1)}}(\frac{e}{(l\!-\!j\!+\!1)\beta})^{l\!-\!j\!+\!1} \!+\!\frac{1}{4\gamma(\gamma^2\!-\!1)} (\frac{1}{\sqrt{\pi(n\!-\!k)}}(\frac{2\gamma^2}{e^{\gamma^2\!-\!1}})^{n\!-\!k}\! +\!\frac{1}{\sqrt{\pi l}}(\frac{2\gamma^2}{e^{\gamma^2\!-\!1}})^{l}) \!\le \! 1$, then
	\begin{equation}
		\label{theorem1:1}
		\begin{aligned}
			\Vert\mathbf{Q}\mathbf{Q}^\mathrm{T}\mathbf{A}\!-\!\mathbf{A}\Vert \le & 2\sqrt{\left(2l^2\beta^2\gamma^2\prod_{c=1}^{p}\left(\frac{\sigma_{j+1}^2\!-\!\alpha_c}{\sigma_j^2-\alpha_c}\right)^{2}\!+\!1\right)
			\sigma_{j+1}^2 +\left(2l(n\!-\!k)\beta^2\gamma^2\prod_{c=1}^{p}\left(\frac{\sigma_{k+1}^2\!-\!\alpha_c}{\sigma_{j}^2-\alpha_c}\right)^{2}\!+\!1\right)\sigma_{k+1}^2},
		\end{aligned}
	\end{equation}
	with probability not less than $1-\phi$, where $\alpha_c$ denotes the shift value in the $c$-th shifted power iteration in Alg.~2 and $\sigma_j$ denotes the $j$-th \atn{largest} singular value of $\mathbf{A}$.
\end{theorem}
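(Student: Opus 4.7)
The plan is to adapt the Halko--Martinsson--Tropp (HMT) deterministic-plus-probabilistic analysis of randomized SVD with power iteration to the shifted setting. Because each intermediate orthonormalization in Step~4 of Alg.~\ref{alg2} preserves column spans,
\[
\operatorname{range}(\mathbf{Q}) \;=\; \operatorname{range}\!\bigl(f(\mathbf{AA}^\mathrm{T})\mathbf{A}\mathbf{\Omega}\bigr), \qquad f(x) \;=\; \prod_{c=1}^{p}(x-\alpha_c),
\]
so I identify $\mathbf{QQ}^\mathrm{T}$ with the orthogonal projector $\mathbf{P}_Y$ onto $\operatorname{range}(\mathbf{Y})$ for $\mathbf{Y}=f(\mathbf{AA}^\mathrm{T})\mathbf{A}\mathbf{\Omega}$, and treat $f(\mathbf{AA}^\mathrm{T})$ as the shifted analogue of $(\mathbf{AA}^\mathrm{T})^p$. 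Before invoking the analysis I verify inductively that every $\alpha_c\in[0,\sigma_l(\mathbf{AA}^\mathrm{T})/2]$: Step~5 updates $\alpha$ only when $\hat{\mathbf{S}}(l,l)>\alpha$, and \cref{proposition:2} applied to $(\mathbf{AA}^\mathrm{T}-\alpha\mathbf{I})\mathbf{Q}$ combined with \cref{proposition:1} gives $\hat{\mathbf{S}}(l,l)+\alpha\le\sigma_l(\mathbf{AA}^\mathrm{T})$, so the updated average never leaves the admissible interval. Under this admissibility \cref{proposition:1} makes the top $l$ left singular vectors of $f(\mathbf{AA}^\mathrm{T})\mathbf{A}$ coincide with those of $\mathbf{A}$ and keeps $\prod_c(\sigma_i^2-\alpha_c)>0$ for $i\le l$, so every ratio in the stated bound is well-defined and lies in $(0,1)$.

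Next I would partition the SVD of $\mathbf{A}$ into three blocks with index ranges $[1{:}j]$, $[j{+}1{:}k]$, $[k{+}1{:}n]$, giving $\mathbf{U}_i,\mathbf{\Sigma}_i,\mathbf{V}_i$, and define the mutually independent standard Gaussian blocks $\mathbf{\Omega}_i=\mathbf{V}_i^\mathrm{T}\mathbf{\Omega}$ via rotational invariance. Applying the HMT deterministic projection lemma (their Theorem~9.1 / Proposition~8.6) to $\mathbf{Y}$, with the "good" block taken to be the top $j$ directions and $f(\mathbf{\Sigma}_1^2)\mathbf{\Sigma}_1$ replacing $\mathbf{\Sigma}_1$, produces a bound of the form
\[
\bigl\|(\mathbf{I}-\mathbf{P}_Y)\mathbf{A}\bigr\|^2 \;\le\; \|\mathbf{\Sigma}_2\|^2 + \|\mathbf{\Sigma}_3\|^2 + \bigl\|\mathbf{\Sigma}_2 f(\mathbf{\Sigma}_2^2)\mathbf{\Omega}_2\mathbf{\Omega}_1^{\dagger}f(\mathbf{\Sigma}_1^2)^{-1}\bigr\|^2 + \bigl\|\mathbf{\Sigma}_3 f(\mathbf{\Sigma}_3^2)\mathbf{\Omega}_3\mathbf{\Omega}_1^{\dagger}f(\mathbf{\Sigma}_1^2)^{-1}\bigr\|^2,
\]
where the amplifying $f(\mathbf{\Sigma}_i^2)$ factors come from the shifted power iteration and $f(\mathbf{\Sigma}_1^2)^{-1}$ from the pseudoinverse of the good block. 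Submultiplicativity combined with the monotone bounds $\sigma_i\le\sigma_{j+1}$ on block~2, $\sigma_i\le\sigma_{k+1}$ on block~3, and $(\sigma_i^2-\alpha_c)/(\sigma_j^2-\alpha_c)\le(\sigma_{j+1}^2-\alpha_c)/(\sigma_j^2-\alpha_c)$ (resp.\ the $\sigma_{k+1}$ version) collapses the last two operator norms into the scalar-ratio products appearing in (\ref{theorem1:1}).

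Finally I would control the random factors by standard Gaussian tail bounds: $\|\mathbf{\Omega}_1^{\dagger}\|\le\beta$ fails with the Wishart-type probability $\tfrac{1}{\sqrt{2\pi(l-j+1)}}\bigl(\tfrac{e}{(l-j+1)\beta}\bigr)^{l-j+1}$, while spectral/Frobenius-type bounds on $\mathbf{\Omega}_2,\mathbf{\Omega}_3$ yield multiplicative factors proportional to $\gamma^2 l^2$ and $\gamma^2 l(n-k)$ respectively, with failure probabilities contributing the $\tfrac{1}{4\gamma(\gamma^2-1)}(\cdots)$ piece of $\phi$. A union bound gives total failure probability at most $\phi$; substituting into the deterministic inequality and applying $\sqrt{a+b}\le\sqrt{2a}+\sqrt{2b}$ (which is the origin of the outer factor $2$) yields (\ref{theorem1:1}). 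The main obstacle is that the shifts $\alpha_c$ are themselves random, being measurable with respect to $\mathbf{\Omega}$ through the intermediate iterates; but the deterministic bound depends on them only through the admissible non-negative ratios, so the probabilistic step can be executed after conditioning on the realized sequence $(\alpha_1,\ldots,\alpha_p)$, acting on the Gaussian blocks $\mathbf{\Omega}_i$ exactly as in the classical HMT argument.
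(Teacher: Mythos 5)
Your proposal reaches the right bound by a route that is equivalent in substance to the paper's but packaged differently. The paper follows Rokhlin--Szlam--Tygert: it invokes the deterministic perturbation bound of \cref{lemma:4} with an explicitly constructed interpolation matrix $\mathbf{F}=\mathbf{VP}$ (whose top block is $\prod_c(\mathbf{S}_1^2-\alpha_{i-c+1}\mathbf{I})^{-1}\mathbf{H}^\dagger$) and a matrix $\mathbf{R}$ that kills the second term because $\mathrm{range}(\mathbf{Q})$ contains $\mathrm{range}\bigl(\prod_c(\mathbf{AA}^\mathrm{T}-\alpha_c\mathbf{I})\mathbf{A\Omega}\bigr)$; the shifted analogue of RST's Lemma A.2 is then proved as \cref{lemma:7}. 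You instead phrase everything through the HMT projector bound $\Vert(\mathbf{I}-\mathbf{P}_Y)\mathbf{A}\Vert$ with the filter $f(x)=\prod_c(x-\alpha_c)$. The three-block SVD split, the pseudoinverse of the ``good'' Gaussian block, the tail bounds yielding the $\sqrt{l}\beta$ and $\sqrt{2l}\gamma$, $\sqrt{2(n-k)}\gamma$ factors, and the collapse of the operator norms into the ratio products $\prod_c\bigl(\tfrac{\sigma_{j+1}^2-\alpha_c}{\sigma_j^2-\alpha_c}\bigr)^2$ are identical in both arguments, so the two frameworks are interchangeable here; your version even yields the slightly stronger inequality without the outer factor of $2$ (which in the paper comes from \cref{lemma:4}, not from $\sqrt{a+b}\le\sqrt{2a}+\sqrt{2b}$ as you suggest). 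Your explicit inductive verification that every $\alpha_c\le\sigma_l(\mathbf{AA}^\mathrm{T})/2$ via \cref{proposition:1,proposition:2} is a genuine improvement: the paper assumes this admissibility in \cref{lemma:7} without re-deriving it inside the proof of \cref{theorem:1}.

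One technical point needs repair. You correctly observe that the shifts $\alpha_c$ are random (measurable with respect to $\mathbf{\Omega}$), but your fix --- ``conditioning on the realized sequence $(\alpha_1,\dots,\alpha_p)$ and acting on the Gaussian blocks exactly as in the classical argument'' --- does not work as stated, because conditioning on the $\alpha_c$ destroys the i.i.d.\ Gaussian law of $\mathbf{\Omega}_1,\mathbf{\Omega}_2,\mathbf{\Omega}_3$ on which the tail bounds rely. The correct resolution is simpler: the high-probability event is $\{\Vert\mathbf{\Omega}_1^\dagger\Vert\le\sqrt{l}\beta\}\cap\{\Vert\mathbf{\Omega}_2\Vert\le\sqrt{2l}\gamma\}\cap\{\Vert\mathbf{\Omega}_3\Vert\le\sqrt{2(n-k)}\gamma\}$, which is defined without reference to the shifts, while the admissibility $0\le\alpha_c\le\sigma_l(\mathbf{AA}^\mathrm{T})/2$ holds surely by construction of Alg.~2; on that event the deterministic chain of inequalities therefore holds pathwise for whatever admissible values the $\alpha_c$ realize. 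With that adjustment your argument is complete.
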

\begin{remark}
	$\phi$ can be a small value, and the bound (\ref{theorem1:1}) is smaller than that derived in \cite{rokhlin2010randomized} for the $\mathbf{Q}$ computed with the basic randomized SVD algorithm (see Appendix~\ref{secA1}).
\end{remark}

Alg. 2 can collaborate with more skills to achieve further acceleration while handling sparse matrices. Firstly, we can use EVD to compute the economic SVD or the orthonormal basis of a tall-and-skinny matrix in Step 2, 4 and 8. This resorts to the eigSVD algorithm in \cite{pmlr-v95-feng18a}, which is described in Alg.~3. \atnn{The algorithm} is similar to that in \cite{pmlr-v95-feng18a}, but outputs singular values in the correct order. ``eig($\cdot$)'' computes the eigenvalue decomposition. The ``rot90($\mathbf{S}$, 2)'' in Step 4 is the function to rotate $\mathbf{S}$ 180 degrees, and ``$\mathrm{fliplr}(\cdot)$'' in Step 4 is the function to flip array or matrix \atnn{left} to right. Therefore, the singular values are in descending order, while the singular vectors corresponding to them are \atnn{at} correct positions. Secondly, for $\mathbf{A}\in\mathbb{R}^{m\times n}$ with $m>n$, it is more efficient to computing SVD of $\mathbf{A}^\mathrm{T}$ with Alg. 2. Therefore, we can derive two efficient versions of \atnn{randomized SVD} algorithm, for the cases with $m\ge n$ and $n\ge m$ respectively. The \atnn{version} for the matrices with $m\ge n$ is described in Algorithm 4, and all the previous theoretic analysis can be applied\atnn{, if we replace} $\mathbf{A}$ with $\mathbf{A}^\mathrm{T}$.

\begin{algorithm}[h]
	\caption{eigSVD}
	\label{alg5}
	\begin{algorithmic}[1]
		\REQUIRE a dense matrix $\mathbf{C}\in\mathbb{R}^{m\times n}$ ($m\ge n$)
		\ENSURE $\mathbf{U}\in\mathbb{R}^{m\times n}$, ~ $\mathbf{S}\in\mathbb{R}^{n\times n}$, ~ $\mathbf{V}\in\mathbb{R}^{n\times n}$
		\STATE \atn{$[\mathbf{V}, \mathbf{D}] = \mathrm{eig}(\mathbf{C}^{\mathrm{T}}\mathbf{C})$}
		\STATE $\mathbf{S} = \mathrm{sqrt}(\mathbf{D})$
		\STATE $\mathbf{U} = \mathbf{C}\mathbf{V}\mathbf{S}^{-1}$
		\STATE $\mathbf{U} = \mathrm{fliplr}(\mathbf{U})$, ~ $\mathbf{V} = \mathrm{fliplr}(\mathbf{V})$, ~ $\mathbf{S} = \mathrm{rot90}(\mathbf{S},2)$
	\end{algorithmic}
\end{algorithm}

\begin{algorithm}[h]
		\caption{Dynamic shifts based randomized SVD for matrices with $m\ge n$}
		\label{alg3}
		\begin{algorithmic}[1]
				\REQUIRE $\mathbf{A}\in\mathbb{R}^{m\times n}$, parameters $k$, $s$, $p$
				\ENSURE $\mathbf{U}\in\mathbb{R}^{m\times k}$, $\mathbf{S}\in\mathbb{R}^{k\times k}$, $\mathbf{V}\in\mathbb{R}^{n\times k}$
				\STATE $l=k+s$, $\mathbf{\Omega} = \mathrm{randn}(m, l)$
				\STATE $[\mathbf{Q},\sim, \sim] = \mathrm{eigSVD}(\mathbf{A}^\mathrm{T}\mathbf{\Omega})$,  $\alpha=0$
				\FOR {$j=1, 2, \cdots, p$}
				\STATE $[\mathbf{Q}, \mathbf{\hat{S}}, \sim] = \mathrm{eigSVD}(\mathbf{A}^\mathrm{T}(\mathbf{A}\mathbf{Q})-\alpha\mathbf{Q})$
				\STATE \textbf{if} $\mathbf{\hat{S}}(l, l) > \alpha$ \textbf{then} $\alpha = \frac{\mathbf{\hat{S}}(l, l)+\alpha}{2}$
				\ENDFOR
				\STATE $\mathbf{B}  = \mathbf{AQ}$
				\STATE $[\mathbf{U}, \mathbf{S}, \mathbf{V}] = \mathrm{eigSVD}(\mathbf{B})$
				\STATE $\mathbf{U} = \mathbf{U}(:, 1\!:\!k), \mathbf{S} \!=\!\mathbf{S}(1\!:\!k, 1\!:\!k), \mathbf{V}\! =\! \mathbf{Q}\mathbf{V}(:, 1\!:\!k)$
			\end{algorithmic}
	\end{algorithm}

Suppose $p$ power iterations are executed to attain a certain accuracy\atnn{, and $C_{\mathrm{eig}}$ represents the constant in the flop count of EVD.} The flop count of Alg.~4 \atnn{is}
\begin{equation}
	\begin{aligned}
		\mathrm{FC}_{4}=(2p\!+\!2)C_{\mathrm{mul}}\textrm{nnz}(\mathbf{A})l\!+\!(p\!+\!1)(2C_{\mathrm{mul}}nl^2\!+\!C_{\mathrm{eig}}l^3)\!+2C_{\mathrm{mul}}ml^2\!+\!C_{\mathrm{eig}}l^3\!+\!pC_{\mathrm{mul}}nl\!+\!C_{\mathrm{mul}}nlk,
	\end{aligned}
\end{equation}
where $(2p\!+\!2)C_{\mathrm{mul}}\textrm{nnz}(\mathbf{A})l$ reflects \atnn{the} $2p\!+\!2$ times matrix-matrix multiplications on $\mathbf{A}$, $(p\!+\!1)(2C_{\mathrm{mul}}nl^2\!+\!C_{\mathrm{eig}}l^3)$ reflects \atnn{executing  eigSVD on an $n\times l$ matrix for $p\!+\!1$ times}, $2C_{\mathrm{mul}}ml^2\!+\!C_{\mathrm{eig}}l^3$ reflects once eigSVD on matrix $\mathbf{B}$, $pC_{\mathrm{mul}}nl$ reflects \atnn{executing the operation ``$-\alpha\mathbf{Q}$'' for $p$ times}, and $C_{\mathrm{mul}}nlk$ reflects the matrix-matrix multiplication to generate $\mathbf{V}$. From this, we see that the time complexity of \atnn{Alg. 4} is $O(pl\textrm{nnz}(\mathbf{A})+pnl^2)$ for $m\times n$ matrix $\mathbf{A}$ with $m\ge n$. The space complexity of \atnn{Alg. 4} is about $O((2m+n)l+\textrm{nnz}(\mathbf{A}))$, as the peak memory usage occurs at Step 8. 
Because $C_{qr}$ and $C_{svd}$ are multiple times larger than $C_{mul}$, we can clearly see that $\mathrm{FC}_4<\mathrm{FC}_1$ when $m\ge n$ and $n\gg l$, which reflects the efficiency of \atnn{Alg. 4}.

\subsection{Accuracy Control Based on PVE Bound}\label{sec3.2}
Theoretical research has revealed that the  randomized SVD with power iteration produces the rank-$k$ approximation close to optimal. Under spectral norm (or Frobenius norm) the computational \atnn{results} ($\hat{\mathbf{U}}$, $\hat{\mathbf{\Sigma}}$ and $\hat{\mathbf{V}}$) \atnn{satisfy} the following multiplicative guarantee with high probability:
\begin{equation}
	\label{err:relative}
	\Vert\mathbf{A}- \hat{\mathbf{U}}\hat{\mathbf{\Sigma}}\hat{\mathbf{V}}^\mathrm{T} \Vert \le (1+\epsilon ) \Vert\mathbf{A}- \mathbf{A}_k \Vert,
\end{equation}
\atn{where $\epsilon \in (0, 1)$ is a distortion parameter.}
Another guarantee proposed in \cite{musco2015}, which is more meaningful in machine learning problems, is:
\begin{equation}
	\label{err:pve}
	\forall i\le k, ~ ~ \vert\mathbf{u}_i^\mathrm{T}\mathbf{A}\mathbf{A}^\mathrm{T}\mathbf{u}_i-  \hat{\mathbf{u}}_i^\mathrm{T}\mathbf{A}\mathbf{A}^\mathrm{T}\hat{\mathbf{u}}_i\vert \le \epsilon \sigma_{k+1}(\mathbf{A})^2,
\end{equation}
where $\mathbf{u}_i$ is the $i$-th left singular vector of $\mathbf{A}$, and $\hat{\mathbf{u}}_i$ is the computed $i$-th left singular vector. This is called \emph{per vector error} (PVE)  bound for singular vectors.
In \cite{musco2015}, it is demonstrated that the $(1+\epsilon)$ error bound in (\ref{err:relative}) may not guarantee any accuracy in the computed singular vectors. 
In contrary, the per vector guarantee (\ref{err:pve}) requires each computed singular vector to capture nearly as much variance as the corresponding accurate singular vector, \atnn{thus guaranteeing} the accuracy of the principal components in PCA. 
\atn{Notice that the smaller $\epsilon$ in (17) and (18) is, the more power iterations in the randomized SVD algorithms are needed to meet the corresponding accuracy \atnn{guarantee}.}

In existing work on randomized SVD, the power parameter $p$ (or pass parameter) is assumed known in prior. 
\atnn{For} attaining certain accuracy, \atnn{multiple} examinations may be performed to find \atn{a} suitable \atnn{and not too large $p$. This brings a lot of computation cost.} 
Therefore, how to adaptively and efficiently determine \atnn{how many power iterations should be executed} to ensure certain accuracy of the \atnn{computed SVD} 
is of concern.

There are different criteria to evaluate the accuracy of SVD, like the relative residual in (\ref{err:svds}) and \atnn{that derived form the PVE bound}. Below we find out that if collaborating with the PVE \atnn{criterion} we can develop an efficient randomized SVD algorithm with accuracy control. 
The PVE \atnn{bound} in (\ref{err:pve}) \atnn{can be approximately expressed as}:
\begin{equation}
	\label{err:pve_approx}
	\forall i\le k, ~ ~ \frac{\vert\hat{\sigma_i}^{(j)}(\mathbf{A})^2-  \hat{\sigma_i}^{(j-1)}(\mathbf{A})^2\vert}{\hat{\sigma}_{k+1}^{(j)}(\mathbf{A})^2} \le \mathrm{tol} ,
\end{equation}
where $\hat{\sigma}_i(\mathbf{A})$ denotes the computed $i$-th singular value and superscript $(j)$ denotes the computed value with $p=j$. ``tol'' stands for an error tolerance. \atnn{This formula (\ref{err:pve_approx}) is an approximate PVE criterion for accuracy control, and ``tol'' is supposed to be specified by user.}
Setting ``tol'' suitably often guarantees the \atnn{PVE bound (\ref{err:pve}) with $\epsilon$ comparable to ``tol''}, as shown in our experiments. In contrast, $p$ gives
no insight on the error and varies largely for different cases.
The approximate criterion \atnn{in (\ref{err:pve_approx})} is based on estimating the error of $\hat{\sigma_i}^{(j-1)}(\mathbf{A})^2$ by regarding $\hat{\sigma_i}^{(j)}(\mathbf{A})^2$ as the accurate result.   
This skill is a traditional method to control the accuracy in computing truncated eigenvalue decomposition \citep{matrix2012}, and has been used in 
other numerical methods with iterative steps \citep{heath2018scientific}. Notice that \atnn{constructing an accuracy} criterion based on (\ref{err:svds}) or (\ref{err:relative}) \atnn{leads to difficulties in implementation}, 
\atnn{as it involves} expensive computations for large matrices. 

The above idea of PVE criterion derives a randomized SVD algorithm with accuracy control. A straightforward implementation moves the generation and SVD of matrix $\mathbf{B}=\mathbf{Q}^\mathrm{T}\mathbf{A}$ into the power iteration, and then imposes the check of termination criterion. 
Below we develop a more efficient way to realize the accuracy control of PVE criterion. 
\begin{proposition}
	\label{theorem:2}
	Suppose matrix $\mathbf{A}\!\in\!\mathbb{R}^{m\times n}$, orthonormal matrix $\mathbf{Q}\in\mathbb{R}^{{m\times l}}$ and positive value $\alpha$ are the quantities before executing Step 4 in Alg.~2. Then, for any $i\le l$\,,
	\begin{equation}
		\label{theorem2:1}
		\sigma_i(\mathbf{Q}^\mathrm{T}\mathbf{A})\le \sqrt{\sigma_i(\mathbf{AA}^\mathrm{T}\mathbf{Q}-\alpha\mathbf{Q})+\alpha}\le\sigma_i(\mathbf{A}).
	\end{equation}
\end{proposition}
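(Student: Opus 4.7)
The plan is to route both halves of the chain through the symmetric matrix $\mathbf{M}=\mathbf{AA}^\mathrm{T}-\alpha\mathbf{I}$ and its compression $\mathbf{N}:=\mathbf{Q}^\mathrm{T}\mathbf{M}\mathbf{Q}=\mathbf{Q}^\mathrm{T}\mathbf{AA}^\mathrm{T}\mathbf{Q}-\alpha\mathbf{I}_l$, where the second equality uses $\mathbf{Q}^\mathrm{T}\mathbf{Q}=\mathbf{I}_l$. A preliminary bookkeeping step is to confirm, by an induction on the power-iteration index combining Step~5 of Alg.~\ref{alg2} with the upper bound of this very proposition, that the shift carried by the algorithm satisfies $0<\alpha\le\sigma_l(\mathbf{AA}^\mathrm{T})/2$ whenever Step~4 is entered, so that \cref{proposition:1} is applicable to $\mathbf{M}$.

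For the right-hand inequality I would reproduce the chain already recorded in equation~(\ref{p3:1}) for the shifted operator. \cref{proposition:2}, with ``$\mathbf{A}$'' taken to be the $m\times m$ matrix $\mathbf{M}$, gives $\sigma_i(\mathbf{M}\mathbf{Q})\le\sigma_i(\mathbf{M})$ for $i\le l$; \cref{proposition:1} then evaluates $\sigma_i(\mathbf{M})=\sigma_i(\mathbf{AA}^\mathrm{T})-\alpha=\sigma_i(\mathbf{A})^2-\alpha$. Moving $\alpha$ to the left and taking the square root of the non-negative quantities produces $\sqrt{\sigma_i(\mathbf{AA}^\mathrm{T}\mathbf{Q}-\alpha\mathbf{Q})+\alpha}\le\sigma_i(\mathbf{A})$.

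The left-hand inequality is where the real work sits. First I would read off the spectrum of the symmetric matrix $\mathbf{N}$: using the standard identity $\sigma_i(\mathbf{Q}^\mathrm{T}\mathbf{A})^2=\lambda_i(\mathbf{Q}^\mathrm{T}\mathbf{AA}^\mathrm{T}\mathbf{Q})$, I obtain $\lambda_i(\mathbf{N})=\sigma_i(\mathbf{Q}^\mathrm{T}\mathbf{A})^2-\alpha$. Then I would chain two facts. The first is that for any real symmetric matrix $\mathbf{N}$ the singular values dominate the eigenvalues, $\sigma_i(\mathbf{N})\ge\lambda_i(\mathbf{N})$: indeed, $\{\sigma_i(\mathbf{N})\}$ is the decreasing rearrangement of $\{|\lambda_j(\mathbf{N})|\}$ and the pointwise bound $|\lambda_j|\ge\lambda_j$ passes to the $i$th order statistic. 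The second is that \cref{proposition:2}, applied to the transpose of the product, gives $\sigma_i(\mathbf{N})=\sigma_i(\mathbf{Q}^\mathrm{T}(\mathbf{M}\mathbf{Q}))\le\sigma_i(\mathbf{M}\mathbf{Q})=\sigma_i(\mathbf{AA}^\mathrm{T}\mathbf{Q}-\alpha\mathbf{Q})$. Combining them, $\sigma_i(\mathbf{Q}^\mathrm{T}\mathbf{A})^2-\alpha=\lambda_i(\mathbf{N})\le\sigma_i(\mathbf{N})\le\sigma_i(\mathbf{AA}^\mathrm{T}\mathbf{Q}-\alpha\mathbf{Q})$; moving $\alpha$ back and taking a square root finishes the left bound.

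The main obstacle I anticipate is the clean justification of $\sigma_i(\mathbf{N})\ge\lambda_i(\mathbf{N})$ for a possibly indefinite symmetric $\mathbf{N}$: one has to handle both the trivial case $\lambda_i(\mathbf{N})<0$ (where $\sigma_i\ge 0$ suffices on its own) and the non-negative case via the rearrangement argument above. A neater alternative I would consider is a Courant--Fischer argument coupled with the pointwise Cauchy--Schwarz bound $(\mathbf{x}^\mathrm{T}\mathbf{M}\mathbf{x})^2\le\mathbf{x}^\mathrm{T}\mathbf{M}^2\mathbf{x}$ applied to the $i$-dimensional subspace that maximises the Rayleigh quotient of $\mathbf{Q}^\mathrm{T}\mathbf{M}\mathbf{Q}$, which directly yields $\lambda_i(\mathbf{N})^2\le\lambda_i(\mathbf{Q}^\mathrm{T}\mathbf{M}^2\mathbf{Q})=\sigma_i(\mathbf{M}\mathbf{Q})^2$ whenever $\lambda_i(\mathbf{N})\ge 0$ and so reaches the same conclusion without case analysis on the sign of $\lambda_i$.
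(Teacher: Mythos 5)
Your proof is correct, and the right-hand inequality follows the paper's own route exactly (it is just the chain (\ref{p3:1}): \cref{proposition:2} applied to $\mathbf{M}=\mathbf{AA}^\mathrm{T}-\alpha\mathbf{I}$ followed by \cref{proposition:1}). The left-hand inequality, however, is argued by a genuinely different mechanism. The paper stays entirely with singular values: it writes $\sigma_i(\mathbf{Q}^\mathrm{T}\mathbf{A})^2=\sigma_i(\mathbf{Q}^\mathrm{T}\mathbf{A}\mathbf{A}^\mathrm{T}\mathbf{Q})\le\sigma_i(\mathbf{AA}^\mathrm{T}\mathbf{Q})$ using the multiplicative bound (\ref{lemma3:1}), and then controls the shift by the additive Weyl-type bound (\ref{lemma3:2}), $\sigma_i(\mathbf{AA}^\mathrm{T}\mathbf{Q})=\sigma_i\bigl((\mathbf{AA}^\mathrm{T}\mathbf{Q}-\alpha\mathbf{Q})+\alpha\mathbf{Q}\bigr)\le\sigma_i(\mathbf{AA}^\mathrm{T}\mathbf{Q}-\alpha\mathbf{Q})+\alpha$; it never forms the compressed matrix $\mathbf{N}=\mathbf{Q}^\mathrm{T}\mathbf{M}\mathbf{Q}$. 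You instead pass to the (possibly indefinite) symmetric compression $\mathbf{N}$, identify $\lambda_i(\mathbf{N})=\sigma_i(\mathbf{Q}^\mathrm{T}\mathbf{A})^2-\alpha$, and invoke $\lambda_i(\mathbf{N})\le\sigma_i(\mathbf{N})$ together with \cref{proposition:2}. That eigenvalue--singular-value domination is correct (your decreasing-rearrangement argument, or the Courant--Fischer variant, both work), but it is an extra tool the paper avoids; the paper's route via (\ref{lemma3:2}) is slightly more economical since \cref{lemma:3} is already in hand. On the other hand, your preliminary induction establishing $0<\alpha\le\sigma_l(\mathbf{AA}^\mathrm{T})/2$ at every entry to Step 4 is a genuine improvement in rigor: the proposition as stated only assumes $\alpha>0$, yet both your right-hand bound and the paper's (via (\ref{p3:1}) and \cref{proposition:1}) need this invariant, which the paper asserts only informally in Section~\ref{sec3.1}.
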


\begin{proof}
	For any $i\le l$, according to (\ref{lemma3:1}) in \cref{lemma:3}, 
	\begin{equation}\label{p4:1}
		\begin{aligned}
			\sigma_i(\mathbf{Q}^\mathrm{T}\mathbf{A})^2 =\sigma_i(\mathbf{Q}^\mathrm{T}\mathbf{A}\mathbf{A}^\mathrm{T}\mathbf{Q}) &\le \sigma_i(\mathbf{Q}^\mathrm{T}\mathbf{A}\mathbf{A}^\mathrm{T})\sigma_1(\mathbf{Q})=\sigma_i(\mathbf{AA}^\mathrm{T}\mathbf{Q}).\\
		\end{aligned}
	\end{equation}
	According to (\ref{lemma3:2}) in \cref{lemma:3}, we can derive
	\begin{equation}\label{p4:2}
		\begin{aligned}
			\sigma_i(\mathbf{A}\mathbf{A}^\mathrm{T}\mathbf{Q})&=\sigma_i(\mathbf{A}\mathbf{A}^\mathrm{T}\mathbf{Q}-\alpha\mathbf{Q}+\alpha\mathbf{Q})\le \sigma_i(\mathbf{AA}^\mathrm{T}\mathbf{Q}-\alpha\mathbf{Q}) + \alpha,
		\end{aligned}
	\end{equation}
	and according to (\ref{p3:1}) we derive
	\begin{equation}\label{p4:3}
		\begin{aligned}
			\sqrt{\sigma_i(\mathbf{AA}^\mathrm{T}\mathbf{Q}-\alpha\mathbf{Q}) + \alpha}
			\!\le\! \sqrt{\sigma_i(\mathbf{AA}^\mathrm{T})} \!=\! \sigma_i(\mathbf{A}).
		\end{aligned}  
	\end{equation}
	Therefore, combining (\ref{p4:1})$\sim$(\ref{p4:3}) yields \cref{theorem:2}. 
\end{proof}

According to \cref{theorem:2}, $\sqrt{\sigma_i(\mathbf{AA}^\mathrm{T}\mathbf{Q}-\alpha\mathbf{Q})+\alpha}$ is closer to $\sigma_i(\mathbf{A})$ than $\sigma_i(\mathbf{Q}^\mathrm{T}\mathbf{A})$ which is the computed singular value with the power iteration with one fewer step. Therefore, $\sigma_i(\mathbf{AA}^\mathrm{T}\mathbf{Q}-\alpha\mathbf{Q})+\alpha$ can be regarded as a convenient surrogate of $\hat{\sigma_i}(\mathbf{A})^2$, as the former is readily available after Step 4 of Alg.~2. Substituting it to (\ref{err:pve_approx}) we can derive \atnn{the dashSVD algorithm, whose version} for $m\ge n$ is described as Algorithm 5.
\begin{algorithm}[!b]
	\caption{The dashSVD with PVE accuracy control
	}
	\label{alg4}
	\begin{algorithmic}[1]
		\REQUIRE $\mathbf{A}\in\mathbb{R}^{m\times n}$ ($m\ge n$), parameters $k$, $s$, $p_{max}$, an error tolerance tol
		\ENSURE $\mathbf{U}\in\mathbb{R}^{m\times k}$, $\mathbf{S}\in\mathbb{R}^{k\times k}$, $\mathbf{V}\in\mathbb{R}^{n\times k}$
		\STATE $l=k+s$, ~ $\mathbf{\Omega} = \mathrm{randn}(m, l)$
		\STATE $[\mathbf{Q},\sim,\sim] \!=\!\mathrm{eigSVD}(\mathbf{A}^\mathrm{T}\mathbf{\Omega})$, $\alpha\!=\!\alpha^\prime\!=\!0$,  ~ $\hat{\mathbf{S}}^{\prime}\!=\!\mathrm{zeros}(l,1)$
		\FOR {$j=1, 2, 3, \cdots, p_{max}$}
		\STATE $[\mathbf{Q}, \mathbf{\hat{S}},\sim] = \mathrm{eigSVD}(\mathbf{A}^\mathrm{T}(\mathbf{A}\mathbf{Q})-\alpha\mathbf{Q})$
		\STATE \textbf{if} $\forall i\le k,~ \frac{\vert\hat{\mathbf{S}}^\prime(i,i)+\alpha^\prime-\hat{\mathbf{S}}(i,i)-\alpha\vert}{\hat{\mathbf{S}}(k+1,k+1)+\alpha}\le \mathrm{tol}$ \textbf{then break}
		\STATE \textbf{if} $\mathbf{\hat{S}}(l, l)>\alpha$ \textbf{then} $\alpha = \frac{\mathbf{\hat{S}}(l, l)+\alpha}{2}$
		\STATE $\hat{\mathbf{S}}^{\prime}=\hat{\mathbf{S}}, ~ \alpha^\prime=\alpha$
		\ENDFOR
		\STATE $\mathbf{B}=\mathbf{AQ}$
		\STATE $[\mathbf{U}, \mathbf{S}, \mathbf{V}] = \mathrm{eigSVD}(\mathbf{B})$
		\STATE $\mathbf{U} = \mathbf{U}(:, 1:k)$, ~ $\mathbf{S} = \mathbf{S}(1:k, 1:k)$, ~ $\mathbf{V} = \mathbf{Q}\mathbf{V}(:,1:k)$
	\end{algorithmic}
\end{algorithm}	
Here, Step 5 for accuracy control costs negligible time, $p_{max}$ is 
 a parameter denoting the upper bound of \atnn{the number of power iterations (i.e. parameter $p$)}, and \atnn{the symbols with ``$\prime$'' are} the quantities in last iteration step. \atnn{Notice that, Alg. 4 is also a version of dashSVD algorithm; it just does not include the accuracy control. The analysis of computational complexity at the end of last subsection is also valid for the dashSVD algorithm (Alg. 5).}

\section{Performance Analysis}\label{sec4}
The developed algorithms are implemented in Matlab, and also in C with MKL~\citep{Intel} and OpenMP~\citep{openMP} directives for multi-thread parallel computing. \notice{Some of the codes follow the implementation in \cite{martinsson2016randomized2} \cite{randqb-code}.}
Numerical experiments are conducted to compare them with existing truncated SVD algorithms, frPCA in \cite{pmlr-v95-feng18a} written in C with MKL \cite{frPCA-code}, the \texttt{LanczosBD} in \texttt{svds}, and the PRIMME\_SVDS written in C with MKL \atn{which also invokes PETSc~\citep{petsc-web-page}} and \atn{MPI} for parallel computing \cite{primme-code}.
We first validate the performance of the shifted power iteration. Then, we compare the dynamic shifts based SVD algorithm (Alg.~4) with state-of-the-art algorithms. Finally, we validate the proposed \atnn{accuracy-control mechanism in dashSVD}. The accuracy and efficiency of these algorithms are evaluated. 
Four error metrics for evaluating accuracy are considered.
Based on the PVE bound (\ref{err:pve}), the first error is 
$    \epsilon_{\textrm{PVE}} = \max_{i\le k}\frac{\vert\mathbf{u}_i^\mathrm{T}\mathbf{A}\mathbf{A}^\mathrm{T}\mathbf{u}_i-  \hat{\mathbf{u}}_i^\mathrm{T}\mathbf{A}\mathbf{A}^\mathrm{T}\hat{\mathbf{u}}_i\vert}{\sigma_{k+1}(\mathbf{A})^2}$. 
It is called ``rayleigh(last)'' metric in \cite{LazySVD}. Based on relative residual error in (\ref{err:svds}), we define $\epsilon_{\textrm{res}} = \max_{i\le k} \frac{\Vert\mathbf{A}^\mathrm{T}\mathbf{\hat{u}}_i-\hat{\sigma}_i\mathbf{\hat{v}_i}\Vert_2}{\sigma_i(\mathbf{A})}$ which measures the accuracy of singular triplets. Notice that when the computed results are the singular triplets other than the leading singular triplets, $\epsilon_{\textrm{res}}$ can also be very small. This implies $\epsilon_{\textrm{res}}$ may not be a good criterion. \texttt{svds} avoids this issue of $\epsilon_{\textrm{res}}$ by executing extra invocations of \texttt{LanczosBD} for $k+1$ singular triplets.
The third metric is $\epsilon_\textrm{spec} = \frac{\Vert\mathbf{A}-\mathbf{\hat{U}\hat{\Sigma}\hat{V}}^\mathrm{T}\Vert_2-\Vert\mathbf{A}-\mathbf{A}_k\Vert_2}{\Vert\mathbf{A}-\mathbf{A}_k\Vert_2}$, which represents the spectral relative error based on (\ref{err:relative}). Lastly, we use $\epsilon_{\sigma} = \max_{i\le k} \frac{\vert\sigma_i(\mathbf{A})-\hat{\sigma}_i\vert}{\sigma_i{(\mathbf{A}})}$ to measure the accuracy of computed singular values. In these metrics, the accurate truncated SVD results are computed with \texttt{svds} in Matlab 2020b. Notice there is a little run-to-run variability for randomized SVD algorithms, but that is not a serious issue \cite{mahoney2011}. In Appendix B, we show \atnn{the experimental} result \atnn{demonstrating} this variability increases with \atnn{the number of power iterations in dashSVD (parameter $p$)}. While for small $p$ corresponding to computing low-accuracy SVD, the standard deviation of error metric is mostly no more than $5\%$ of \atnn{the corresponding} mean value. Therefore, except for that experiment we present the results of dashSVD in an arbitrary run.

The following sparse matrices from real applications are tested, which are listed in \cref{tab:table0}. 
Three of them were used in \cite{pmlr-v95-feng18a}: a social network matrix from SNAP~\citep{snapnets}, a matrix from MovieLens dataset \citep{movielens} and a person-keyword matrix from the information retrieval application “AMiner”~\citep{Aminer}. 
Rucci1 is a sparse matrix with singular values decaying very slowly, obtained from SuiteSparse matrix collection~\citep{davis2011university}.
The last two sparse matrices uk-2005 and sk-2005 are from larger web graphs~\citep{BoVWFI,BRSLLP}, also downloaded from SuiteSparse matrix collection. The average number of nonzeros per row ranges from 3.9 to 237.

\begin{table}[b]
     \centering
        \setlength{\abovecaptionskip}{0.05 cm}
        \caption{Test Cases.}
        \label{tab:table0}
        \centering
        {
            \begin{spacing}{0.9}
                \renewcommand{\multirowsetup}{\centering}
                \begin{tabular}{ccc} 
                    \toprule
                    Matrix & Dimension & \#Nonzero elements\\
                    \midrule
                    SNAP \atn{(soc-Slashdot0922)} & 82,168$\times$82,168 & 948,464\\
                    MovieLens & 270,896$\times$45,115 & 26,024,289\\
                    Rucci1 & 1,977,885$\times$109,900 & 7,791,168\\
                    Aminer & 12,869,521$\times$323,899 & 206,501,139\\
                    uk-2005 & 39,459,925$\times$39,459,925 & 936,364,282\\
                    sk-2005 & 50,639,401$\times$50,639,401 & 1,949,412,601\\
                    \bottomrule 
                \end{tabular}
            \end{spacing}
        }
\end{table}

In our experiments, the rank parameter $k$ is set to 100 by default. For other values of $k$, the effectiveness of the proposed dashSVD is also \atnn{validated}. The relevant experimental results are shown in Appendix B, which even imply that with larger $k$ the advantage of dashSVD would be slightly more \atnn{prominent}. The oversampling parameter $s$ is fixed to $k/2$.  For fair comparison with respect to memory cost, the subspace dimensionalities in \texttt{LanczosBD} and PRIMME\_SVDS are fixed to $1.5k$.
All experiments are carried out on a computer with two 8-core Intel \atnn{Xeon E5-2620 v4 CPUs}  (at 2.10 GHz) and 512 GB RAM. We compile \atnn{these} algorithms with \atnn{GCC version 9.4.0 and the ``-O3'' option}. \atnn{The  MKL is of version 2019.4.0, while PETSc is of version 3.17.1}. 

\subsection{Validation of the Shifted Power Iteration}

\atnn{In} order to validate the power iteration scheme with dynamic shifts, we set different power parameter $p$ and perform the basic randomized SVD (Alg. 1) and the proposed dynamic shifts based randomized SVD (Alg. 2) on test matrices. With the computational results, the corresponding error metric $\epsilon_{\textrm{PVE}}$ is calculated to evaluate the accuracy. The curves of  $\epsilon_{\textrm{PVE}}$ are drawn in Fig. \ref{fig:1}, where Alg. 2$^*$ denotes 
\begin{figure}[h]
	\setlength{\abovecaptionskip}{0 cm}
	\setlength{\belowcaptionskip}{0 cm}
	\centering
	\subfigure[Dense1] { \label{fig1:1} 
		\includegraphics[width=3.1cm, trim=103 265 115 273,clip]{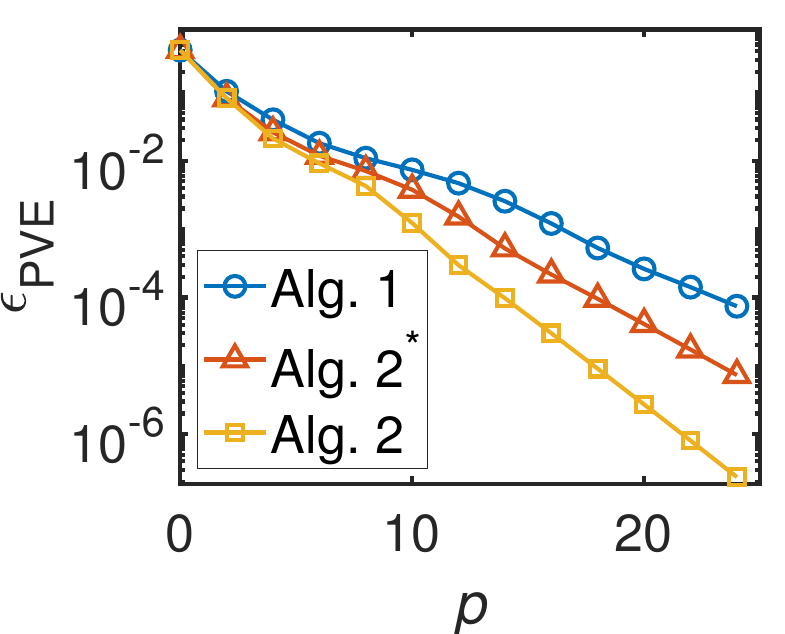} 
	}
	\subfigure[Dense2] { \label{fig1:2} 
		\includegraphics[width=3.1cm,trim=103 265 115 273,clip]{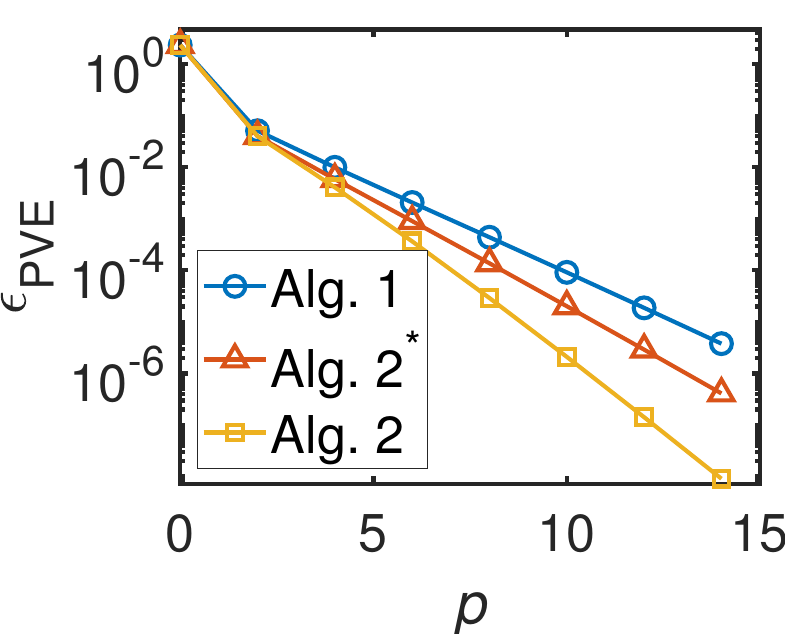} 
	}
	\subfigure[SNAP] { \label{fig1:3} 
		\includegraphics[width=3.1cm,trim=103 265 115 273,clip]{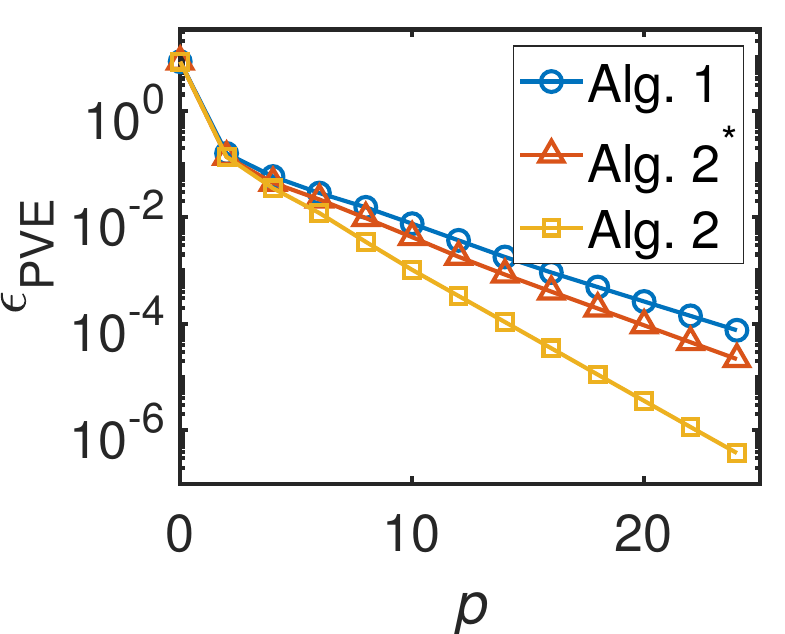} 
	} 
	\subfigure[MovieLens] { \label{fig1:4} 
		\includegraphics[width=3.1cm,trim=103 265 115 273,clip]{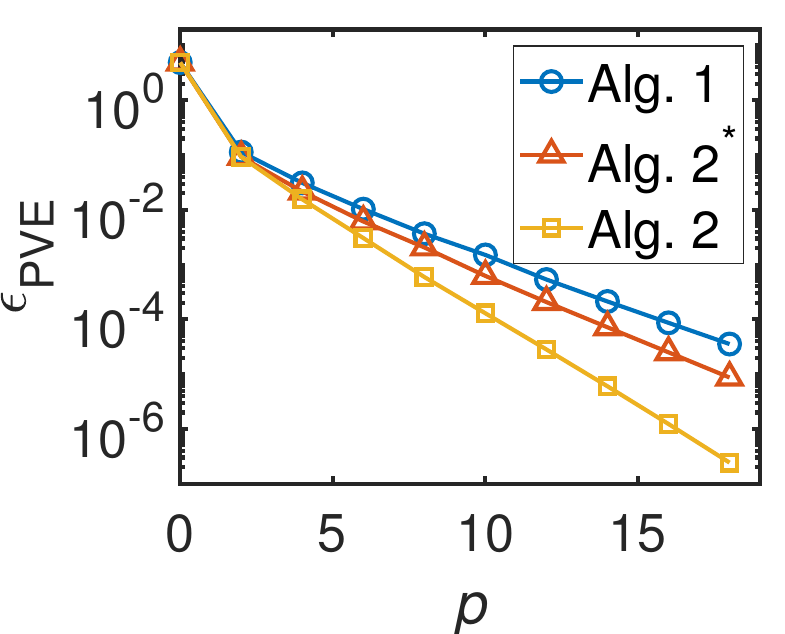} 
	}\\[-1ex]
	\subfigure[Rucci1] { \label{fig1:5} 
		\includegraphics[width=3.1cm,trim=103 265 115 273,clip]{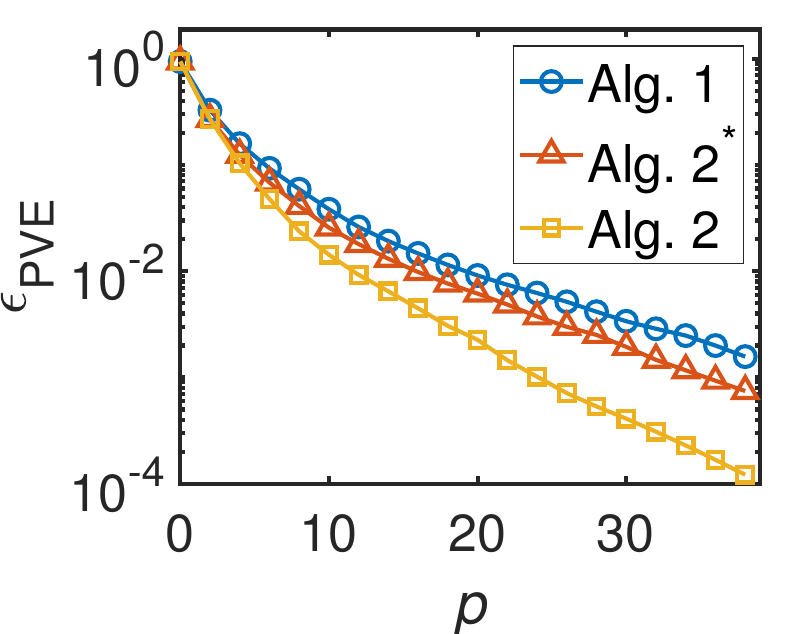} 
	}
	\subfigure[Aminer] { \label{fig1:6} 
		\includegraphics[width=3.1cm,trim=103 265 115 273,clip]{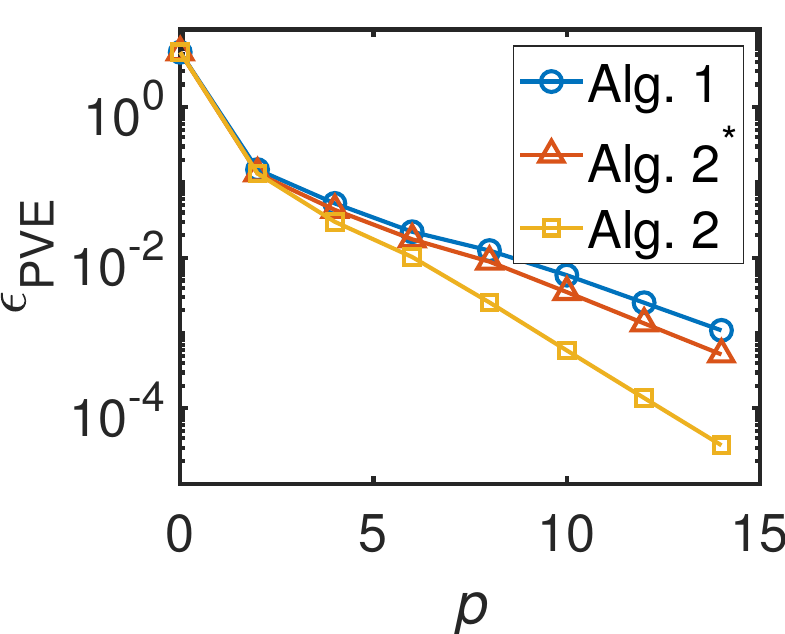} 
	}
	\subfigure[uk-2005] { \label{fig1:7} 
		\includegraphics[width=3.1cm,trim=103 265 115 273,clip]{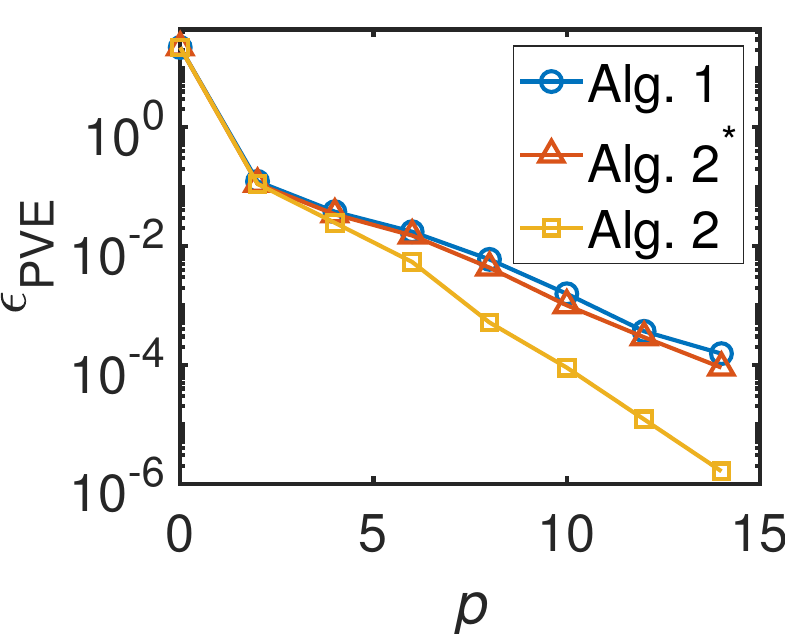} 
	}
	\subfigure[sk-2005] { \label{fig1:8} 
		\includegraphics[width=3.1cm,trim=103 265 115 273,clip]{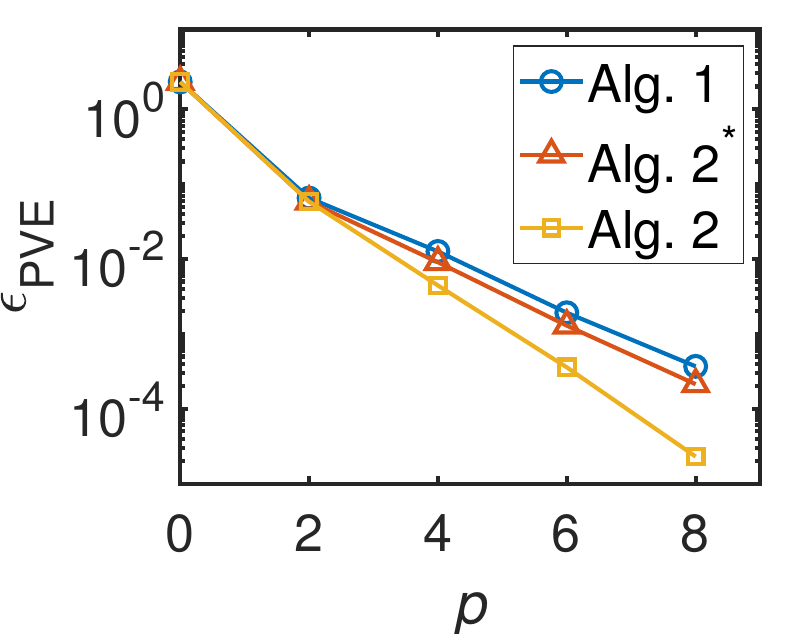} 
	}
	\caption{\notice{The PVE error vs. power parameter curves of three randomized SVD algorithms ($k=100$).}}
	\label{fig:1} 
	\centering
\end{figure} the algorithm with the fixed shift value $\alpha= \sigma_l(\mathbf{A}\mathbf{A}^\mathrm{T}\mathbf{Q})/2$ set at the first step of iteration. Two additional $1,000\times 1,000$ random matrices generated by $\texttt{randn}$ in Matlab (“Dense1”) and with the $i$-th singular value following $\sigma_i = 1/\sqrt{i}$ (“Dense2”) are also tested. 
From Fig.~\ref{fig:1} we see that, 
the dynamic scheme for setting the shift consistently produces more accurate results than the scheme with the fixed shift. And, the proposed \atnn{Alg. 2} results in much better accuracy (and efficiency as well) than the basic randomized SVD with power iteration. The reduction of \atn{the} result's error increases with $p$, and is up to \textbf{100X}.

\atnn{The effectiveness of the shifted power iteration can also be validated by comparing dashSVD with frPCA \cite{pmlr-v95-feng18a}. The major difference between them is that dashSVD employs the shifted power iteration. The comparison results are given in Appendix B.}

\subsection{Validation of the dashSVD Algorithm}\label{sec4.1}

\begin{figure}[b]
	\setlength{\abovecaptionskip}{0 cm}
	\setlength{\belowcaptionskip}{0 cm}
	\centering
	\subfigure[SNAP \atn{($\Delta p$ for dashSVD is 4)}] {
		\begin{minipage}{14cm}
			\centering
			\includegraphics[width=3.4cm, trim=103 265 115 273,clip]{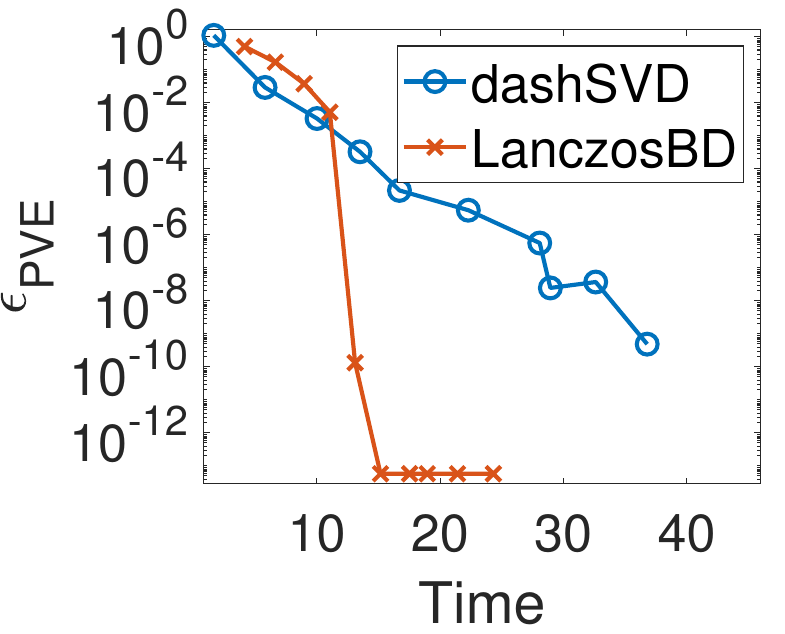}
			\includegraphics[width=3.4cm, trim=103 265 115 273,clip]{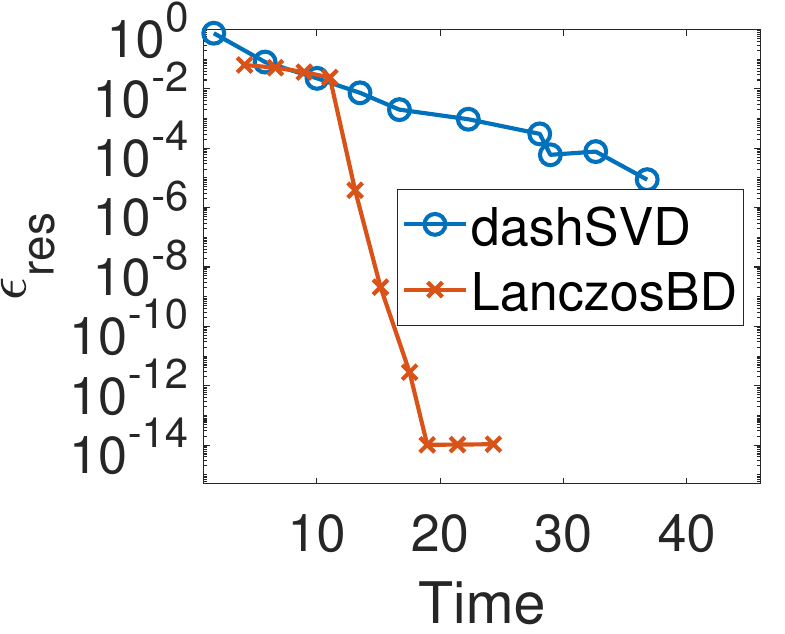}
			\includegraphics[width=3.4cm, trim=103 265 115 273,clip]{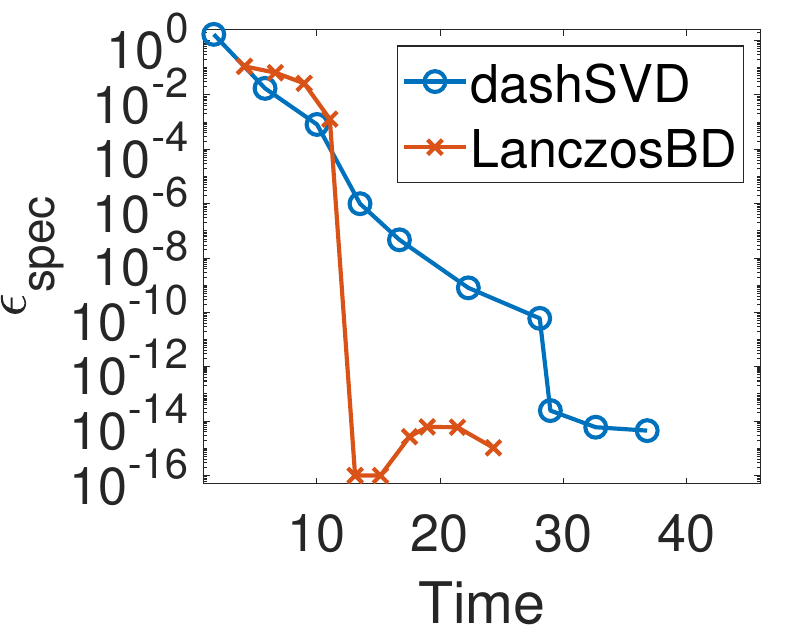}
			\includegraphics[width=3.4cm, trim=103 265 115 273,clip]{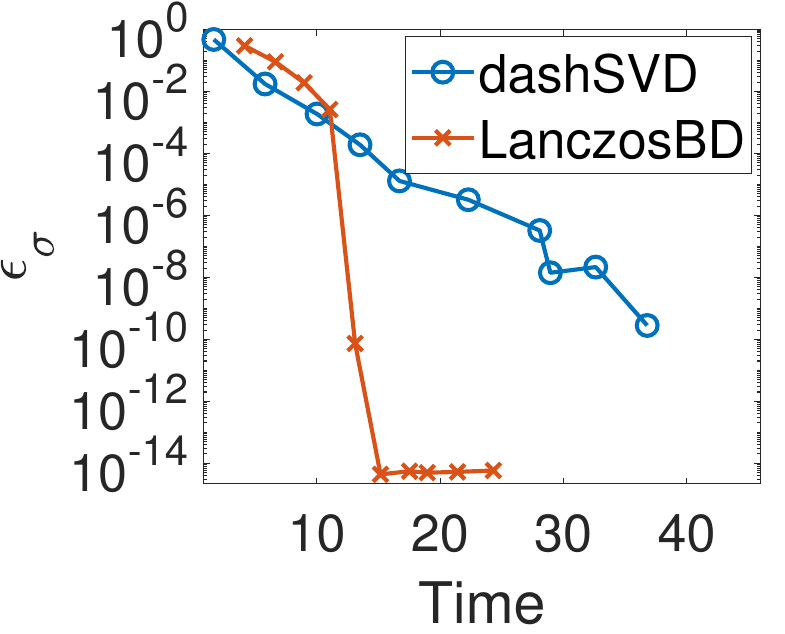}
		\end{minipage}
	}\\[-1ex]
	\subfigure[uk-2005 \atn{($\Delta p$ for dashSVD is 2)}] {
		\begin{minipage}{14cm}
			\centering
			\includegraphics[width=3.4cm, trim=103 265 115 273,clip]{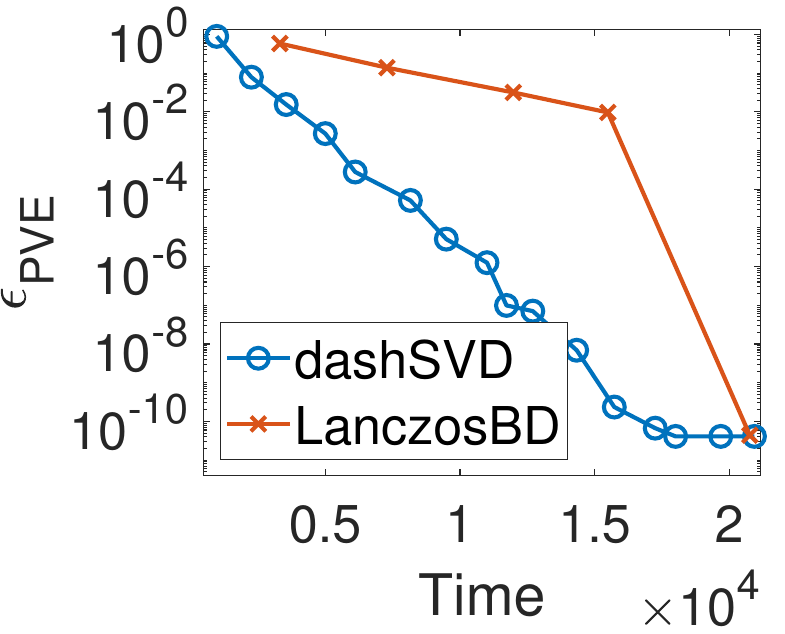}
			\includegraphics[width=3.4cm, trim=103 265 115 273,clip]{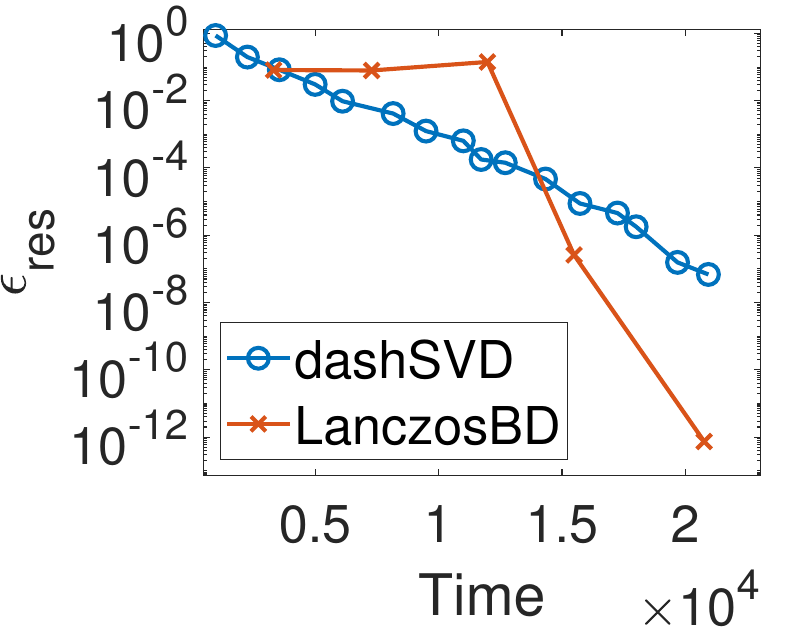}
			\includegraphics[width=3.4cm, trim=103 265 115 273,clip]{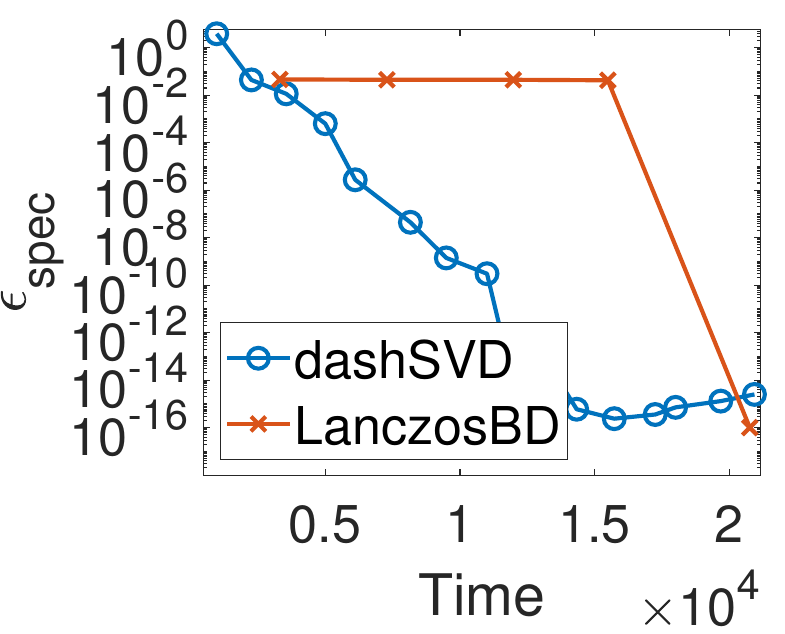}
			\includegraphics[width=3.4cm, trim=103 265 115 273,clip]{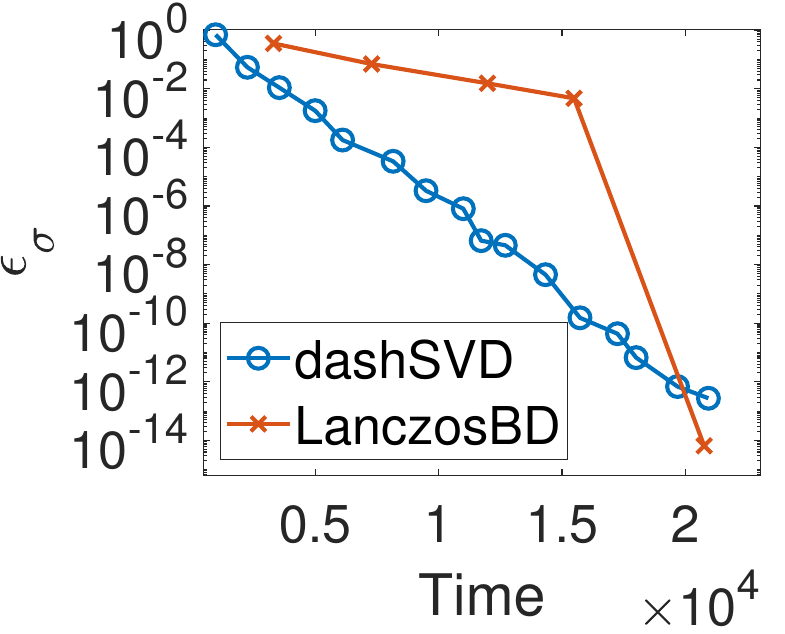}
		\end{minipage}
	}
	\caption{\notice{The error vs. time curves of dashSVD and \texttt{LanczosBD} in \texttt{svds} with single-thread computing ($k=100$). The unit of time is second. }}
	\label{fig:2} 
	\centering
\end{figure}
\atnn{For computing truncated SVD, most existing methods can trade off time against accuracy. In this subsection, we compare the efficiency of dashSVD with other methods at various accuracy levels. 
For this aim, we also implement the dashSVD without accuracy control, i.e. Alg. 4, where $p$ can be altered to trade off time against accuracy. Notice it has same runtime as dashSVD (Alg. 5) because the step for accuracy control costs negligible time.}  

Firstly, we compare dashSVD and \texttt{svds} both in Matlab. For fair comparison, we run them in single thread by \notice{executing command  “\texttt{maxNumCompThreads(1)}” in Matlab a prior}, and the \texttt{LanczosBD} in \texttt{svds} is actually compared. We varies the $p$ in dashSVD \atnn{(Alg. 4)} and the number of restartings in \texttt{LanczosBD} to plot the curves of runtime vs. error. The results of SNAP and uk-2005 in the four error metrics are shown in Fig. \ref{fig:2}, with the results of other matrices in Appendix \ref{secA3}.
The results reveal that dashSVD always costs  comparable  or less  time than \texttt{LanczosBD} for achieving a not very high accuracy, while for some cases (like uk-2005) \texttt{LanczosBD} needs a considerable number of restarting to converge to accurate result.  
For the metrics $\epsilon_{\textrm{PVE}}$, $\epsilon_\textrm{spec}$ and $\epsilon_{\sigma}$, dashSVD has remarkable advantage on computing time {(\atn{3.2X} and \atn{3.4X} for attaining error $10^{-1}$ in $\epsilon_{\textrm{PVE}}$ and  $\epsilon_{\sigma}$ for uk-2005 respectively)}. 
\begin{figure}[b]
	\setlength{\abovecaptionskip}{0 cm}
	\setlength{\belowcaptionskip}{0 cm}
	\centering
	\subfigure[SNAP] { \label{fig3:1}
		\begin{minipage} {14cm}
			\centering
			\includegraphics[width=3.4cm, trim=103 265 115 273,clip]{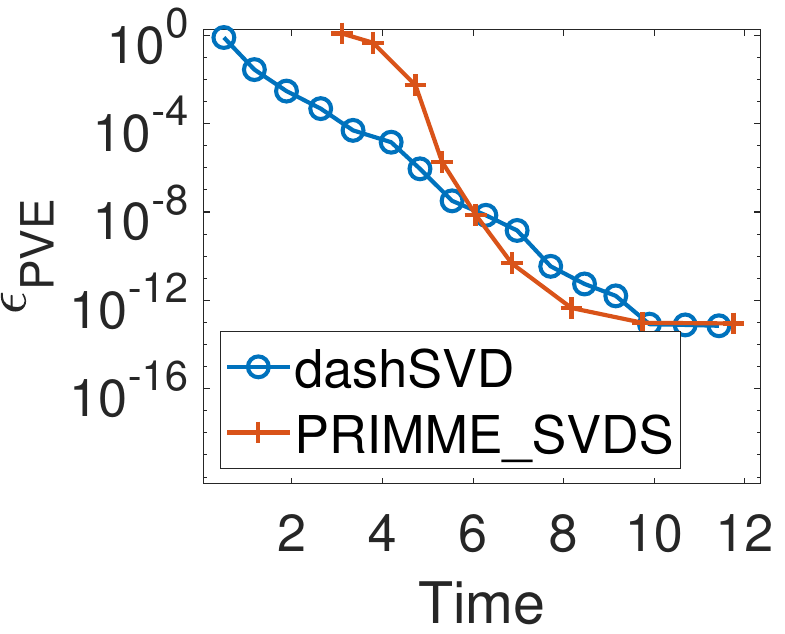} 
			\includegraphics[width=3.4cm, trim=103 265 115 273,clip]{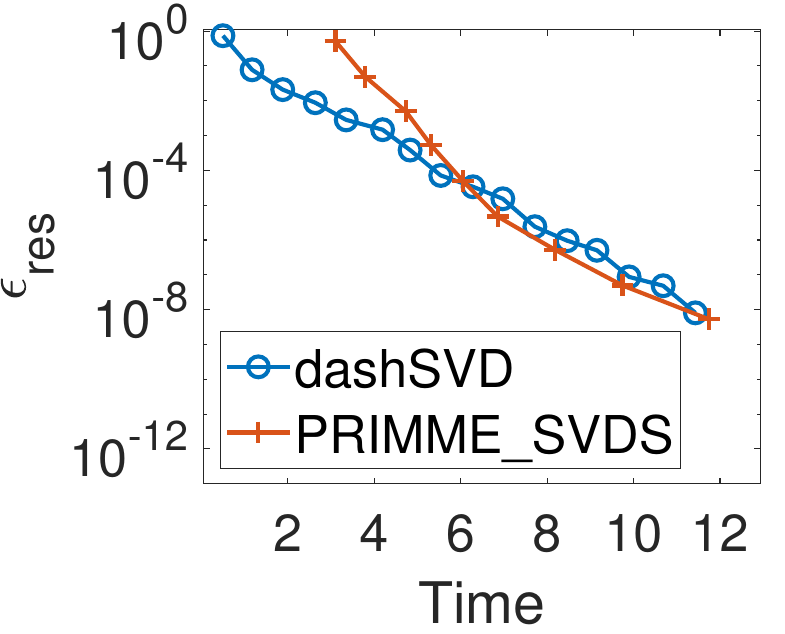} 
			\includegraphics[width=3.4cm, trim=103 265 115 273,clip]{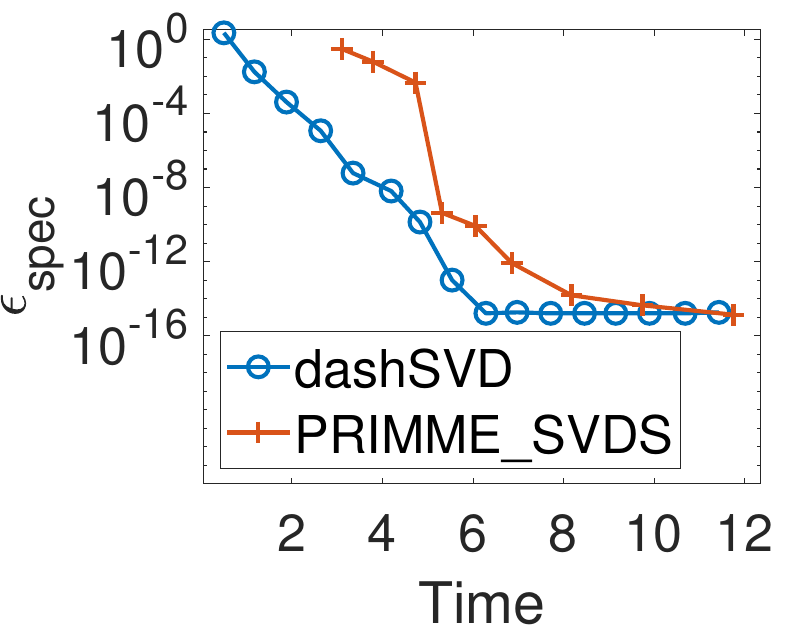} 
			\includegraphics[width=3.4cm, trim=103 265 115 273,clip]{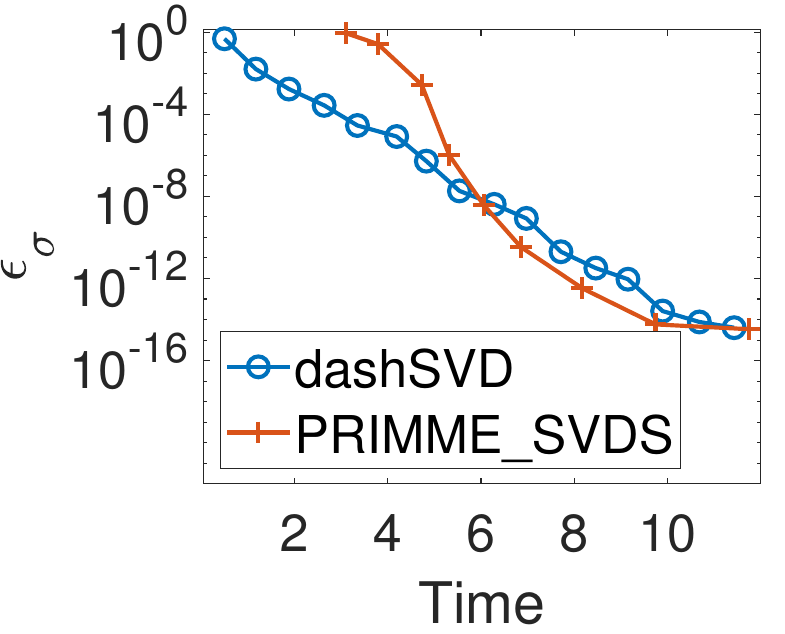} 
		\end{minipage}
	}\\[-1ex]
	\subfigure[uk-2005] { \label{fig3:2}
		\begin{minipage} {14cm}
			\centering
			\includegraphics[width=3.4cm, trim=103 265 115 273,clip]{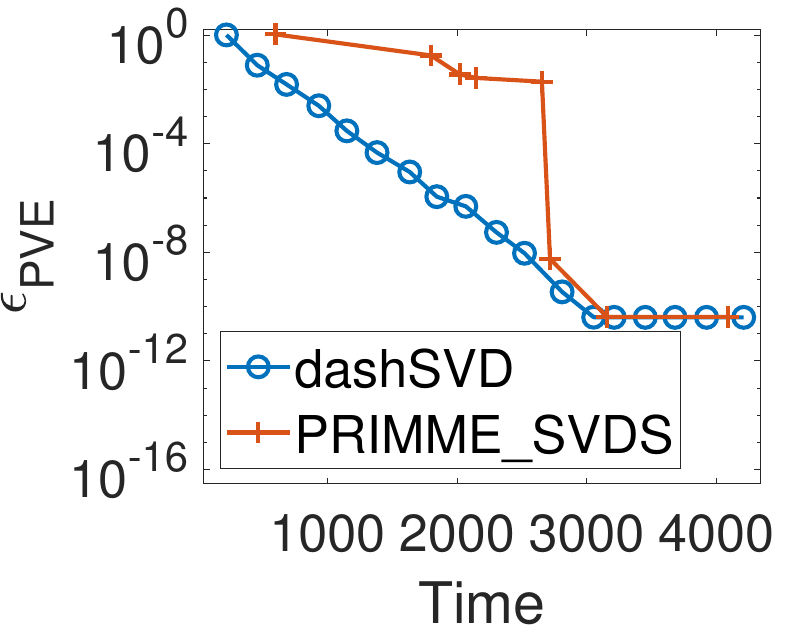} 
			\includegraphics[width=3.4cm, trim=103 265 115 273,clip]{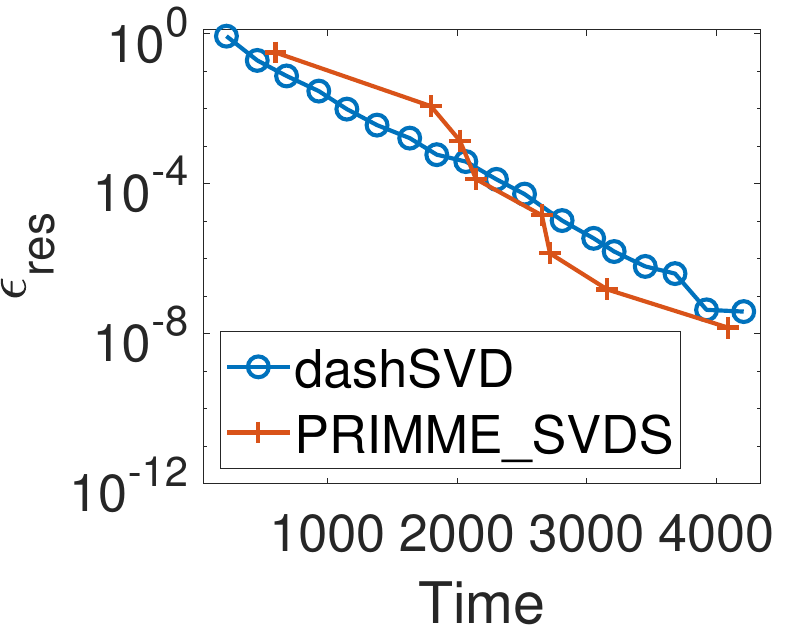} 
			\includegraphics[width=3.4cm, trim=103 265 115 273,clip]{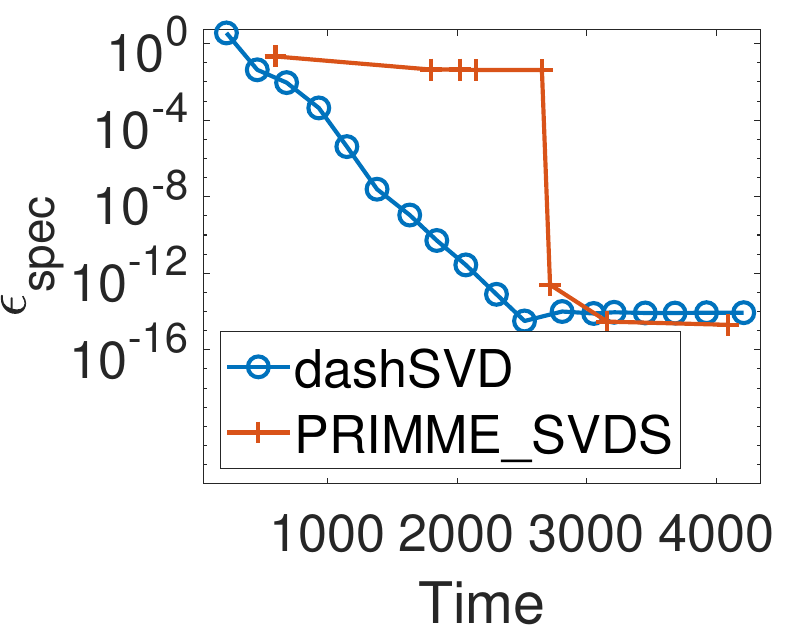} 
			\includegraphics[width=3.4cm, trim=103 265 115 273,clip]{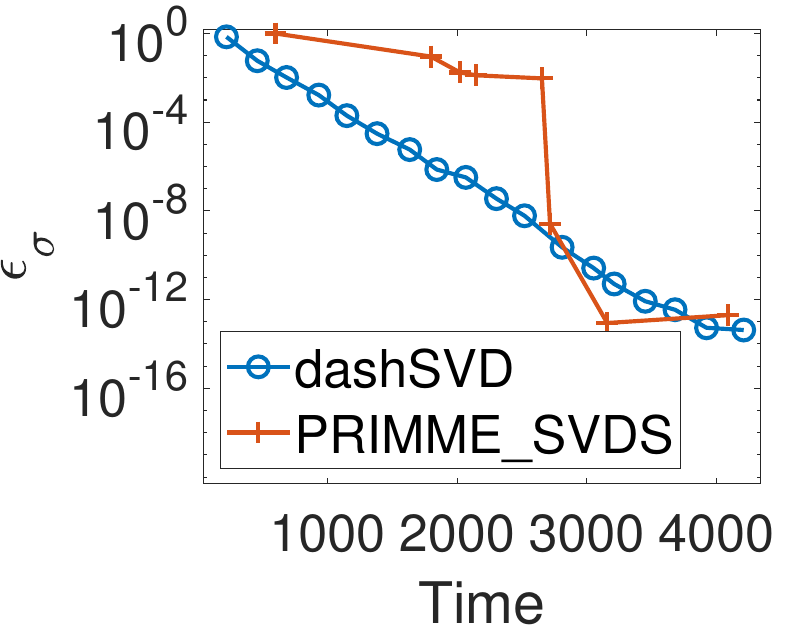} 
		\end{minipage}
	}
	\caption{\notice{The error vs. runtime curves of dashSVD and PRIMME\_SVDS with 8-thread computing ($k=100$). The unit of time is second. }}
	\label{fig:3} 
	\centering
\end{figure} 

And, we compare the C version of dashSVD with PRIMME\_SVDS for 8-thread parallel computing (as parallel \texttt{svds} is not efficient). To avoid the NUMA complications, we stick with one NUMA domain for comparison in one socket with 8 threads by \notice{ setting the environment variables ``\texttt{OMP\_PROC\_BIND=close}'' and ``\texttt{OMP\_PLACES=cores}''}. The error vs. runtime curves of SNAP and uk-2005 are plotted in Fig.~\ref{fig:3}. The results show that the proposed dashSVD runs remarkably faster than PRIMME\_SVDS for producing the results with same accuracy. 
For example, to achieve the results with $10^{-1}$ relative residual error ($\epsilon_\textrm{res}$) dashSVD runs \atn{3.2X} and \atn{1.3X} 
faster than  PRIMME\_SVDS for SNAP and uk-2005, respectively. With metric $\epsilon_\textrm{PVE}$, the corresponding speedup ratios become \atn{4.0X} and \atn{3.9X}. Comparing the results in Fig. \ref{fig:2} and Fig. \ref{fig:3}, we can also find out that the parallel speedup of dashSVD is about \atn{5.5 for 8-thread} computing. 
More error vs. runtime curves of MovieLens, Rucci1, Aminer and sk-2005 are plotted in Appendix \ref{secA3}. 
The memory \atnn{costs} of dashSVD, \texttt{LanczosBD} and PRIMME\_SVDS \atnn{are} listed in Table \ref{table:mem}. Table \ref{table:mem} shows that our dashSVD algorithm costs the same or comparable memory compared with \texttt{LanczosBD} and PRIMME\_SVDS.

\begin{table}[b]
	\setlength{\abovecaptionskip}{0.1 cm}
	\setlength{\belowcaptionskip}{0.1 cm}
	\caption{The memory cost (in unit of GB) of the dashSVD, \texttt{LanczosBD} and PRIMME\_SVDS.}
	\label{table:mem}
	\centering
			{
				\begin{spacing}{1}
					\renewcommand{\multirowsetup}{\centering}
					{\renewcommand{\arraystretch}{0.95}
						\begin{tabular}{cccc}
							\toprule
							Matrix & dashSVD & \texttt{LanczosBD} & PRIMME\_SVDS \\
							\midrule
							SNAP & 0.30 & 0.29 & \atn{0.50}\\
							MovieLens & 0.96 & 1.54 & \atn{1.00}\\
							Rucci1 & 4.66 & 4.01 & \atn{3.62}\\
							Aminer & 31.5 & 27.7 & \atn{23.1}\\
							uk-2005 & 147 & 135 & 133\\
							sk-2005 & 200 & 184 & 179\\
							\bottomrule 
						\end{tabular}
					}
				\end{spacing}
			}
		\end{table}

To validate dashSVD for \atn{matrices} with multiple singular values, a matrix with many multiple singular values called LargeRegFile obtained from SuiteSparse matrix collection~\citep{davis2011university} is tested in addition. 
This is a matrix in size 2,111,154$\times$801,374 with 2.3 nonzero elements per row on average. Firstly, we plot the error vs. runtime curves \atnn{for} LargeRegFile when $k=100$ in Fig. \ref{fig:err_lrf}\atnn{, as} those in Fig. \ref{fig:2} and \ref{fig:3}. Fig. \ref{fig:err_lrf_1} shows that \texttt{LanczosBD} needs \atn{a long time to converge to accurate result for LargeRegFile, which corresponds to a considerable number of restarting},
 and our dashSVD has remarkable advantages on computing time for the four metrics. From Fig. \ref{fig:err_lrf_2}, although PRIMME\_SVDS reaches lower $\epsilon_{\textrm{res}}$ within the same runtime compared with our dashSVD, the other error metrics are really large, which shows the singular triplets computed by PRIMME\_SVDS are wrong. The memory cost of dashSVD, \texttt{LanczosBD} and PRIMME\_SVDS is 5.78 GB, 5.03 GB and 5.09 GB, respectively. This also shows that our dashSVD algorithm costs comparable memory \atnn{to} \texttt{LanczosBD} and PRIMME\_SVDS\atnn{.} 
		
		\begin{figure}[t!]
			\setlength{\abovecaptionskip}{0.1 cm}
			\setlength{\belowcaptionskip}{0.1 cm}
			\centering
			\subfigure[The error vs. time curves of dashSVD and \atn{\texttt{LanczosBD}} in \texttt{svds} with single-thread computing] {
				\label{fig:err_lrf_1} 
				\begin{minipage}{14cm}
					\centering
					\includegraphics[width=3.4cm, trim=103 265 115 273,clip]{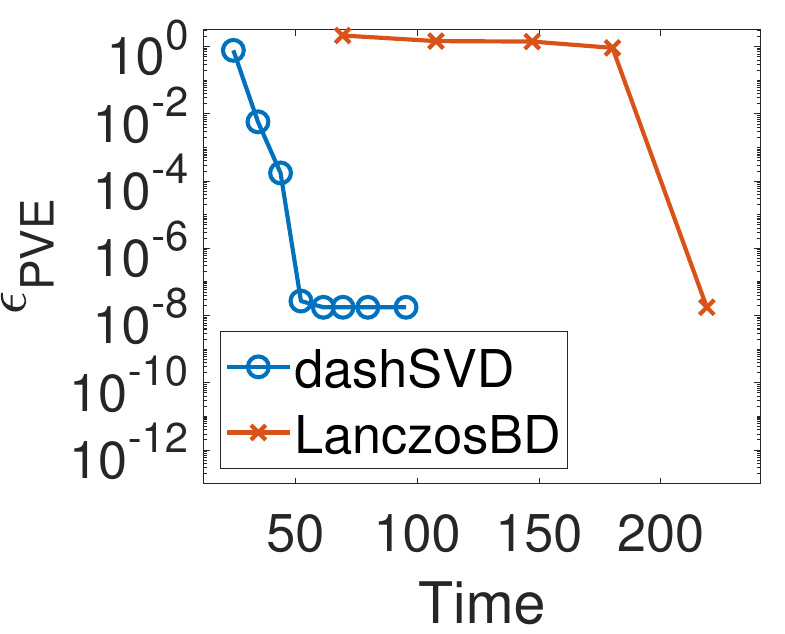}
					\includegraphics[width=3.4cm, trim=103 265 115 273,clip]{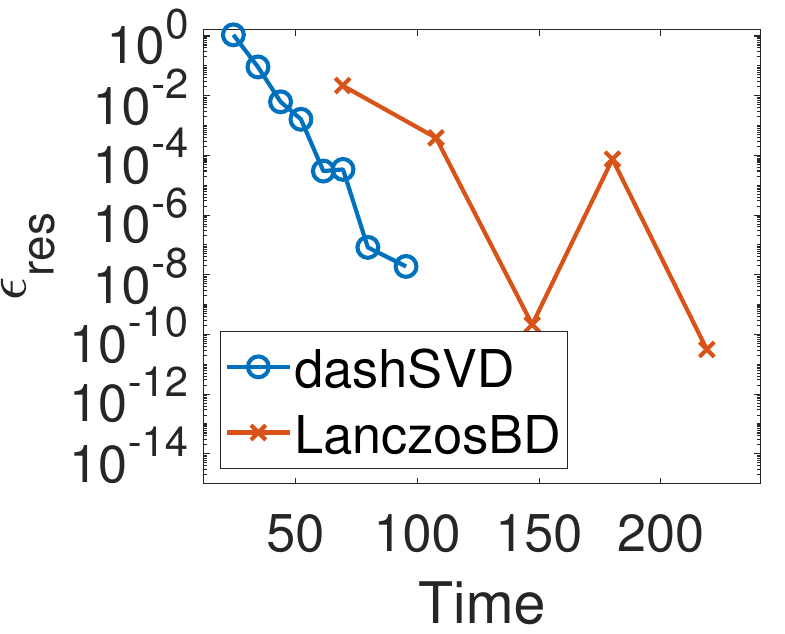}
					\includegraphics[width=3.4cm, trim=103 265 115 273,clip]{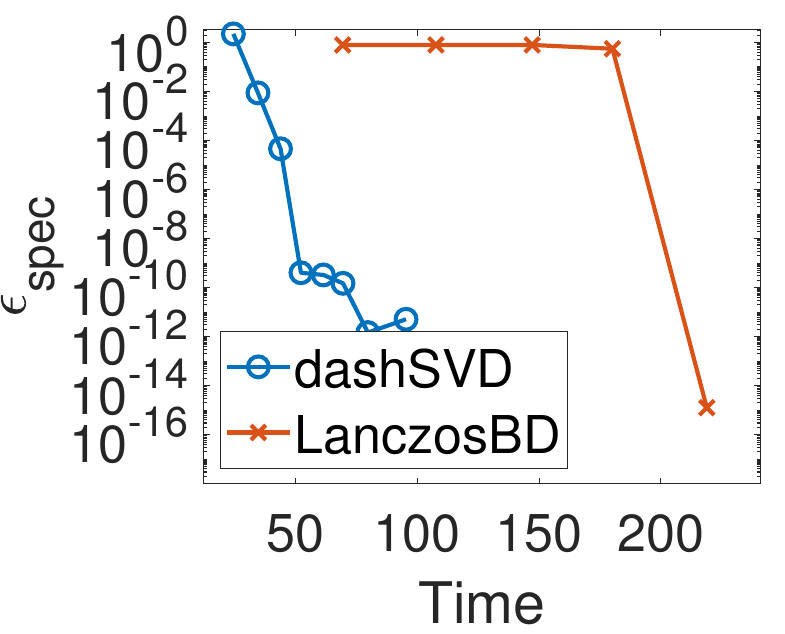}
					\includegraphics[width=3.4cm, trim=103 265 115 273,clip]{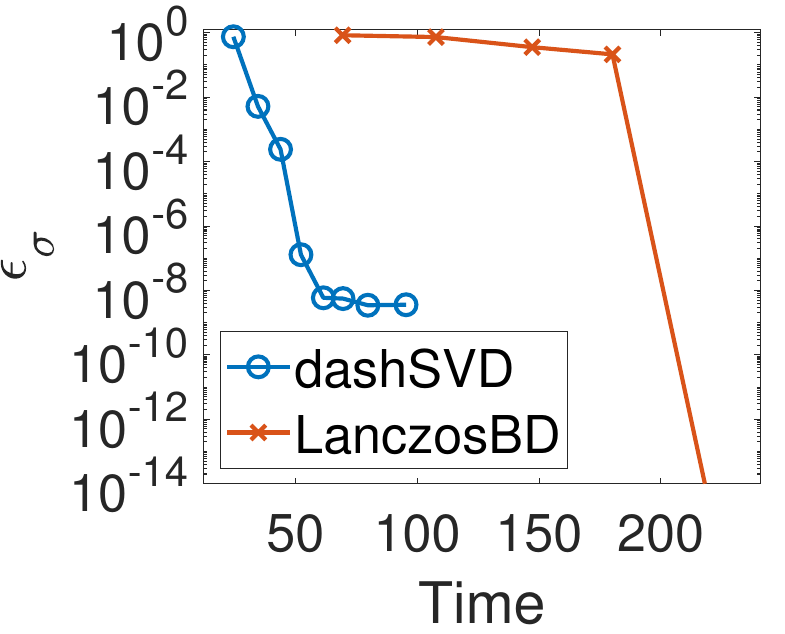}
				\end{minipage}
			}\\[-1ex]
			\subfigure[\atn{The error vs. time curves of dashSVD and PRIMME\_SVDS with 8-thread computing}] {
				\label{fig:err_lrf_2} 
				\begin{minipage} {14cm}
					\centering
					\includegraphics[width=3.4cm, trim=103 265 115 273,clip]{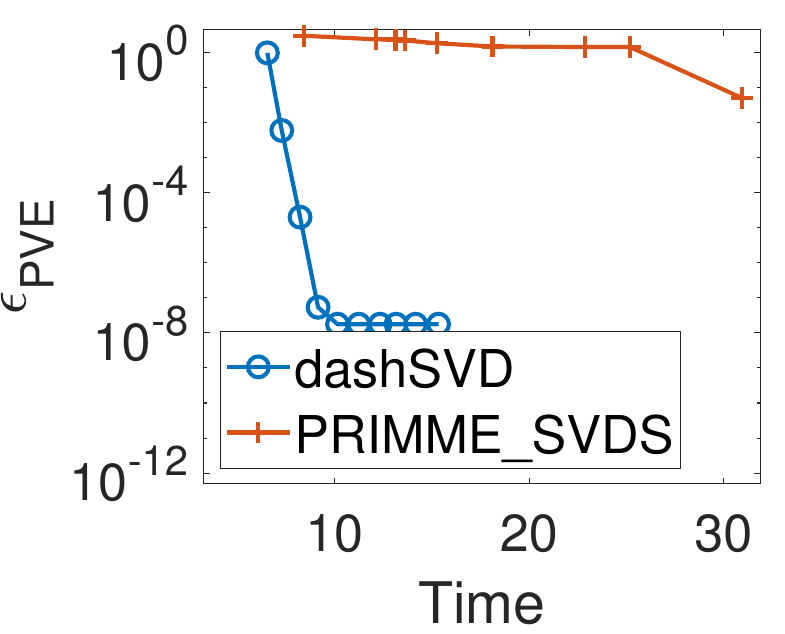} 
					\includegraphics[width=3.4cm, trim=103 265 115 273,clip]{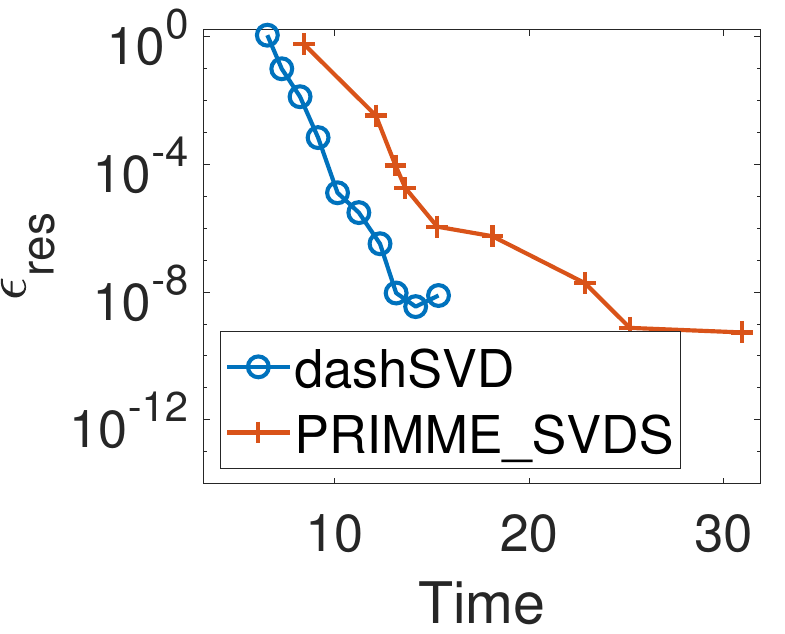} 
					\includegraphics[width=3.4cm, trim=103 265 115 273,clip]{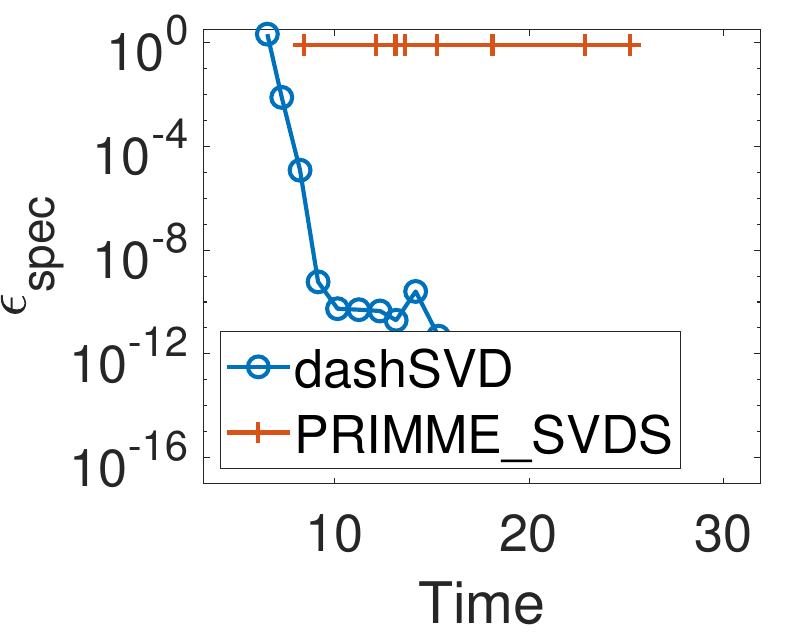} 
					\includegraphics[width=3.4cm, trim=103 265 115 273,clip]{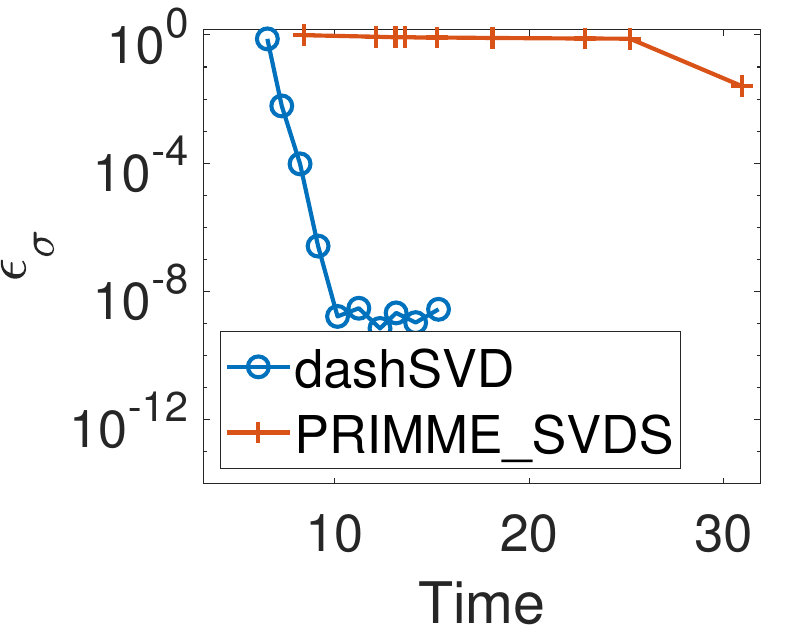} 
				\end{minipage}
			}
			\caption{\notice{The error vs. time curves of dashSVD compared with \texttt{LanczosBD} in \texttt{svds} and PRIMME\_SVDS for LargeRegFile ($k=100$). The unit of time is second. }}
			\label{fig:err_lrf}
			\centering
		\end{figure}

\begin{figure}[b]
\centering
\subfigure[The error curve of three randomized SVD algorithms (Alg.~1, Alg.~2$^*$, Alg.~2) with different values of power parameter] {
	\includegraphics[width=5.2cm,trim=93 265 90 275,clip]{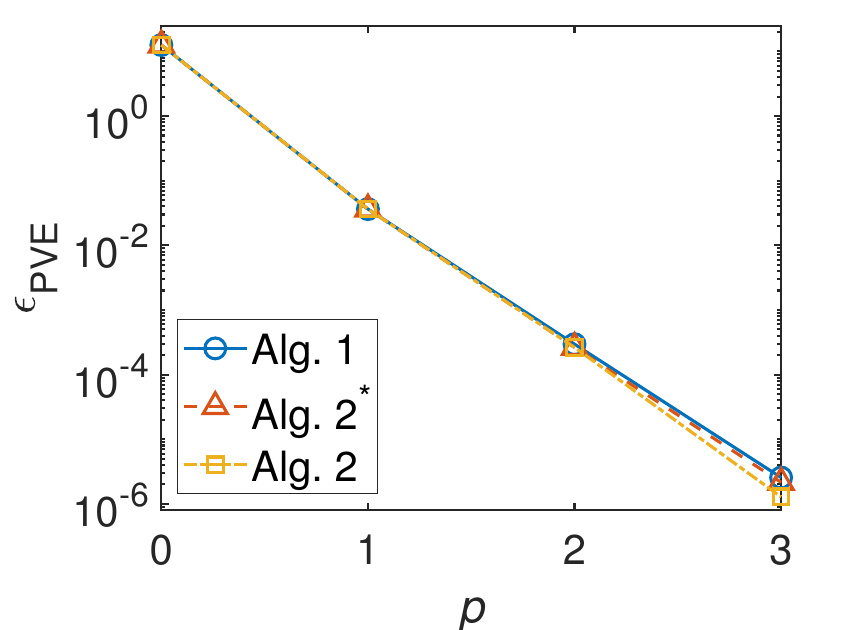}
	\label{lrf:1}
}
\hspace{.1in}
\subfigure[Singular values computed by different algorithms on LargeRegFile] {
	\includegraphics[width=4.2cm,trim=93 270 90 275,clip]{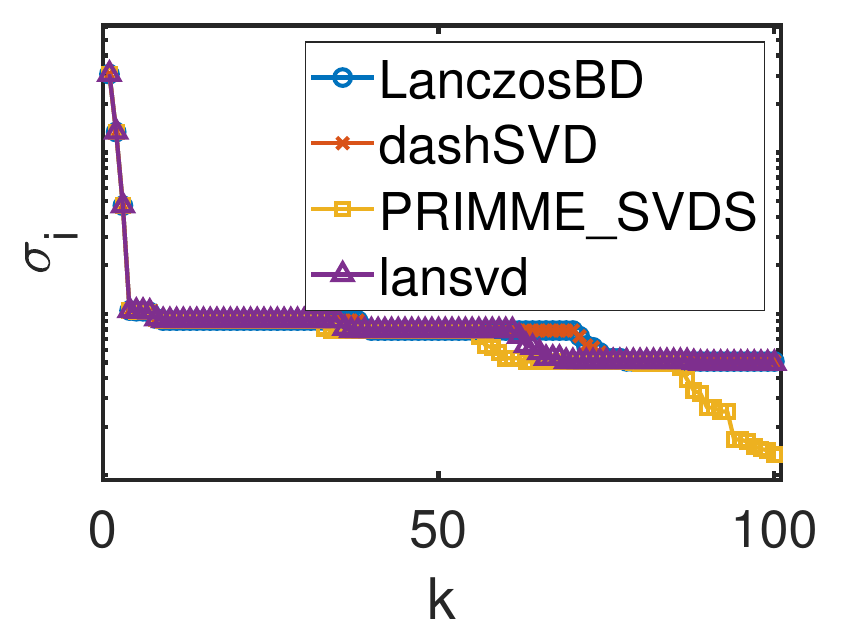}
	\label{lrf:2}
}
\caption{\notice{Experimental results on LargeRegFile ($k=100$).} }
\label{fig:lrf}
\centering
\end{figure}

Then, we draw the error curve of Alg.~1, Alg.~2$^*$ and Alg.~2 with different values of power parameter for $k = 100$ in Fig. \ref{lrf:1}. From \atnn{it} we see \atnn{the shifted power iteration improves the accuracy}. And, we test dashSVD \atnn{(Alg. 5)}, \texttt{LanczosBD}, PRIMME\_SVDS and lansvd \citep{propack} for computing the truncated SVD with $k=100$\atnn{. They are run} with 16 threads. \atnn{The} subspace dimensionalities of \texttt{LanczosBD}, lansvd and PRIMME\_SVDS are \atnn{set} to 150 for fair comparison. lansvd in PROPACK is an algorithm \atnn{for} accelerating \texttt{svds}. We test its Fortran version V2.14 \atn{parallelized} with OpenMP. \notice{Notice that lansvd is faster than \texttt{svds} in single-thread computing, but the parallel performance of lansvd is poor. Besides, the resulted singular vectors computed by lansvd occasionally contains NaN. This makes difficulty of comparison.}
The convergence tolerance parameter in \texttt{LanczosBD}, PRIMME\_SVDS is \atnn{set} to $10^{-10}$ and the convergence tolerance parameter in lansvd is \atnn{set} to $10^{-6}$ to make these algorithms produce results with low relative residual error, while the tol in dashSVD is \atnn{set} to $10^{-3}$ for producing results with low PVE.
The computed singular \atnn{values are} plotted in Fig. \ref{lrf:2}, and more results are listed in Table \ref{table:lrf}. \atnn{These results}  show that accuracy issue occurs for both lansvd and PRIMME\_SVDS on this matrix with multiple singular values. \atnn{With} the results in Fig. \ref{fig:err_lrf_2} \atnn{together we can see that} PRIMME\_SVDS is not robust for matrix with multiple singular values. On contrary, dashSVD performs very \atnn{well. Its} speedup to \texttt{LanczosBD} is \textbf{6X}, with \atnn{a small} $\epsilon_{\textrm{PVE}}$ of $1.8\times 10^{-8}$.

\begin{table}[h]
	\setlength{\abovecaptionskip}{0.1 cm}
	\setlength{\belowcaptionskip}{0.1 cm}
	\caption{Computational results of \texttt{LanczosBD}, lansvd, PRIMME\_SVDS and dashSVD (tol=$10^{-3}$) on LargeRegFile ($k$=100).
	}
	\label{table:lrf}
	\centering
	\begin{spacing}{0.8}
		\renewcommand{\multirowsetup}{\centering}
		\begin{tabular}{@{\,}c@{\,}c@{\,}c@{\,}c@{\,}c@{\,}c@{\,}c@{\,}} 
			\toprule
			Method & Time (s) & Mem (GB) & $\epsilon_{\textrm{PVE}}$ & $\epsilon_{\textrm{res}}$ & $\epsilon_{\textrm{spec}}$ & $\epsilon_{\sigma}$\\
			\midrule
                \atn{\texttt{LanczosBD}} & \atn{77.3} & \atn{5.02} & \atn{-} &  \atn{-} &  \atn{-} &  \atn{-}\\
			lansvd & 516 & 6.42 & NA$^1$ & NA$^1$ & NA$^1$ & 0.36\\
			PRIMME\_SVDS & 16.3 & 5.09 & 1.42 & 3.4E-9 & 0.79 & 0.74\\
			dashSVD & 12.7 & 5.78 & 1.8E-8 & 2.0E-5 & 1.1E-10 & 4.0E-9\\
			\bottomrule 
		\end{tabular}
		\vspace{-1pt}
		\begin{tablenotes}
			\item[] NA$^1$ represents the singular vectors computed by lansvd contains NaN, which makes some metrics of error criterion unavailable. 
		\end{tablenotes}
	\end{spacing}
\end{table}

\subsection{Validation of the \atnn{Accuracy-Control Scheme Based on PVE Criterion}}\label{sec4.2}

In this subsection, we validate the effectiveness of the proposed PVE accuracy control, and the efficiency brought by the 
surrogate strategy, i.e. using  $\sigma_i(\mathbf{A}^\mathrm{T}\mathbf{A}\mathbf{Q}-\alpha\mathbf{Q})+\alpha$ to monitor $\hat{\sigma_i}(\mathbf{A})^2$. We set $\mathrm{tol}=10^{-2}$ and compare the results of Alg.~5 with those obtained without the surrogate strategy. The latter realizes the accuracy control through iteratively computing SVD of matrix $\mathbf{B}=\mathbf{AQ}$ and checking the termination criterion (\ref{err:pve_approx}). The results are listed in Table \ref{table:1}. $N_{p}$ denotes the number of power iteration steps executed  when the algorithm terminates with satisfied criterion. If without the surrogate strategy, the resulted $N_{p}$ has same value or just smaller for 1. 
We can see that the dashSVD with accuracy control obtains results with good accuracy in the four metrics, and the result's $\epsilon_\textrm{PVE}$ is about or less than $10^{-2}$. The surrogate strategy enables convenient monitor of singular value, without which the runtime is increased by 4\% to 94\%.

\begin{table}[h]
\setlength{\abovecaptionskip}{0.1 cm}
\setlength{\belowcaptionskip}{0.1 cm}\caption{The results of the dashSVD with PVE accuracy control in 16-thread computing ($k$=100, tol=$10^{-2}$\atn{, and Inc. represents the increment of time).}}
\label{table:1}
\centering
		{
			\begin{spacing}{1}
				\renewcommand{\multirowsetup}{\centering}
				{\renewcommand{\arraystretch}{0.95}
					\begin{tabular}{@{\,}c@{\,}c@{\,}c@{\,}c@{\,}c@{\,}c@{\,}c@{\,}c@{\,}c@{\,}c@{\,}} 
						\toprule
						\multirow{2}{*}{Matrix} & \multicolumn{6}{c}{Alg. 5 } & \multicolumn{3}{c}{Alg. 5 w/o surrogate}\\
						\cmidrule(r){2-7}
						\cmidrule(r){8-10}
						& Time (s) & $N_{p}$ & $\epsilon_{\textrm{PVE}}$ & $\epsilon_{\textrm{res}}$ & $\epsilon_{\textrm{spec}}$ & $\epsilon_{\sigma}$ & Time (s) & \atn{Inc.} & $\epsilon_{\textrm{PVE}}$\\
						\midrule
						SNAP & 1.93 & 7 & 5.7E-3 & 3.1E-2 & 4.3E-4 & 3.3E-3 & 2.31 & 20\% & 5.3E-3 \\
						MovieLens & 14.4 & 6 & 2.6E-3 & 2.8E-2 & 6.0E-5 & 1.7E-3 & 15.4 & 7\% & 3.0E-3 \\
						Rucci1 & 13.6 & 9 & 1.9E-2 & 3.9E-2 & 9.9E-3 & 1.0E-2 & 26.4 & 94\% & 2.3E-3\\
						Aminer & 341 & 7 & 3.1E-3 & 2.7E-2 & 3.6E-4 & 1.8E-3 & 353 & 4\% & 6.6E-3\\
						uk-2005 & 877 & 6 & 2.4E-3 & 2.6E-2 & 2.8E-4 & 1.5E-3 & 986 & 12\% & 2.2E-3\\
						sk-2005 & 1385 & 5 & 8.4E-4 & 2.0E-2 & 5.2E-5 & 6.3E-4 & 1525 & 10\% & 1.2E-3\\
						\bottomrule 
					\end{tabular}
				}
			\end{spacing}
		}
	\end{table}	 

	\section{Conclusions}
	
	We have developed \atnn{a  randomized algorithm named dashSVD to efficiently compute the truncated SVD in the regime of low accuracy}. dashSVD is built on a shifted power iteration scheme with dynamic shifts and an accuracy-control mechanism monitoring the per vector error \atnn{based accuracy criterion}. Inheriting the parallelizability of randomized SVD algorithm, it exhibits much shorter runtime than the state-of-the-art SVD algorithms for \atn{sparse matrices}. \atnn{dashSVD is expected to be useful in the application scenarios where low-accuracy SVD is allowed}.
	
	\atnn{In the future, efficient implementation of the randomized SVD algorithm on distributed-memory parallel architecture can be explored.}
	
\appendix
		
		\section{The proof of Theorem 1}\label{secA1}
		We first give the following two Lemmas~\citep{rokhlin2010randomized}. \cref{lemma:4} is equivalent to Lemma A.1 in \cite{rokhlin2010randomized} and \cref{lemma:5} contains the equivalent forms of (A.18) and (A.40) in \cite{rokhlin2010randomized}. $\Vert\cdot\Vert$ denotes the spectral norm. 
		\begin{lemma}
			\label{lemma:4}
			Suppose that $i, k, l, m,$ and $n$ are positive integers with $k\le l\le n \le m$. Suppose further that $\mathbf{A}$ is a real $m\times n$ matrix, $\mathbf{Q}$ is a real $m\times k$ matrix whose columns are orthonormal, $\mathbf{R}$ is a real $k\times l$ matrix, $\mathbf{F}$ is a real $n\times l$ matrix, and $\mathbf{G}$ is a real $l\times n$ matrix.
			Then,
			\begin{equation}\label{lemma4:1}
				\begin{aligned}
					\Vert\mathbf{Q}\mathbf{Q}^\mathrm{T}\mathbf{A}-\mathbf{A}\Vert \le 2\Vert\mathbf{F}\mathbf{G}(\mathbf{A}^\mathrm{T}\mathbf{A})^i\mathbf{A}^\mathrm{T}-\mathbf{A}^\mathrm{T}\Vert+2\Vert\mathbf{F}\Vert\Vert\mathbf{QR}-(\mathbf{G}(\mathbf{A}^\mathrm{T}\mathbf{A})^i\mathbf{A}^\mathrm{T})^\mathrm{T}\Vert.
				\end{aligned}
			\end{equation}
		\end{lemma}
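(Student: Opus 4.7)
My plan is to prove the stated inequality by approximating $\mathbf{A}$ by the auxiliary matrix $\mathbf{Q}\mathbf{R}\mathbf{F}^\mathrm{T}$ of size $m\times n$, whose column space already lies inside $\mathrm{range}(\mathbf{Q})$. The key identity I intend to exploit is that $\mathbf{Q}^\mathrm{T}\mathbf{Q}=\mathbf{I}_k$, so that $\mathbf{Q}\mathbf{Q}^\mathrm{T}(\mathbf{Q}\mathbf{R}\mathbf{F}^\mathrm{T})=\mathbf{Q}\mathbf{R}\mathbf{F}^\mathrm{T}$. The work then separates cleanly into a ``subspace-approximation'' part (how well $\mathbf{Q}\mathbf{R}$ approximates $\mathbf{B}^\mathrm{T}$, where $\mathbf{B}:=\mathbf{G}(\mathbf{A}^\mathrm{T}\mathbf{A})^i\mathbf{A}^\mathrm{T}$) and a ``polynomial-filter'' part (how well $\mathbf{F}\mathbf{B}$ approximates $\mathbf{A}^\mathrm{T}$).

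The first step is a straightforward triangle inequality,
\begin{equation*}
\|\mathbf{Q}\mathbf{Q}^\mathrm{T}\mathbf{A}-\mathbf{A}\|\le\|\mathbf{Q}\mathbf{Q}^\mathrm{T}\mathbf{A}-\mathbf{Q}\mathbf{R}\mathbf{F}^\mathrm{T}\|+\|\mathbf{Q}\mathbf{R}\mathbf{F}^\mathrm{T}-\mathbf{A}\|.
\end{equation*}
For the first summand I would use the identity above to write $\mathbf{Q}\mathbf{Q}^\mathrm{T}\mathbf{A}-\mathbf{Q}\mathbf{R}\mathbf{F}^\mathrm{T}=\mathbf{Q}\mathbf{Q}^\mathrm{T}(\mathbf{A}-\mathbf{Q}\mathbf{R}\mathbf{F}^\mathrm{T})$ and then invoke $\|\mathbf{Q}\mathbf{Q}^\mathrm{T}\|=1$, producing the desired factor of $2$:
\begin{equation*}
\|\mathbf{Q}\mathbf{Q}^\mathrm{T}\mathbf{A}-\mathbf{A}\|\le 2\,\|\mathbf{A}-\mathbf{Q}\mathbf{R}\mathbf{F}^\mathrm{T}\|.
\end{equation*}

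The second step is another triangle inequality around the intermediate quantity $\mathbf{B}^\mathrm{T}\mathbf{F}^\mathrm{T}$,
\begin{equation*}
\|\mathbf{A}-\mathbf{Q}\mathbf{R}\mathbf{F}^\mathrm{T}\|\le\|\mathbf{A}-\mathbf{B}^\mathrm{T}\mathbf{F}^\mathrm{T}\|+\|(\mathbf{B}^\mathrm{T}-\mathbf{Q}\mathbf{R})\mathbf{F}^\mathrm{T}\|,
\end{equation*}
followed by transposing the first term (spectral norm is invariant under transposition) to give $\|\mathbf{F}\mathbf{B}-\mathbf{A}^\mathrm{T}\|$, and applying submultiplicativity $\|(\mathbf{B}^\mathrm{T}-\mathbf{Q}\mathbf{R})\mathbf{F}^\mathrm{T}\|\le\|\mathbf{Q}\mathbf{R}-\mathbf{B}^\mathrm{T}\|\cdot\|\mathbf{F}\|$ to the second. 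Combining with the factor-$2$ bound yields (\ref{lemma4:1}) exactly. A quick dimension check confirms that $\mathbf{Q}\mathbf{R}\mathbf{F}^\mathrm{T}$ is $m\times n$ and $\mathbf{F}\mathbf{B}$ is $n\times m$, matching $\mathbf{A}$ and $\mathbf{A}^\mathrm{T}$ respectively, so all the norm manipulations make sense.

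The main obstacle is really just spotting the right surrogate $\mathbf{Q}\mathbf{R}\mathbf{F}^\mathrm{T}$ to compare against: once this two-sided approximation (using $\mathbf{F}$ from the left via $\mathbf{B}$, and $\mathbf{Q}\mathbf{R}$ from the right via $\mathbf{B}^\mathrm{T}$) is identified, the rest is triangle inequality plus the identities $\mathbf{Q}^\mathrm{T}\mathbf{Q}=\mathbf{I}$ and $\|\mathbf{Q}\mathbf{Q}^\mathrm{T}\|=1$. I do not expect to need any property of the powers $(\mathbf{A}^\mathrm{T}\mathbf{A})^i$ in this lemma itself—these only enter later when the two bracketed norms on the right of (\ref{lemma4:1}) are estimated with specific probabilistic choices of $\mathbf{F}$ and $\mathbf{G}$.
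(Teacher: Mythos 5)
Your proof is correct. Note that the paper itself does not prove this lemma at all: it is imported verbatim as ``equivalent to Lemma A.1 in'' Rokhlin--Szlam--Tygert (2010), so there is no in-paper argument to compare against. Your derivation — introducing the surrogate $\mathbf{Q}\mathbf{R}\mathbf{F}^\mathrm{T}$, using $\mathbf{Q}^\mathrm{T}\mathbf{Q}=\mathbf{I}_k$ and $\Vert\mathbf{Q}\mathbf{Q}^\mathrm{T}\Vert= 1$ to absorb the projector at the cost of a factor $2$, and then splitting $\Vert\mathbf{A}-\mathbf{Q}\mathbf{R}\mathbf{F}^\mathrm{T}\Vert$ through $\mathbf{B}^\mathrm{T}\mathbf{F}^\mathrm{T}$ with transposition invariance and submultiplicativity — is a complete and self-contained proof of the stated bound, in the same spirit as the original reference. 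All dimension checks are right, and you correctly observe that the hypotheses on $i$ and the powers $(\mathbf{A}^\mathrm{T}\mathbf{A})^i$ play no role at this stage. As a minor aside, your decomposition can be sharpened: writing $(\mathbf{Q}\mathbf{Q}^\mathrm{T}-\mathbf{I})\mathbf{A}=(\mathbf{Q}\mathbf{Q}^\mathrm{T}-\mathbf{I})\mathbf{B}^\mathrm{T}\mathbf{F}^\mathrm{T}+(\mathbf{Q}\mathbf{Q}^\mathrm{T}-\mathbf{I})(\mathbf{A}-\mathbf{B}^\mathrm{T}\mathbf{F}^\mathrm{T})$ and using $(\mathbf{Q}\mathbf{Q}^\mathrm{T}-\mathbf{I})\mathbf{Q}\mathbf{R}=\mathbf{0}$ removes the factor $2$ from both terms; the lemma as stated (following the reference) is simply the looser version, which your argument establishes.
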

		
		\begin{lemma}
			\label{lemma:5}
			Suppose that $i, k, l, m,$ and $n$ are positive integers with $n\le m$ and $l=k+s\le n-k$. Suppose further that $\mathbf{A}$ is a real $m\times n$ matrix,  $\mathbf{G}$ is a real $l\times n$ matrix whose entries are i.i.d. Gaussian random variables of zero mean and unit variance, positive integer $j<k$ such that the $j$-th largest singular value $\sigma_j$ of $\mathbf{A}$ is positive. If real numbers $\beta$, $\gamma>1$ satisfying $\phi\!=\!$ $\frac{1}{\sqrt{2\pi(l\!-\!j\!+\!1)}}(\frac{e}{(l\!-\!j\!+\!1)\beta})^{l\!-\!j\!+\!1} \!+\!\frac{1}{4\gamma(\gamma^2\!-\!1)} (\frac{1}{\sqrt{\pi(n\!-\!k)}}(\frac{2\gamma^2}{e^{\gamma^2\!-\!1}})^{n\!-\!k}\! +\!\frac{1}{\sqrt{\pi l}}(\frac{2\gamma^2}{e^{\gamma^2\!-\!1}})^{l}) \!\le \! 1$. Then, there exists a real ${n\times l}$ matrix $\mathbf{F}$ such that
			\begin{equation}\label{lemma5:1}
				\begin{aligned}
					&\Vert\mathbf{F}\mathbf{G}(\mathbf{A}^\mathrm{T}\mathbf{A})^i\mathbf{A}^\mathrm{T}-\mathbf{A}^\mathrm{T}\Vert^2 \le \left(2l^2\beta^2\gamma^2\left(\frac{\sigma_{j+1}}{\sigma_j}\right)^{4i}+1\right)\sigma_{j+1}^2+\left(2l(n-k)\beta^2\gamma^2\left(\frac{\sigma_{k+1}}{\sigma_{j}}\right)^{4i}+1\right)\sigma_{k+1}^2,
				\end{aligned}
			\end{equation}
			and
			\begin{equation}\label{lemma5:2}
				\Vert\mathbf{F}\Vert\le \frac{\sqrt{l}\beta}{\sigma_j^{2i}},
			\end{equation}
			with probability not less than $1-\phi$.
		\end{lemma}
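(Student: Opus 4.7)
The plan is to adapt the proof strategy of Rokhlin, Szlam, and Tygert, since Lemma~\ref{lemma:5} is stated as an equivalent reformulation of their inequalities (A.18) and (A.40). The argument constructs $\mathbf{F}$ explicitly from the SVD of $\mathbf{A}$ and the action of $\mathbf{G}$ on the top right singular subspace, and then combines a deterministic identity with two classical Gaussian tail bounds.

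First I would write the SVD $\mathbf{A} = \mathbf{U}\mathbf{\Sigma}\mathbf{V}^\mathrm{T}$ and split $\mathbf{V} = [\mathbf{V}_1,\mathbf{V}_2,\mathbf{V}_3]$ together with $\mathbf{\Sigma} = \mathrm{diag}(\mathbf{\Sigma}_1,\mathbf{\Sigma}_2,\mathbf{\Sigma}_3)$ according to column widths $j$, $k-j$, and $n-k$. Since $\mathbf{V}$ is orthogonal, the blocks $\mathbf{H}_t := \mathbf{G}\mathbf{V}_t$ are independent Gaussians whose sizes match the widths, and a direct calculation gives $\mathbf{G}(\mathbf{A}^\mathrm{T}\mathbf{A})^i\mathbf{A}^\mathrm{T} = [\mathbf{H}_1\mathbf{\Sigma}_1^{2i+1},\mathbf{H}_2\mathbf{\Sigma}_2^{2i+1},\mathbf{H}_3\mathbf{\Sigma}_3^{2i+1}]\mathbf{U}^\mathrm{T}$. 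I would then take $\mathbf{F} := \mathbf{V}_1\mathbf{\Sigma}_1^{-2i}\mathbf{H}_1^{\dagger}$, where $\mathbf{H}_1^{\dagger}$ is the Moore--Penrose pseudoinverse of the full-column-rank Gaussian $\mathbf{H}_1$. With this choice the leading rank-$j$ part $\mathbf{V}_1\mathbf{\Sigma}_1\mathbf{U}^\mathrm{T}$ of $\mathbf{A}^\mathrm{T}$ is reproduced exactly, and the residual collapses to a middle-band block $\mathbf{V}_1\mathbf{\Sigma}_1^{-2i}\mathbf{H}_1^{\dagger}\mathbf{H}_2\mathbf{\Sigma}_2^{2i+1} - \mathbf{V}_2\mathbf{\Sigma}_2$ together with the analogous tail block in $\mathbf{H}_3$. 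Column-orthogonality of the $\mathbf{V}_t$ lets me split the squared spectral norm additively across these two blocks, and within each block the estimate $\|\mathbf{\Sigma}_1^{-2i}\mathbf{H}_1^{\dagger}\mathbf{H}_t\mathbf{\Sigma}_t^{2i+1}\|^2 \le \sigma_j^{-4i}\|\mathbf{H}_1^{\dagger}\|^2\|\mathbf{H}_t\|^2 \sigma_{t^\star}^{4i+2}$ factors out the ratio $(\sigma_{t^\star}/\sigma_j)^{4i}$, where $t^\star = j{+}1$ for the middle band and $t^\star = k{+}1$ for the tail. The bound (\ref{lemma5:2}) is immediate from $\|\mathbf{F}\|\le\sigma_j^{-2i}\|\mathbf{H}_1^{\dagger}\|$.

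The probabilistic content splits into two types of events. Control of $\|\mathbf{H}_1^{\dagger}\|\le\sqrt{l}\beta$ comes from a smallest-singular-value tail bound for the $l\times j$ Gaussian $\mathbf{H}_1$, contributing the summand $\frac{1}{\sqrt{2\pi(l-j+1)}}\bigl(\frac{e}{(l-j+1)\beta}\bigr)^{l-j+1}$ to $\phi$. Operator-norm upper bounds on $\mathbf{H}_2$ and $\mathbf{H}_3$ come from the integrated-tail form of the top-singular-value bound for Gaussian matrices, which together contribute the prefactor $\frac{1}{4\gamma(\gamma^2-1)}$ multiplying $\bigl(\frac{2\gamma^2}{e^{\gamma^2-1}}\bigr)^{n-k} + \bigl(\frac{2\gamma^2}{e^{\gamma^2-1}}\bigr)^{l}$. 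A union bound over these failure events reproduces $\phi$ exactly, and substituting the resulting operator-norm estimates into the deterministic inequality yields (\ref{lemma5:1}). The main obstacle will be the exact bookkeeping: choosing the correct dimension-dependent form of each Gaussian operator-norm inequality so that the product $\|\mathbf{H}_1^{\dagger}\|^2\|\mathbf{H}_t\|^2$ comes out to $2l^2\beta^2\gamma^2$ in the middle band and $2l(n-k)\beta^2\gamma^2$ in the tail, and then matching the three tail summands so that they combine into $\phi$ rather than a slightly weaker quantity.
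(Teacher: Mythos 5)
Your proposal is correct and follows essentially the same route the paper relies on: the paper cites this lemma from Rokhlin--Szlam--Tygert and, in its proof of the shifted generalization (Lemma~\ref{lemma:7}), uses exactly your construction $\mathbf{F}=\mathbf{V}_1\mathbf{\Sigma}_1^{-2i}\mathbf{H}_1^{\dagger}$ (written there as $\mathbf{V}$ times a padded block), the same block splitting of $\mathbf{GV}$ and $\mathbf{\Sigma}$ into widths $j$, $k-j$, $n-k$, and the same two Gaussian tail bounds combined by a union bound. Your deterministic estimates and the bookkeeping producing $2l^2\beta^2\gamma^2$, $2l(n-k)\beta^2\gamma^2$, and $\phi$ all check out against the paper's equations (\ref{p5:6})--(\ref{p5:19}) specialized to $\alpha_c=0$.
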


		According to \cref{lemma:4,lemma:5}, the following Lemma is derived, which bounds the approximation error of Alg.~1's result \citep{rokhlin2010randomized}. Notice that we assume $l=k+s\le n-k$ as it always holds in practical scenarios. 
		
		\begin{lemma}
			\label{lemma:6}
			Suppose $\mathbf{A}\in\mathbb{R}^{m\times n}~(m\ge n)$, and $k$, $s$ and $p$ are the parameters in Alg.~1. $l=k+s\le n-k$, and $\mathbf{Q}$ in size $m\times l$ is the obtained orthonormal matrix with Alg.~1. If positive integer $j<k$, and real numbers $\beta$, $\gamma>1$ satisfying $\phi\!=\!$ $\frac{1}{\sqrt{2\pi(l\!-\!j\!+\!1)}}(\frac{e}{(l\!-\!j\!+\!1)\beta})^{l\!-\!j\!+\!1} \!+\!\frac{1}{4\gamma(\gamma^2\!-\!1)} (\frac{1}{\sqrt{\pi(n\!-\!k)}}(\frac{2\gamma^2}{e^{\gamma^2\!-\!1}})^{n\!-\!k}\! +\!\frac{1}{\sqrt{\pi l}}(\frac{2\gamma^2}{e^{\gamma^2\!-\!1}})^{l}) \!\le \! 1$, then
			\begin{equation}\label{lemma6:1}
				\begin{aligned}
					\Vert\mathbf{Q}\mathbf{Q}^\mathrm{T}\mathbf{A}-\mathbf{A}\Vert \le &2\sqrt{\left(2l^2\beta^2\gamma^2\left(\frac{\sigma_{j+1}}{\sigma_j}\right)^{4p}+1\right)\sigma_{j+1}^2+
					\left(2l(n-k)\beta^2\gamma^2\left(\frac{\sigma_{k+1}}{\sigma_{j}}\right)^{4p}+1\right)\sigma_{k+1}^2},
				\end{aligned}
			\end{equation}
			with probability not less than $1-\phi$, where $\sigma_j$ denotes the $j$-th \atn{largest} singular value of $\mathbf{A}$.
		\end{lemma}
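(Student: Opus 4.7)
The plan is to combine \cref{lemma:4,lemma:5} by identifying the orthonormal matrix $\mathbf{Q}$ produced by Alg.~1 with the $\mathbf{Q}$ appearing in \cref{lemma:4}. First, I would take $\mathbf{G}=\mathbf{\Omega}^{\mathrm{T}}$, which is an $l\times n$ matrix with i.i.d.\ standard Gaussian entries as required by \cref{lemma:5}, and set $i=p$. A direct transpose shows that $\bigl(\mathbf{G}(\mathbf{A}^{\mathrm{T}}\mathbf{A})^{p}\mathbf{A}^{\mathrm{T}}\bigr)^{\mathrm{T}}=(\mathbf{A}\mathbf{A}^{\mathrm{T}})^{p}\mathbf{A}\mathbf{\Omega}$, which is precisely the $m\times l$ matrix whose range Alg.~1 attempts to capture.

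The next step is to eliminate the second term in the bound of \cref{lemma:4}. Let $\mathbf{Q}_0$ be the orthonormal matrix after Step~2 of Alg.~1 and $\mathbf{Q}_j$ the one after the $j$-th execution of Step~4. Each invocation of $\mathrm{orth}(\cdot)$ furnishes a square factor: $\mathbf{Q}_0\mathbf{R}_0=\mathbf{A}\mathbf{\Omega}$ and $\mathbf{Q}_j\mathbf{R}_j=\mathbf{A}\mathbf{A}^{\mathrm{T}}\mathbf{Q}_{j-1}$ for $j=1,\ldots,p$. Substituting the first relation into the second and iterating yields $\mathbf{Q}\mathbf{R}=(\mathbf{A}\mathbf{A}^{\mathrm{T}})^{p}\mathbf{A}\mathbf{\Omega}$ where $\mathbf{R}=\mathbf{R}_p\mathbf{R}_{p-1}\cdots\mathbf{R}_0$. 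With this choice of $\mathbf{R}$, the quantity $\mathbf{Q}\mathbf{R}-\bigl(\mathbf{G}(\mathbf{A}^{\mathrm{T}}\mathbf{A})^{p}\mathbf{A}^{\mathrm{T}}\bigr)^{\mathrm{T}}$ vanishes identically, so the second term of the right-hand side of \eqref{lemma4:1} is zero.

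Third, I would invoke \cref{lemma:5} with the same $\mathbf{G}$ and $i=p$ to produce a matrix $\mathbf{F}$ of size $n\times l$ for which $\Vert\mathbf{F}\mathbf{G}(\mathbf{A}^{\mathrm{T}}\mathbf{A})^{p}\mathbf{A}^{\mathrm{T}}-\mathbf{A}^{\mathrm{T}}\Vert^{2}$ is bounded by the expression inside the square root of \eqref{lemma6:1}, with probability at least $1-\phi$. Substituting this bound together with the vanishing second term into \eqref{lemma4:1}, then taking square roots and applying the prefactor $2$ from \cref{lemma:4}, delivers \eqref{lemma6:1} on the event of probability $\ge 1-\phi$.

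The main obstacle will be the bookkeeping in the second step: one must verify that in exact arithmetic the orthonormalizations really do preserve the successive column spans and that the $\mathbf{R}_j$ from each QR step can be composed so that $\mathbf{Q}\mathbf{R}=(\mathbf{A}\mathbf{A}^{\mathrm{T}})^{p}\mathbf{A}\mathbf{\Omega}$ holds as an exact identity rather than just an approximation. In the generic full column-rank case each $\mathbf{R}_j$ is upper triangular and invertible, so the telescoped $\mathbf{R}$ is well defined; in the degenerate case one may still exhibit a (possibly singular) $\mathbf{R}$ making the identity hold, since $\mathbf{Q}_j$ has the same column space as its input. Once this identification is in place, the conclusion follows by a clean assembly of the two preceding lemmas with no further estimation.
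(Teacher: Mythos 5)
Your proposal is correct and follows essentially the same route as the paper's proof: apply \cref{lemma:4} with $\mathbf{G}=\mathbf{\Omega}^{\mathrm{T}}$ and $i=p$, kill the second term by exhibiting an $l\times l$ matrix $\mathbf{R}$ with $\mathbf{QR}=(\mathbf{A}\mathbf{A}^{\mathrm{T}})^{p}\mathbf{A}\mathbf{\Omega}$ (the paper simply asserts its existence from the fact that $\mathbf{Q}$ spans that range, whereas you construct it explicitly by telescoping the triangular factors), and then bound the remaining term via \cref{lemma:5}. The only cosmetic difference is that you correctly note $\mathbf{F}$ should be $n\times l$, matching \cref{lemma:5}, where the paper's write-up has a small typo calling it $m\times l$.
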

		
		\begin{proof}
			\atnn{In Alg.~1, the result matrix $\mathbf{Q}$ is formed by a set of orthonormal basis of $range(\mathbf{A}(\mathbf{A}^\mathrm{T}\mathbf{A})^p\mathbf{\Omega})$, where $\mathbf{\Omega}$ in size $n\times l$ is a Gaussian random matrix and $p$ is the power parameter. Therefore, each column of $\mathbf{A}(\mathbf{A}^\mathrm{T}\mathbf{A})^p\mathbf{\Omega}= (\mathbf{\Omega}^\mathrm{T}(\mathbf{A}^\mathrm{T}\mathbf{A})^p\mathbf{A}^\mathrm{T})^\mathrm{T}$ can be expressed as a linear combination of $\mathbf{Q}$'s columns. This means there is a matrix $\mathbf{R}$ in $l\times l$ satisfying $(\mathbf{\Omega}^\mathrm{T}(\mathbf{A}^\mathrm{T}\mathbf{A})^p\mathbf{A}^\mathrm{T})^\mathrm{T}= \mathbf{QR}$. So,
			\begin{equation}\label{l6p:1}
				\begin{aligned}
				    \Vert\mathbf{QR}-(\mathbf{\Omega}^\mathrm{T}(\mathbf{A}^\mathrm{T}\mathbf{A})^p\mathbf{A}^\mathrm{T})^\mathrm{T}\Vert
				    = 0.
				\end{aligned}
			\end{equation}
			Suppose $\mathbf{F}$ is a real $m\times l$ matrix, $k=l, p=i$, $\mathbf{G}=\mathbf{\Omega}^\mathrm{T}$ and $\mathbf{R}$ is the one in (\ref{l6p:1}).
			According to \cref{lemma:4}, we have
			\begin{equation}\label{l6p:2}
				\begin{aligned}
					\Vert\mathbf{Q}\mathbf{Q}^\mathrm{T}\mathbf{A}-\mathbf{A}\Vert &\le 2\Vert\mathbf{F}\mathbf{\Omega}^\mathrm{T}(\mathbf{A}^\mathrm{T}\mathbf{A})^p\mathbf{A}^\mathrm{T}-\mathbf{A}^\mathrm{T}\Vert + 2 \Vert\mathbf{F}\Vert\Vert\mathbf{QR}-(\mathbf{\Omega}^\mathrm{T}(\mathbf{A}^\mathrm{T}\mathbf{A})^p\mathbf{A}^\mathrm{T})^\mathrm{T}\Vert \\
					&= 2\Vert\mathbf{F}\mathbf{\Omega}^\mathrm{T}(\mathbf{A}^\mathrm{T}\mathbf{A})^p\mathbf{A}^\mathrm{T}-\mathbf{A}^\mathrm{T}\Vert. \\
				\end{aligned}
			\end{equation}
		Then, according to \cref{lemma:5} there exists $\mathbf{F}$ such that (\ref{lemma5:1}) holds. Combining it and (\ref{l6p:2}) we derive (\ref{lemma6:1}).
}
			
		\end{proof}
		
		Before proving \cref{theorem:1}, we prove \cref{lemma:7} which is similar to the proof of Lemma~A.2 in \cite{rokhlin2010randomized}.

		\begin{lemma}
			\label{lemma:7}
			Suppose that $i, k, l, m,$ and $n$ are positive integers with $n\le m$ and $l=k+s\le n-k$. Suppose further that $\mathbf{A}$ is a real $m\times n$ matrix,  $\mathbf{G}$ is a real $l\times n$ matrix whose entries are i.i.d. Gaussian random variables of zero mean and unit variance, positive integer $j<k$ such that the $j$-th \atn{largest} singular value $\sigma_j$ of $\mathbf{A}$ is positive. If real numbers $\beta$, $\gamma>1$ satisfying $\phi\!=\!$ $\frac{1}{\sqrt{2\pi(l\!-\!j\!+\!1)}}(\frac{e}{(l\!-\!j\!+\!1)\beta})^{l\!-\!j\!+\!1} \!+\!\frac{1}{4\gamma(\gamma^2\!-\!1)} (\frac{1}{\sqrt{\pi(n\!-\!k)}}(\frac{2\gamma^2}{e^{\gamma^2\!-\!1}})^{n\!-\!k}\! +\!\frac{1}{\sqrt{\pi l}}(\frac{2\gamma^2}{e^{\gamma^2\!-\!1}})^{l}) \!\le \! 1$.
			Then, there exists a real ${n\times l}$ matrix $\mathbf{F}$ such that
			\begin{equation}\label{lemma7:1}
				\begin{aligned}
					&\Vert\mathbf{F}\mathbf{G}\mathbf{A}^\mathrm{T}\prod_{c=1}^{i}(\mathbf{A}\mathbf{A}^\mathrm{T}-\alpha_c\mathbf{I})-\mathbf{A}^\mathrm{T}\Vert^2\\
					& \le \left(2l^2\beta^2\gamma^2\prod_{c=1}^{i}\left(\frac{\sigma_{j+1}^2-\alpha_c}{\sigma_j^2-\alpha_c}\right)^{2}\!+\!1\right)\sigma_{j+1}^2+
					\left(2l(n-k)\beta^2\gamma^2\prod_{c=1}^{i}\left(\frac{\sigma_{k+1}^2-\alpha_c}{\sigma_{j}^2-\alpha_c}\right)^{2}\!+\!1\right)\sigma_{k+1}^2,
				\end{aligned}
			\end{equation}
			and
			\begin{equation}\label{lemma7:2}
				\Vert\mathbf{F}\Vert\le \frac{\sqrt{l}\beta}{\prod_{c=1}^{i}(\sigma_j^2-\alpha_c)},
			\end{equation}
			with probability not less than $1-\phi$, where $\mathbf{I}$ is the identity matrix and $0\le\alpha_c\le\sigma_l^2/2$ for any $c\le i$.
		\end{lemma}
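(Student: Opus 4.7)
The proof plan is to mimic the proof of \cref{lemma:5} from \cite{rokhlin2010randomized} (Lemma~A.2 there), since the structural role played by $(\mathbf{A}^\mathrm{T}\mathbf{A})^i\mathbf{A}^\mathrm{T}$ in that argument will be played here by $\mathbf{A}^\mathrm{T}\prod_{c=1}^{i}(\mathbf{A}\mathbf{A}^\mathrm{T}-\alpha_c\mathbf{I})$. First I would use the full SVD $\mathbf{A}=\mathbf{U}\mathbf{\Sigma}\mathbf{V}^\mathrm{T}$ to observe that $\mathbf{A}\mathbf{A}^\mathrm{T}-\alpha_c\mathbf{I}=\mathbf{U}(\mathbf{\Sigma}\mathbf{\Sigma}^\mathrm{T}-\alpha_c\mathbf{I})\mathbf{U}^\mathrm{T}$, so the shifted factors all commute through a common eigenbasis and
\begin{equation*}
\mathbf{A}^\mathrm{T}\prod_{c=1}^{i}(\mathbf{A}\mathbf{A}^\mathrm{T}-\alpha_c\mathbf{I}) = \mathbf{V}\mathbf{\Sigma}^\mathrm{T}\!\prod_{c=1}^{i}(\mathbf{\Sigma}\mathbf{\Sigma}^\mathrm{T}-\alpha_c\mathbf{I})\,\mathbf{U}^\mathrm{T}.
\end{equation*}
This diagonal reduction is the analog of the role of $\sigma_j^{2i+1}$ in \cite{rokhlin2010randomized}: in the Rokhlin--Szlam--Tygert argument every occurrence of $\sigma_j^{2i}$ should be replaced by $\prod_{c=1}^{i}(\sigma_j^2-\alpha_c)$, and similarly with $\sigma_{j+1}^{2i}$, $\sigma_{k+1}^{2i}$.

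The constraint $0\le \alpha_c\le\sigma_l^2/2$ is what keeps the replacement legal: by \cref{proposition:1} it guarantees that the $l$ largest singular values of each $\mathbf{A}\mathbf{A}^\mathrm{T}-\alpha_c\mathbf{I}$ are exactly $\sigma_1^2-\alpha_c,\ldots,\sigma_l^2-\alpha_c$ in the same order, each strictly positive, so the products $\prod_{c}(\sigma_j^2-\alpha_c)$ for $j\le l$ are positive and the ratios $(\sigma_{j+1}^2-\alpha_c)/(\sigma_j^2-\alpha_c)$ are well-defined and bounded by $\sigma_{j+1}^2/\sigma_j^2$. Next I would split $\mathbf{\Sigma}$ into the leading $j\times j$ block $\mathbf{\Sigma}_1$, the middle block indexed by $j+1,\ldots,k$, and the tail indexed by $k+1,\ldots,n$, conformally partition $\mathbf{G}\mathbf{V}=[\mathbf{G}_1\;\mathbf{G}_2\;\mathbf{G}_3]$, and define $\mathbf{F}$ exactly as in \cite{rokhlin2010randomized}: take $\mathbf{F}$ to annihilate its image outside the leading $j$-dimensional subspace and to invert the action of $\mathbf{G}_1\mathbf{\Sigma}_1\prod_c(\mathbf{\Sigma}_1^2-\alpha_c\mathbf{I})$ there via a pseudoinverse of $\mathbf{G}_1$. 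Then $\mathbf{F}\mathbf{G}\mathbf{A}^\mathrm{T}\prod_c(\mathbf{A}\mathbf{A}^\mathrm{T}-\alpha_c\mathbf{I})$ reproduces the top-$j$ part of $\mathbf{A}^\mathrm{T}$ exactly and the residual is the sum of a ``tail'' piece of magnitude $\sigma_{k+1}$ plus the coupling terms from $\mathbf{G}_2,\mathbf{G}_3$.

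The bounds in \eqref{lemma7:1} and \eqref{lemma7:2} then follow by routing through the same norm inequalities as in \cite{rokhlin2010randomized}: the coupling term contributing $(\sigma_{j+1}/\sigma_j)^{4i}$ in \cref{lemma:5} is now governed by $\|\mathbf{G}_2\mathbf{\Sigma}_2\prod_c(\mathbf{\Sigma}_2^2-\alpha_c\mathbf{I})\|^2\|\mathbf{F}\|^2$, which by the diagonal structure gives $\prod_c(\sigma_{j+1}^2-\alpha_c)^2/\prod_c(\sigma_j^2-\alpha_c)^2$ in place of $\sigma_{j+1}^{4i}/\sigma_j^{4i}$, and similarly for the $\mathbf{G}_3$ term with $\sigma_{k+1}$ replacing $\sigma_{j+1}$. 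The norm bound on $\mathbf{F}$ in \eqref{lemma7:2} is obtained by $\|\mathbf{F}\|\le \|\mathbf{G}_1^\dagger\|\cdot\|\mathbf{\Sigma}_1^{-1}\prod_c(\mathbf{\Sigma}_1^2-\alpha_c\mathbf{I})^{-1}\|\le \sqrt{l}\beta/\prod_c(\sigma_j^2-\alpha_c)$, with the $\sqrt{l}\beta$ factor coming from the same Gaussian tail bound on $\|\mathbf{G}_1^\dagger\|$ that yields the factor in \cref{lemma:5}. The probability estimate $1-\phi$ is inherited verbatim, since it depends only on the distribution of $\mathbf{G}$ and on $l,k,j,n$, not on whether we use powers or shifted products.

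The main obstacle I anticipate is bookkeeping rather than new ideas: one has to verify that, when writing the residual $\mathbf{F}\mathbf{G}\mathbf{A}^\mathrm{T}\prod_c(\mathbf{A}\mathbf{A}^\mathrm{T}-\alpha_c\mathbf{I})-\mathbf{A}^\mathrm{T}$ in the SVD basis, each off-diagonal coupling term really factors as (ratio of shifted spectra)$\times$(Gaussian norm)$\times$(diagonal singular value), without the shifts re-introducing any cross terms among the three block partitions. The fact that $\mathbf{\Sigma}\mathbf{\Sigma}^\mathrm{T}-\alpha_c\mathbf{I}$ is diagonal in the same basis as $\mathbf{\Sigma}\mathbf{\Sigma}^\mathrm{T}$ is exactly what makes this bookkeeping go through, and the condition $\alpha_c\le\sigma_l^2/2$ ensures monotonicity of the mapping $\sigma_j^2\mapsto \sigma_j^2-\alpha_c$ on the relevant index range, so that splitting at indices $j$ and $k$ remains meaningful after the shift.
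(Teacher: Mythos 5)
Your proposal follows essentially the same route as the paper's proof: both adapt the construction of Lemma~A.2 in \cite{rokhlin2010randomized} by working in the SVD basis where the shifted factors are simultaneously diagonal, partitioning $\mathbf{\Sigma}$ and $\mathbf{GV}$ into the same three blocks, defining $\mathbf{F}$ through the pseudoinverse of the leading $l\times j$ Gaussian block composed with $\prod_{c}(\mathbf{S}_1^2-\alpha_c\mathbf{I})^{-1}$, and invoking \cref{proposition:1} together with $0\le\alpha_c\le\sigma_l^2/2$ to replace each $\sigma^{2i}$ by the corresponding shifted product. The norm and probability estimates you cite are exactly the ones the paper reuses from \cite{rokhlin2010randomized}, so the argument is correct and matches the paper's.
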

		
		\begin{proof}
			We following the proof of Lemma~A.2 in \cite{rokhlin2010randomized} to construct $\mathbf{F}$ satisfying (\ref{lemma7:1}) and (\ref{lemma7:2}). The formulation of SVD of $\mathbf{A}$ is $\mathbf{A}=\mathbf{U\Sigma V}^\mathrm{T}$, where $\mathbf{U}$ is a real $m\times n$ matrix whose columns are orthonormal, $\mathbf{\Sigma}$ is a real diagonal $n\times n$ matrix, and $\mathbf{V}$ is a real unitary $n\times n$ matrix, such that $\mathbf{\Sigma}(y,y) =\sigma_y$ for $y=1,2,\cdots,n$, and $\sigma_y$ is the $y$-th largest singular value of $\mathbf{A}$. Then, several auxiliary matrices $\mathbf{H}$, $\mathbf{\Lambda}$, $\mathbf{\Gamma}$, $\mathbf{S}_1$, $\mathbf{S}_2$ and $\mathbf{S}_3$ are defined to assist the proof. Suppose $\mathbf{H}$ is the leftmost $l\times j$ block of the $l\times n$ matrix $\mathbf{GV}$, $\mathbf{\Lambda}$ is the $l\times (k-j)$ block of $\mathbf{GV}$ whose first column is the $(k+1)$-th column of $\mathbf{GV}$, and $\mathbf{\Gamma}$ is the rightmost $l\times (n-k)$ block of $\mathbf{GV}$, so that
			\begin{equation}\label{p5:1}
				\mathbf{GV} = [\mathbf{H}, \mathbf{\Lambda}, \mathbf{\Gamma}].
			\end{equation}
			Combing the fact that $\mathbf{V}$ is real and unitary, and the fact that the entries of $\mathbf{G}$ are i.i.d. Gaussian random variables of zero mean and unit variance. Then, the entries of $\mathbf{H}$ are also i.i.d. Gaussian random variables of zero mean and unit variance, as are the entries of $\mathbf{\Lambda}$, and as are the entries of $\mathbf{\Gamma}$. Suppose $\mathbf{H}^\dagger$ is the real $j\times l$ matrix given by the formula
			\begin{equation}\label{p5:2}
				\mathbf{H}^\dagger=(\mathbf{H}^\mathrm{T}\mathbf{H})^{-1}\mathbf{H}^\mathrm{T},
			\end{equation} 
			where $\mathbf{H}^\mathrm{T}\mathbf{H}$ is invertible with high probability due to Lemma 2.7 in \cite{rokhlin2010randomized}. Suppose $\mathbf{S}_1$ to be the leftmost uppermost $j\times j$ block of $\mathbf{\Sigma}$, $\mathbf{S}_2$ is the leftmost uppermost $(k-j)\times (k-j)$ block of $\mathbf{\Sigma}$ whose leftmost uppermost entry is the entry in the $(j+1)$-th row and $(j+1)$-th column of $\mathbf{\Sigma}$, and $\mathbf{S}_3$ is the rightmost lowermost $(n-k)\times (n-k)$ block of $\mathbf{\Sigma}$, so that
			\begin{equation}\label{p5:3}
				\mathbf{\Sigma} = \left(\begin{matrix}
					\mathbf{S}_1 & \mathbf{0} & \mathbf{0}\\
					\mathbf{0} & \mathbf{S}_2 & \mathbf{0}\\
					\mathbf{0} & \mathbf{0} & \mathbf{S}_3\\
				\end{matrix}\right).
			\end{equation}
			Suppose $\mathbf{P}$ is \atn{a real $n\times l$ matrix given by}
			\begin{equation}\label{p5:4}
				\mathbf{P}=\left(\begin{matrix} \prod_{c=1}^{i}(\mathbf{S}_1^2-\alpha_{i-c+1}\mathbf{I})^{-1}\mathbf{H}^\dagger \\ \mathbf{0} \\ \mathbf{0} \\ \end{matrix}\right).
			\end{equation}		
			Finally,  $\mathbf{F}$ is constructed as an $n\times l$ matrix given by
			\begin{equation}\label{p5:5}
				\mathbf{F} = \mathbf{VP} = \mathbf{V}\left(\begin{matrix} \prod_{c=1}^{i}(\mathbf{S}_1^2-\alpha_{i-c+1}\mathbf{I})^{-1}\mathbf{H}^\dagger \\ \mathbf{0} \\ \mathbf{0} \\ \end{matrix}\right).
			\end{equation}
			According to the proof of Lemma~A.2 in \cite{rokhlin2010randomized},
			\begin{equation}\label{p5:6}
				\Vert\mathbf{H}^{\dagger}\Vert\le \sqrt{l}\beta,
			\end{equation}
			with probability not less than
			\begin{equation}\label{p5:7}
				1 -\frac{1}{\sqrt{2\pi(l-j+1)}}\left(\frac{e}{(l-j+1)\beta}\right)^{l-j+1}.
			\end{equation}
			Combining (\ref{p5:3}), (\ref{p5:5}), (\ref{p5:6}), the fact $\mathbf{\Sigma}$ is zero off its main diagonal, and the fact that $\mathbf{V}$ is unitary yields (\ref{lemma7:2}).
			
			We now show that $\mathbf{F}$ defined in (\ref{p5:5}) satisfies (\ref{lemma7:1}).
			
			Suppose $\hat{\mathbf{S}}_1=\prod_{c=1}^{i}(\mathbf{S}_1^2-\alpha_{i-c+1}\mathbf{I})^{-1}$. Combing (\ref{p5:1}) and (\ref{p5:5}) yields

			\begin{equation}\label{p5:8}
				\begin{aligned}
					&\mathbf{F}\mathbf{G}\mathbf{A}^\mathrm{T}\prod_{c=1}^{i}(\mathbf{A}\mathbf{A}^\mathrm{T}-\alpha_c\mathbf{I}) - \mathbf{A}^\mathrm{T}=\\
					&\mathbf{V}\left(\begin{matrix} \hat{\mathbf{S}}_1\mathbf{H}^\dagger \\ \mathbf{0} \\ \mathbf{0} \\ \end{matrix}\right) [\mathbf{H}, \mathbf{\Lambda}, \mathbf{\Gamma}]\mathbf{\Sigma}\prod_{c=1}^{i}(\mathbf{\Sigma}^2-\alpha_c\mathbf{I})\mathbf{U}^{\mathrm{T}}-\mathbf{V\Sigma U}^\mathrm{T}=\\
					&\mathbf{V}\left(\begin{matrix} \hat{\mathbf{S}}_1 & \hat{\mathbf{S}}_1\mathbf{H}^\dagger\mathbf{\Lambda} & \hat{\mathbf{S}}_1\mathbf{H}^\dagger\mathbf{\Gamma} \\ \mathbf{0} &\mathbf{0}&\mathbf{0}\\ \mathbf{0} & \mathbf{0} & \mathbf{0} \\ \end{matrix}\right)\prod_{c=1}^{i}(\mathbf{\Sigma}^2-\alpha_c\mathbf{I})\mathbf{\Sigma}\mathbf{U}^\mathrm{T}-\mathbf{V\Sigma U}^\mathrm{T}=\\
					&\mathbf{V}\left(\left(\begin{matrix} \mathbf{I} & \hat{\mathbf{S}}_1\mathbf{H}^\dagger\mathbf{\Lambda}\prod_{c=1}^{i}(\mathbf{S}_2^2-\alpha_c\mathbf{I}) & \hat{\mathbf{S}}_1\mathbf{H}^\dagger\mathbf{\Gamma}\prod_{c=1}^{i}(\mathbf{S}_3^2-\alpha_c\mathbf{I}) \\ \mathbf{0} & \mathbf{0}&\mathbf{0}\\ \mathbf{0} & \mathbf{0} & \mathbf{0} \\ \end{matrix}\right)-\mathbf{I}\right)\mathbf{\Sigma}\mathbf{U}^\mathrm{T}=\\
					&\mathbf{V}\left(\begin{matrix} \mathbf{0} & \hat{\mathbf{S}}_1\mathbf{H}^\dagger\mathbf{\Lambda}\prod_{c=1}^{i}(\mathbf{S}_2^2-\alpha_c\mathbf{I})\mathbf{S}_2 & \hat{\mathbf{S}}_1\mathbf{H}^\dagger\mathbf{\Gamma}\prod_{c=1}^{i}(\mathbf{S}_3^2-\alpha_c\mathbf{I})\mathbf{S}_3 \\ \mathbf{0} & -\mathbf{S}_2&\mathbf{0}\\ \mathbf{0} & \mathbf{0} & -\mathbf{S}_3 \\ \end{matrix}\right)\mathbf{U}^\mathrm{T}.
				\end{aligned}
			\end{equation}			
			Furthermore,
			\begin{equation}\label{p5:9}
				\begin{aligned}
					&\Bigg\Vert\left(\begin{matrix} \mathbf{0} & \hat{\mathbf{S}}_1\mathbf{H}^\dagger\mathbf{\Lambda}\prod_{c=1}^{i}(\mathbf{S}_2^2-\alpha_c\mathbf{I})\mathbf{S}_2 & \hat{\mathbf{S}}_1\mathbf{H}^\dagger\mathbf{\Gamma}\prod_{c=1}^{i}(\mathbf{S}_3^2-\alpha_c\mathbf{I})\mathbf{S}_3 \\ \mathbf{0} & -\mathbf{S}_2&\mathbf{0}\\ \mathbf{0} & \mathbf{0} & -\mathbf{S}_3 \\ \end{matrix}\right)\Bigg\Vert^2\le\\
					& \Big\Vert\hat{\mathbf{S}}_1\mathbf{H}^\dagger\mathbf{\Lambda}\prod_{c=1}^{i}(\mathbf{S}_2^2-\alpha_c\mathbf{I})\mathbf{S}_2\Big\Vert^2+\Big\Vert\hat{\mathbf{S}}_1\mathbf{H}^\dagger\mathbf{\Gamma}\prod_{c=1}^{i}(\mathbf{S}_3^2-\alpha_c\mathbf{I})\mathbf{S}_3\Big\Vert^2+\Vert\mathbf{S}_2\Vert^2 + \Vert\mathbf{S}_3\Vert^2.
				\end{aligned}
			\end{equation}		
			Moreover, 
			\begin{equation}\label{p5:10}
				\Big\Vert\hat{\mathbf{S}}_1\mathbf{H}^\dagger\mathbf{\Lambda}\prod_{c=1}^{i}(\mathbf{S}_2^2-\alpha_c\mathbf{I})\mathbf{S}_2\Big\Vert \le \Vert\hat{\mathbf{S}}_1\Vert\Vert\mathbf{H}^\dagger\Vert\Vert\mathbf{\Lambda}\Vert\Vert\prod_{c=1}^{i}(\mathbf{S}_2^2-\alpha_c\mathbf{I})\Vert\Vert\mathbf{S}_2\Vert,
			\end{equation}
			and
			\begin{equation}\label{p5:11}
				\Big\Vert\hat{\mathbf{S}}_1\mathbf{H}^\dagger\mathbf{\Gamma}\prod_{c=1}^{i}(\mathbf{S}_3^2-\alpha_c\mathbf{I})\mathbf{S}_3\Big\Vert \le \Vert\hat{\mathbf{S}}_1\Vert\Vert\mathbf{H}^\dagger\Vert\Vert\mathbf{\Gamma}\Vert\Vert\prod_{c=1}^{i}(\mathbf{S}_3^2-\alpha_c\mathbf{I})\Vert\Vert\mathbf{S}_3\Vert.
			\end{equation}		
			Then, according to \cref{proposition:1} and the fact $0\le\alpha_c\le\sigma_l^2/2$ for any $c\le i$ we can get
			\begin{equation}\label{p5:12}
				\Vert\hat{\mathbf{S}}_1\Vert \le \prod_{c=1}^i\frac{1}{\sigma_j^2-\alpha_c},
			\end{equation}
			\begin{equation}\label{p5:13}
				\Vert\prod_{c=1}^{i}(\mathbf{S}_2^2-\alpha_c\mathbf{I})\Vert \le \prod_{c=1}^i(\sigma_{j+1}^2-\alpha_c),
			\end{equation}
			\begin{equation}\label{p5:14}
				\Vert\prod_{c=1}^{i}(\mathbf{S}_3^2-\alpha_c\mathbf{I})\Vert \le \prod_{c=1}^i(\sigma_{k+1}^2-\alpha_c),
			\end{equation}
			\begin{equation}\label{p5:15}
				\Vert\mathbf{S}_2\Vert \le \sigma_{j+1},
			\end{equation}
			\begin{equation}\label{p5:16}
				\Vert\mathbf{S}_3\Vert \le \sigma_{k+1}.
			\end{equation}		
			Then, combining (\ref{p5:9})$\sim$(\ref{p5:16}) and the fact that the columns of $\mathbf{U}$ and $\mathbf{V}$ are orthonormal, yields
			\begin{equation}\label{p5:17}
				\begin{aligned}
					&\Vert\mathbf{F}\mathbf{G}\mathbf{A}^\mathrm{T}\prod_{c=1}^{i}(\mathbf{A}\mathbf{A}^\mathrm{T}-\alpha_c\mathbf{I}) - \mathbf{A}^\mathrm{T}\Vert^2\le\\
					&\left(\Vert\mathbf{H}^\dagger\Vert^2\Vert\mathbf{\Lambda}\Vert^2\prod_{c=1}^i(\frac{\sigma_{j+1}^2-\alpha_c}{\sigma_j^2-\alpha_c})^2+1\right)\sigma_{j+1}^2 +\left(\Vert\mathbf{H}^\dagger\Vert^2\Vert\mathbf{\Gamma}\Vert^2\prod_{c=1}^i(\frac{\sigma_{k+1}^2-\alpha_c}{\sigma_j^2-\alpha_c})^2+1\right)\sigma_{k+1}^2.
				\end{aligned}
			\end{equation}	
			Combing Lemma 2.6 in \cite{rokhlin2010randomized} and the fact that the entries of $\mathbf{\Lambda}$ are i.i.d. Gaussian random variables of zero mean and unit variance, as are the entries of $\mathbf{\Gamma}$, \atn{yields}
			\begin{equation}\label{p5:18}
				\Vert\mathbf{\Lambda}\Vert\le \sqrt{2l}\gamma,~\Vert\mathbf{\Gamma}\Vert\le \sqrt{2(n-k)}\gamma,
			\end{equation}
			with probability not less than
			\begin{equation}\label{p5:19}
				\begin{aligned}
					&1-\frac{1}{4(\gamma^2-1)\sqrt{\pi(n-k)\gamma^2}}\left(\frac{2\gamma^2}{e^{\gamma^2-1}}\right)^{(n-k)}
					-\frac{1}{4(\gamma^2-1)\sqrt{\pi l\gamma^2}}\left(\frac{2\gamma^2}{e^{\gamma^2-1}}\right)^{l}.
				\end{aligned}
			\end{equation}	
			Finally, combining (\ref{p5:6}), (\ref{p5:17}) and (\ref{p5:18}) yields
			\begin{equation}
				\begin{aligned}
					&\Vert\mathbf{F}\mathbf{G}\mathbf{A}^\mathrm{T}\prod_{c=1}^{i}(\mathbf{A}\mathbf{A}^\mathrm{T}-\alpha_c\mathbf{I})-\mathbf{A}^\mathrm{T}\Vert^2\\ 
					&\le \left(2l^2\beta^2\gamma^2\prod_{c=1}^{i}\left(\frac{\sigma_{j+1}^2-\alpha_c}{\sigma_j^2-\alpha_c}\right)^{2}\!+\!1\right)\sigma_{j+1}^2+
					\left(2l(n-k)\beta^2\gamma^2\prod_{c=1}^{i}\left(\frac{\sigma_{k+1}^2-\alpha_c}{\sigma_{j}^2-\alpha_c}\right)^{2}\!+\!1\right)\sigma_{k+1}^2,
				\end{aligned}
			\end{equation}
			with a probability not less than $1-\phi$.
			
		\end{proof}
		
		Now, we can prove \cref{theorem:1}. 
		\begin{proof}
			Suppose $\mathbf{I}$ is an identity matrix. In Alg.~2, \atnn{the result matrix $\mathbf{Q}$ is formed by a set of orthonormal basis of $range(\prod_{c=1}^{p}(\mathbf{A}\mathbf{A}^\mathrm{T}-\alpha_{p-c+1}\mathbf{I})\mathbf{A}\mathbf{\Omega})$}, where $\mathbf{\Omega}$ in size $n\times l$ is a Gaussian random matrix and $p$ is the power parameter. 
			\atnn{Therefore, each column of $\prod_{c=1}^{p}(\mathbf{A}\mathbf{A}^\mathrm{T}-\alpha_{p-c+1}\mathbf{I})\mathbf{A}\mathbf{\Omega}=(\mathbf{\Omega}^\mathrm{T}\mathbf{A}^\mathrm{T}\prod_{c=1}^{p}(\mathbf{A}\mathbf{A}^\mathrm{T}-\alpha_c\mathbf{I}))^\mathrm{T}$ can be expressed as a linear combination of $\mathbf{Q}$'s columns. This means there is a matrix $\mathbf{R}$ in $l\times l$ satisfying $(\mathbf{\Omega}^\mathrm{T}\mathbf{A}^\mathrm{T}\prod_{c=1}^{p}(\mathbf{A}\mathbf{A}^\mathrm{T}-\alpha_c\mathbf{I}))^\mathrm{T}=\mathbf{QR}$. So,
			\begin{equation}\label{p6:1}
				\begin{aligned}
					\Vert\mathbf{QR}-(\mathbf{\Omega}^\mathrm{T}\mathbf{A}^\mathrm{T}\prod_{c=1}^{p}(\mathbf{A}\mathbf{A}^\mathrm{T}-\alpha_c\mathbf{I}))^\mathrm{T}\Vert 
					=0.
				\end{aligned}
			\end{equation}
			Suppose $\mathbf{F}$ is a real $m\times l$ matrix, $k=l$, $p=i$, $\mathbf{G}=\mathbf{\Omega}^\mathrm{T}$ and $\mathbf{R}$ is the one in (\ref{p6:1}).} According to the proof of Lemma~A.1 in \cite{rokhlin2010randomized}, we have
			\begin{equation}\label{p6:2}
				\begin{aligned}
					\Vert\mathbf{Q}\mathbf{Q}^\mathrm{T}\mathbf{A}-\mathbf{A}\Vert &\le
					2\Vert\mathbf{F}\mathbf{\Omega}^\mathrm{T}\mathbf{A}^\mathrm{T}\prod_{c=1}^{p}(\mathbf{A}\mathbf{A}^\mathrm{T}-\alpha_c\mathbf{I})-\mathbf{A}^\mathrm{T}\Vert + 2 \Vert\mathbf{F}\Vert\Vert\mathbf{QR}-(\mathbf{\Omega}^\mathrm{T}\mathbf{A}^\mathrm{T}\prod_{c=1}^{p}(\mathbf{A}\mathbf{A}^\mathrm{T}-\alpha_c\mathbf{I}))^\mathrm{T}\Vert\\
					&=2\Vert\mathbf{F}\mathbf{\Omega}^\mathrm{T}\mathbf{A}^\mathrm{T}\prod_{c=1}^{p}(\mathbf{A}\mathbf{A}^\mathrm{T}-\alpha_c\mathbf{I})-\mathbf{A}^\mathrm{T}\Vert.
				\end{aligned}
			\end{equation}
			According to \cref{lemma:7} there exists $\mathbf{F}$ such that \atnn{(\ref{lemma7:1}) holds. Combing it and (\ref{p6:2}) we derive (\ref{theorem1:1}) in} \cref{theorem:1}.
			
		\end{proof}
		
		\begin{corollary}
			If $\mathbf{A}\in\mathbb{R}^{m\times n}~(m\ge n)$, Gaussian random matrix $\mathbf{\Omega}\in\mathbb{R}^{n\times l}$ and parameters $p$, $l$, $j$, $k$, $\beta$ and $\gamma$ are the same, the bound of approximation error of Alg.~2's result given in right hand side of (\ref{theorem1:1}) in \cref{theorem:1} is not larger than that of Alg.~1's result given in right hand side of (\ref{lemma6:1}) in \cref{lemma:6}. Moreover, the equality of the both bounds only occurs when $\sigma_j=\sigma_{j+1}=\cdots=\sigma_{k+1}$.
		\end{corollary}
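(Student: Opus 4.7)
The plan is to compare the two bounds term-by-term, reducing the corollary to a single scalar inequality applied inside each factor of the products over $c=1,\dots,p$. First I would rewrite the Lemma~\ref{lemma:6} bound in a product form compatible with Theorem~\ref{theorem:1}: since $\left(\frac{\sigma_{j+1}}{\sigma_j}\right)^{4p}=\prod_{c=1}^{p}\left(\frac{\sigma_{j+1}^{2}}{\sigma_{j}^{2}}\right)^{2}$ and likewise for $\sigma_{k+1}/\sigma_{j}$, the two right-hand sides differ only in that the factors $\sigma_{j+1}^{2}/\sigma_{j}^{2}$ and $\sigma_{k+1}^{2}/\sigma_{j}^{2}$ are replaced respectively by the shifted ratios $(\sigma_{j+1}^{2}-\alpha_{c})/(\sigma_{j}^{2}-\alpha_{c})$ and $(\sigma_{k+1}^{2}-\alpha_{c})/(\sigma_{j}^{2}-\alpha_{c})$.

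The key scalar lemma is: if $0\le a\le b$ and $0\le\alpha<b$, then $\frac{a-\alpha}{b-\alpha}\le\frac{a}{b}$, with equality iff $\alpha=0$ or $a=b$. This is a one-line cross-multiplication giving $ab-\alpha b\le ab-\alpha a$, equivalent to $\alpha(b-a)\ge 0$. I would apply this with $(a,b)=(\sigma_{j+1}^{2},\sigma_{j}^{2})$ and with $(a,b)=(\sigma_{k+1}^{2},\sigma_{j}^{2})$ for each $c$. To invoke it, I must check $0\le\alpha_{c}<\sigma_{j}^{2}$; this follows because Alg.~\ref{alg2} constructs each $\alpha_{c}$ as a nonnegative quantity bounded above by $\sigma_{l}(\mathbf{AA}^{\mathrm{T}})/2$ (Proposition~\ref{proposition:1} together with Proposition~\ref{proposition:2}), and since $j<k<l$ we have $\sigma_{j}^{2}\ge\sigma_{l}(\mathbf{AA}^{\mathrm{T}})>\alpha_{c}$.

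Multiplying the scalar inequalities over $c=1,\dots,p$ and squaring yields
\[
\prod_{c=1}^{p}\!\left(\frac{\sigma_{j+1}^{2}-\alpha_{c}}{\sigma_{j}^{2}-\alpha_{c}}\right)^{2}\!\le\!\left(\frac{\sigma_{j+1}}{\sigma_{j}}\right)^{4p},
\qquad
\prod_{c=1}^{p}\!\left(\frac{\sigma_{k+1}^{2}-\alpha_{c}}{\sigma_{j}^{2}-\alpha_{c}}\right)^{2}\!\le\!\left(\frac{\sigma_{k+1}}{\sigma_{j}}\right)^{4p}.
\]
Plugging these into the right-hand side of \eqref{theorem1:1} (the weights $2l^{2}\beta^{2}\gamma^{2}$, $2l(n-k)\beta^{2}\gamma^{2}$ are nonnegative, and the ``$+1$'' and $\sigma_{j+1}^{2}$, $\sigma_{k+1}^{2}$ factors are unchanged), and using monotonicity of $\sqrt{\,\cdot\,}$, gives that the Alg.~\ref{alg2} bound is no larger than the Alg.~\ref{alg1} bound of Lemma~\ref{lemma:6}.

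For the equality characterization, equality in the combined bound forces equality in \emph{every} factor of \emph{both} products (since all coefficients are strictly positive and both terms appear additively under the square root). By the equality case of the scalar lemma, each factor $c$ requires $\alpha_{c}=0$ or $\sigma_{j}=\sigma_{j+1}$ (first product) and $\alpha_{c}=0$ or $\sigma_{j}=\sigma_{k+1}$ (second product). Since by construction the dynamic scheme updates $\alpha_{c}$ to a strictly positive value as soon as $\sigma_{l}(\mathbf{AA}^{\mathrm{T}}\mathbf{Q})>0$, the only way equality can hold in general is $\sigma_{j}=\sigma_{j+1}=\cdots=\sigma_{k+1}$, as claimed. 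The main thing to be careful about is the $\sigma_{j}^{2}-\alpha_{c}>0$ check, which is the only place the hypothesis on the magnitude of $\alpha_{c}$ enters; the rest is purely algebraic.
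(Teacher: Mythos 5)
Your proof is correct and follows essentially the same route as the paper's: both reduce the comparison to the per-factor inequality $\frac{a-\alpha_c}{b-\alpha_c}\le\frac{a}{b}$ for $0\le a\le b$ and $0\le\alpha_c<b$ (guaranteed by $\alpha_c\le\sigma_l^2/2$), take the product over $c=1,\dots,p$, and derive the equality characterization from the observations that $\alpha_1=0$ but the dynamic shifts are positive thereafter, and that $\sigma_j=\sigma_{k+1}$ forces $\sigma_j=\sigma_{j+1}=\cdots=\sigma_{k+1}$. Your write-up is merely a bit more explicit about the cross-multiplication step and the positivity check $\sigma_j^2-\alpha_c>0$.
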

		
		\begin{proof}
			Because for any $c$, $1<c\le p$, we have positive number $\alpha_c\le\sigma_l^2/2$ according to Alg.~2 ($\alpha_1=0$ in Alg.~2), we can derive
			\begin{equation}
				\frac{\sigma_{j+1}^2-\alpha_c}{\sigma_j^2-\alpha_c} \le \frac{\sigma_{j+1}^2}{\sigma_j^2}, ~ ~ \frac{\sigma_{k+1}^2-\alpha_c}{\sigma_j^2-\alpha_c} \le \frac{\sigma_{k+1}^2}{\sigma_j^2}.
			\end{equation}
			Therefore,
			\begin{equation}
				\prod_{c=1}^{p}\left(\frac{\sigma_{j+1}^2-\alpha_c}{\sigma_j^2-\alpha_c}\right)^2 \le \left(\frac{\sigma_{j+1}}{\sigma_j}\right)^{4p}, ~ ~
				\prod_{c=1}^{p}\left(\frac{\sigma_{k+1}^2-\alpha_c}{\sigma_j^2-\alpha_c}\right)^2 \le \left(\frac{\sigma_{k+1}}{\sigma_j}\right)^{4p},
			\end{equation}
			which suggests that the bound of approximation error of Alg.~2's result given in right hand side of (\ref{theorem1:1}) in Theorem~1 is not larger than that of Alg.~1's result given in right hand side of (\ref{lemma6:1}) in Lemma~6.
			
			Because $j<j+1< k+1$, $\sigma_j =\sigma_{k+1}$ only occurs when $\sigma_j=\sigma_{j+1}=\cdots=\sigma_{k+1}$. Otherwise, $\sigma_j >\sigma_{k+1}$ which derives
			\begin{equation}
				 \frac{\sigma_{k+1}^2-\alpha_c}{\sigma_j^2-\alpha_c} < \frac{\sigma_{k+1}^2}{\sigma_j^2}, ~  ~ \prod_{c=1}^{p}\left(\frac{\sigma_{k+1}^2-\alpha_c}{\sigma_j^2-\alpha_c}\right)^2 < \left(\frac{\sigma_{k+1}}{\sigma_j}\right)^{4p} 
			\end{equation}
		and the bound of approximation error given in (\ref{theorem1:1}) is less than that given in (\ref{lemma6:1}). Therefore, the equality of the both bounds only occurs when $\sigma_j=\sigma_{j+1}=\cdots=\sigma_{k+1}$.
		\end{proof}

    	\section{More comparison results}\label{secA3}

\begin{figure}[t]
	\setlength{\abovecaptionskip}{0.1 cm}
	\setlength{\belowcaptionskip}{0.1 cm}
	\centering
	\subfigure[SNAP \atn{($\Delta p=1$)}] { \label{figfrpca:1} 
		\includegraphics[width=3.5cm,trim=15 5 20 10,clip]{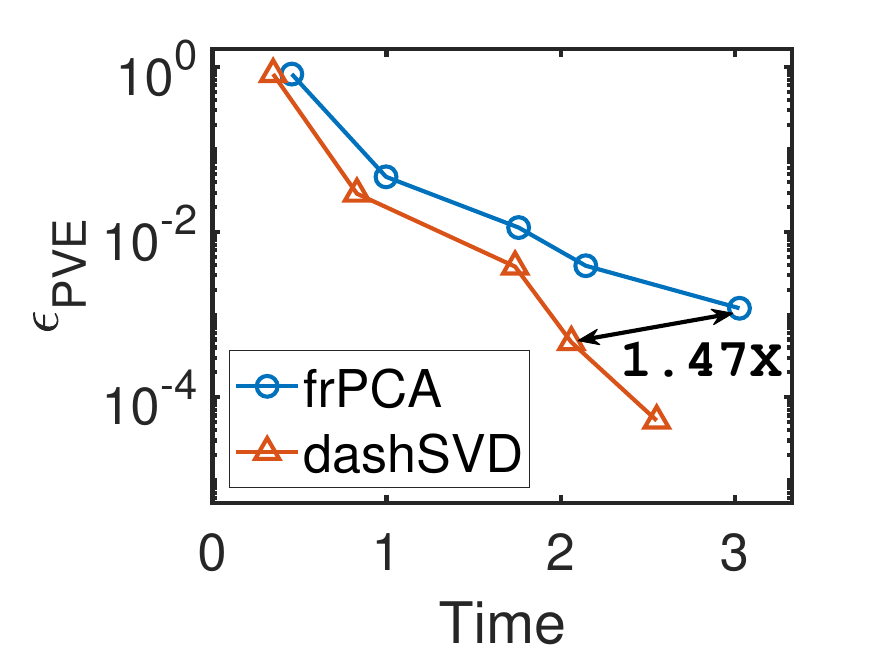} 
	}
	\subfigure[MovieLens \atn{($\Delta p=1$)}] { \label{figfrpca:2} 
		\includegraphics[width=3.5cm,trim=15 5 20 10,clip]{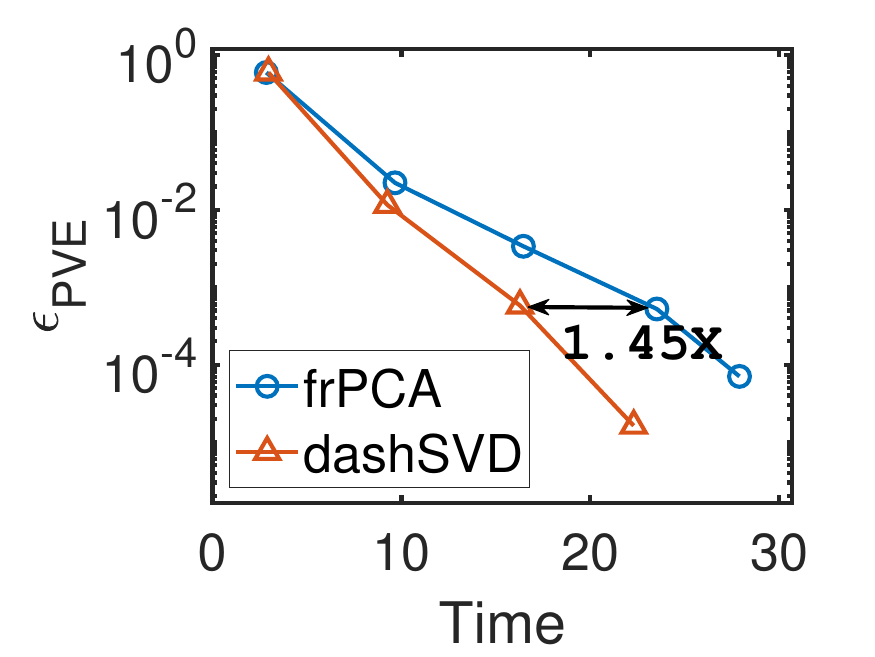} 
	}
	\subfigure[Rucci1 \atn{($\Delta p=2$)}] { \label{figfrpca:3} 
		\includegraphics[width=3.5cm,trim=15 5 20 10,clip]{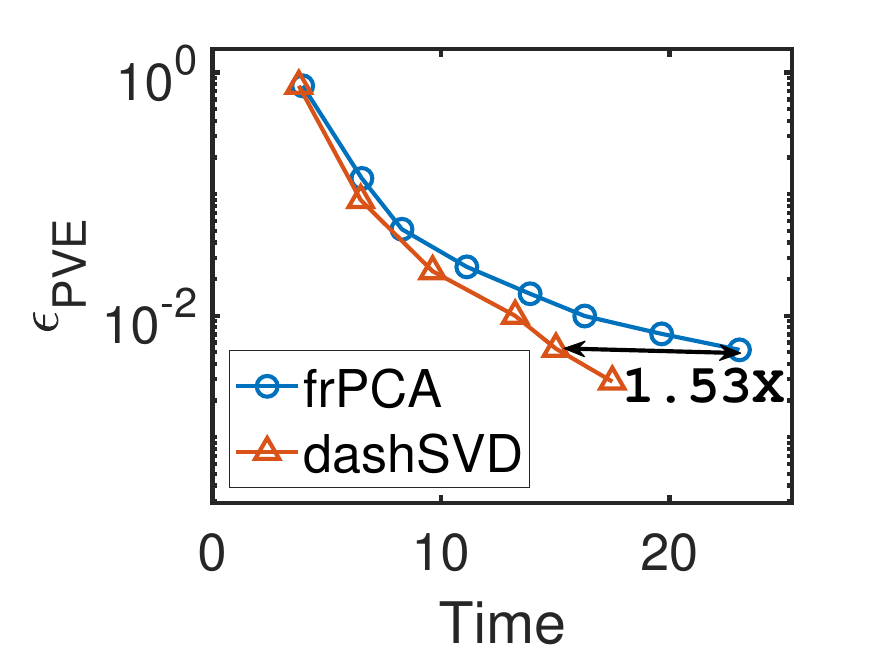} 
	}
	\subfigure[Aminer \atn{($\Delta p=1$)}] { \label{figfrpca:4} 
		\includegraphics[width=3.5cm,trim=15 5 20 10,clip]{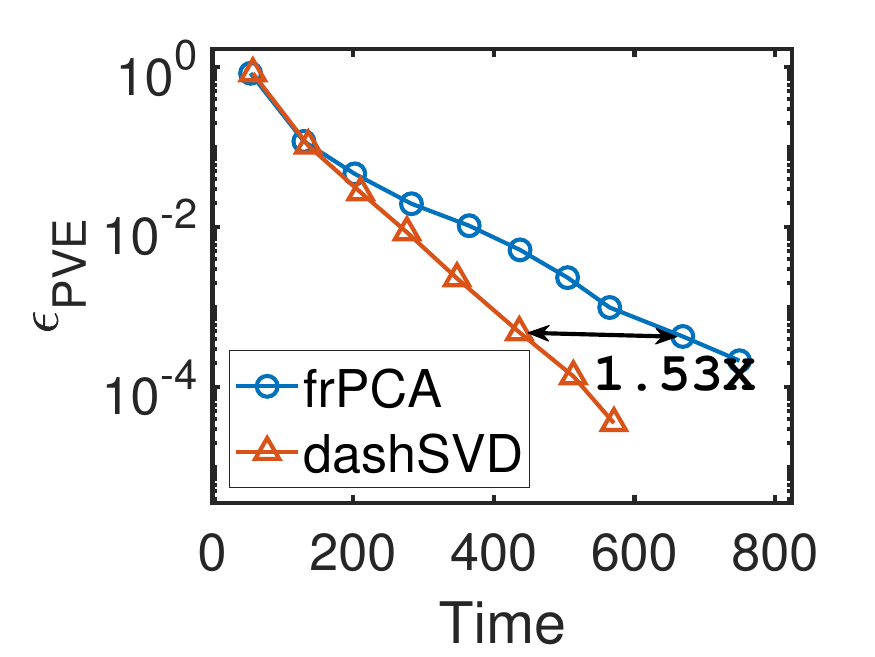} 
	}
	\subfigure[uk-2005 \atn{($\Delta p=1$)}] { \label{figfrpca:5} 
		\includegraphics[width=3.5cm,trim=15 5 20 10,clip]{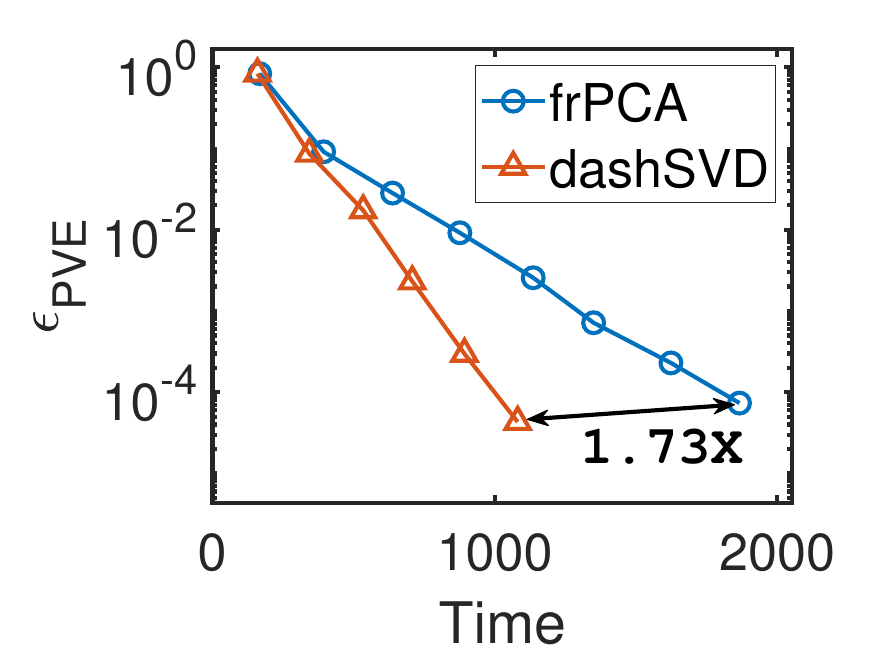} 
	}
	\subfigure[sk-2005 \atn{($\Delta p=1$)}] { \label{figfrpca:6} 
		\includegraphics[width=3.5cm,trim=15 5 20 10,clip]{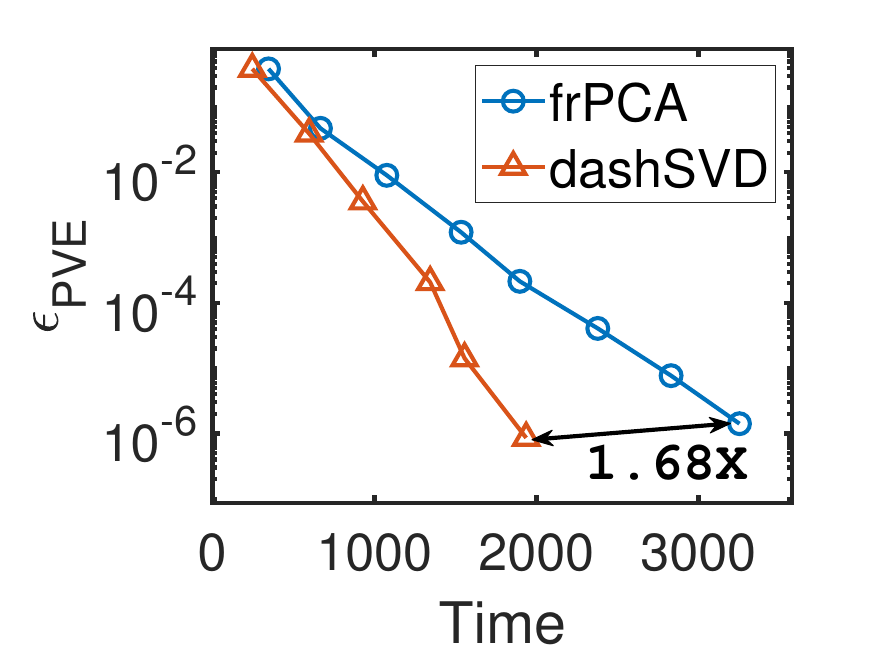} 
	}
	\caption{\notice{The $\epsilon_{\textrm{PVE}}$ vs. runtime curves of dashSVD (Alg. 4) and frPCA with the number of power iterations $p$ increased from 0 and in  a step $\Delta p$. The unit of time is second.}
	}
	\label{fig:frpca}
	\centering
\end{figure} 

\atnn{ Firstly, we present the comparison results of our dashSVD program in C with frPCA \citep{pmlr-v95-feng18a} to further validate the shifted power iteration for the sparse matrices. The both programs are parallely executed with 16 threads. 
 Because for frPCA the  number of power iterations ($p$) should be specified, we use the dashSVD without accuracy control, i.e. Alg. 4, in this experiment. It is actually an improved version of frPCA, as they use the same accelerating skills for handling sparse matrices.
By setting different power parameter $p$, we do the comprehensive investigation on the relationship of runtime and $\epsilon_{\textrm{PVE}}$ for the two algorithms when producing results with similar $\epsilon_{\textrm{PVE}}$.
 $p$ is increased with a step $\Delta p$.
 The results for the sparse matrices SNAP, MovieLens, Rucci1, Aminer, uk-2005 and sk-2005 are shown in Fig.~\ref{fig:frpca}. The curves in Fig.~\ref{fig:frpca} show that dashSVD can reach more accurate results during the same runtime compared with frPCA for all tested matrices, and the speed-up ratio is from 1.45X to 1.73X. Besides, the memory cost of dashSVD on SNAP, MovieLens, rucci1, Aminer, uk-2005 and sk-2005 is 0.30 GB, 0.96 GB, 4.66 GB, 31.5 GB, 147 GB and 200 GB, which is smaller than 0.35 GB, 0.98 GB, 4.69 GB, 31.5 GB, 162 GB and 220 GB of frPCA on these matrices.}

		\begin{figure}[t]
			\setlength{\abovecaptionskip}{0.1 cm}
			\setlength{\belowcaptionskip}{0.1 cm}
			\centering
			\subfigure[MovieLens] {
				\begin{minipage}{14cm}
					\centering
					\includegraphics[width=3.4cm, trim=103 265 115 273,clip]{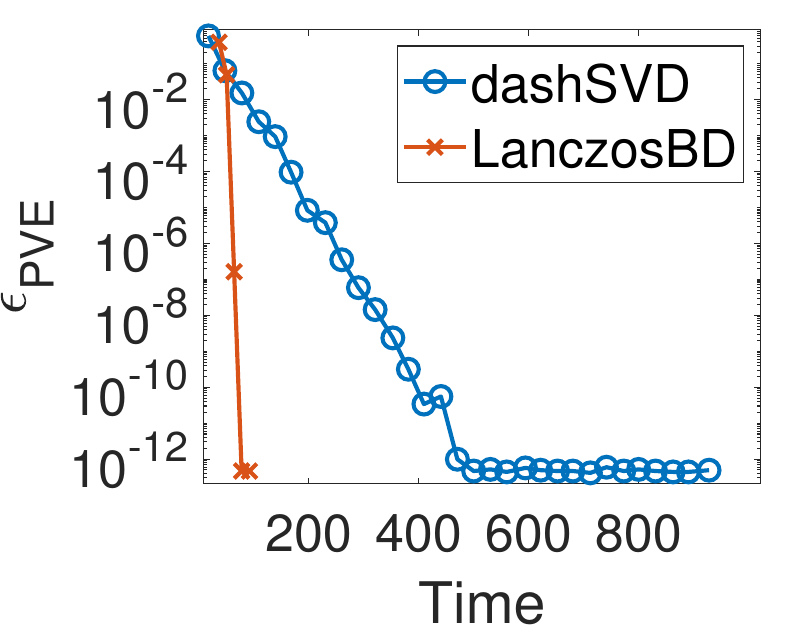}
					\includegraphics[width=3.4cm, trim=103 265 115 273,clip]{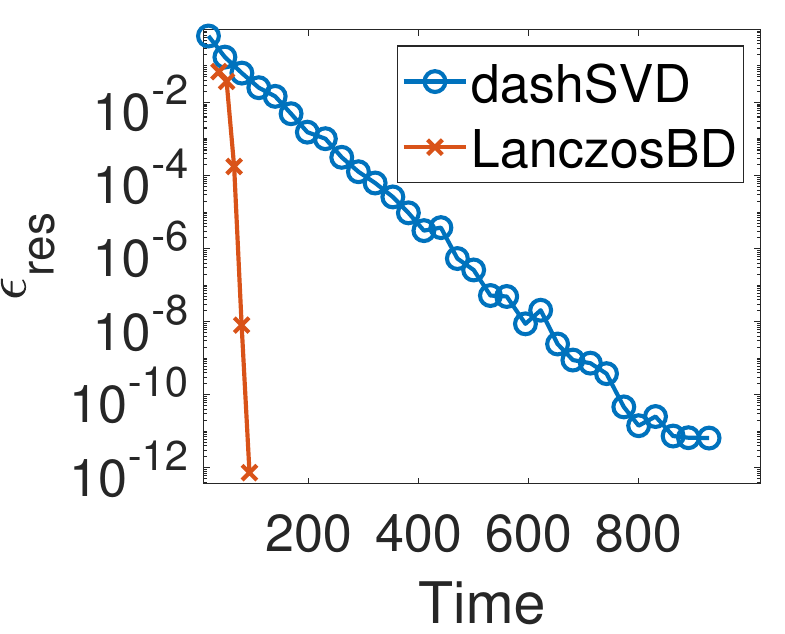}
					\includegraphics[width=3.4cm, trim=103 265 115 273,clip]{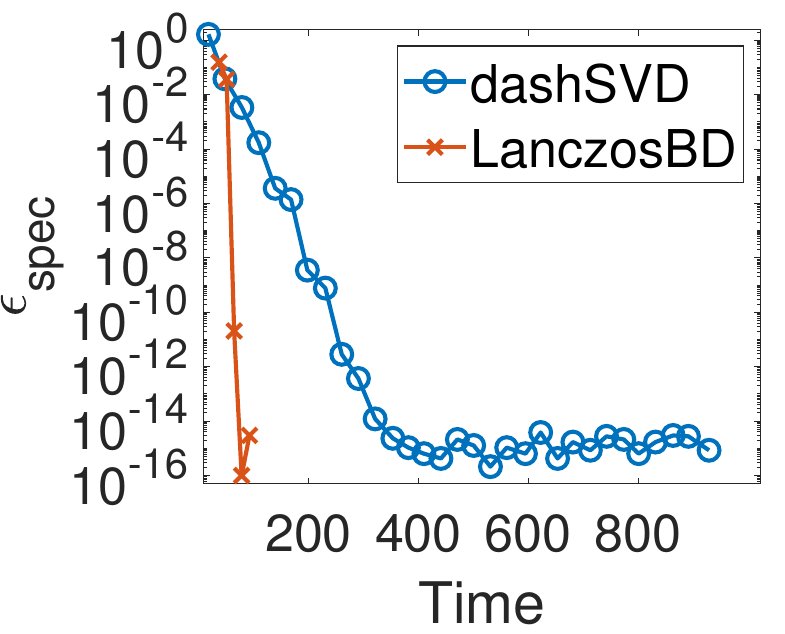}
					\includegraphics[width=3.4cm, trim=103 265 115 273,clip]{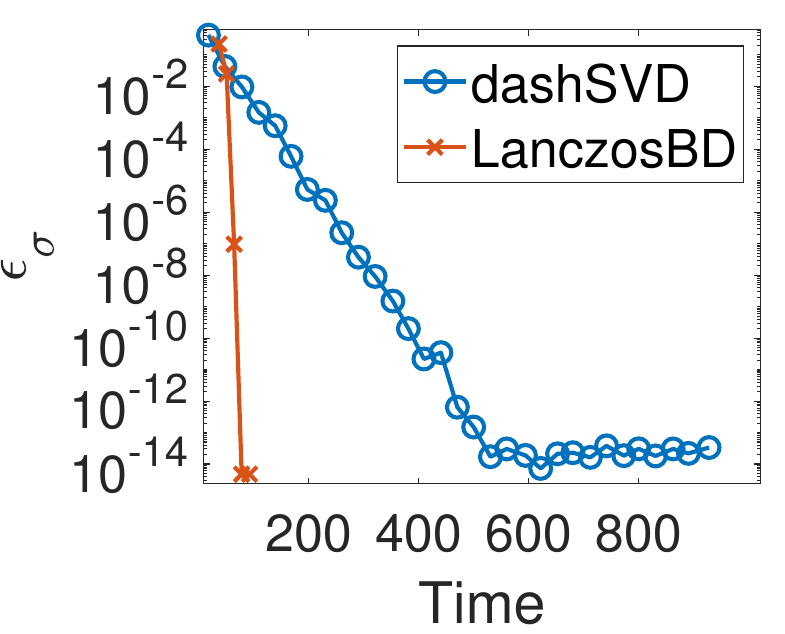}
				\end{minipage}
			}\\[-1ex]
			\subfigure[Rucci1] {
				\begin{minipage}{14cm}
					\centering
					\includegraphics[width=3.4cm, trim=103 265 115 273,clip]{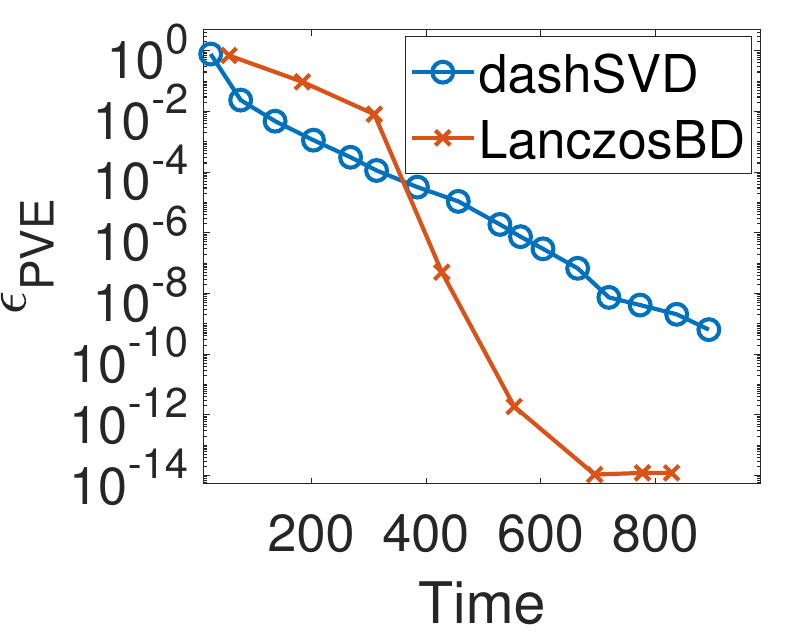}
					\includegraphics[width=3.4cm, trim=103 265 115 273,clip]{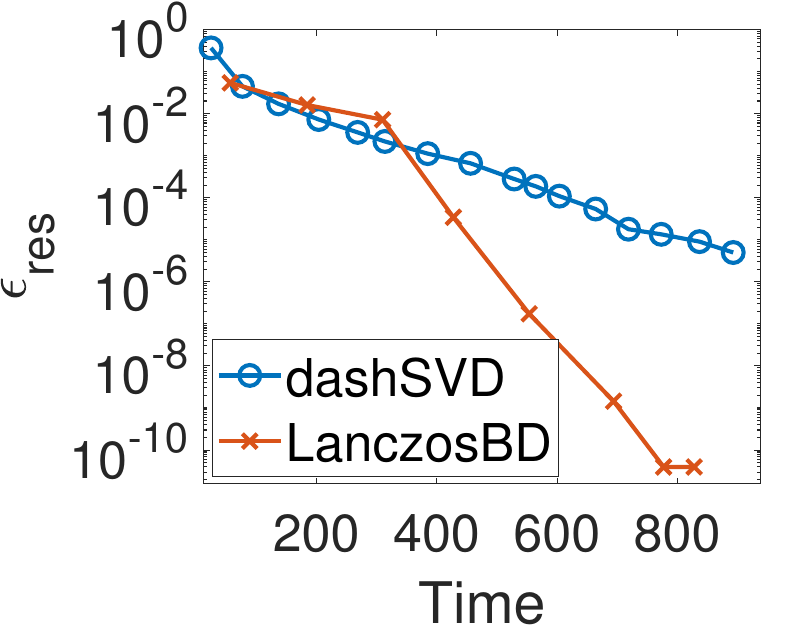}
					\includegraphics[width=3.4cm, trim=103 265 115 273,clip]{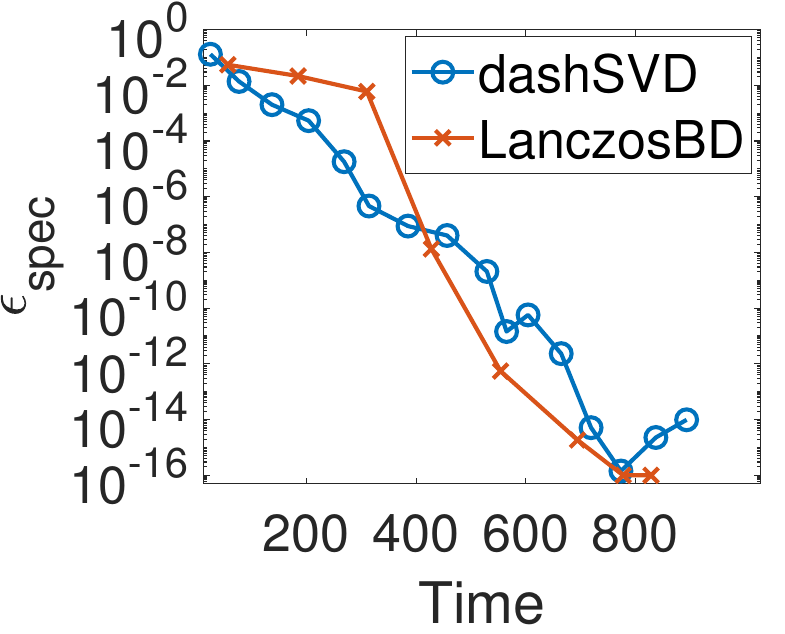}
					\includegraphics[width=3.4cm, trim=103 265 115 273,clip]{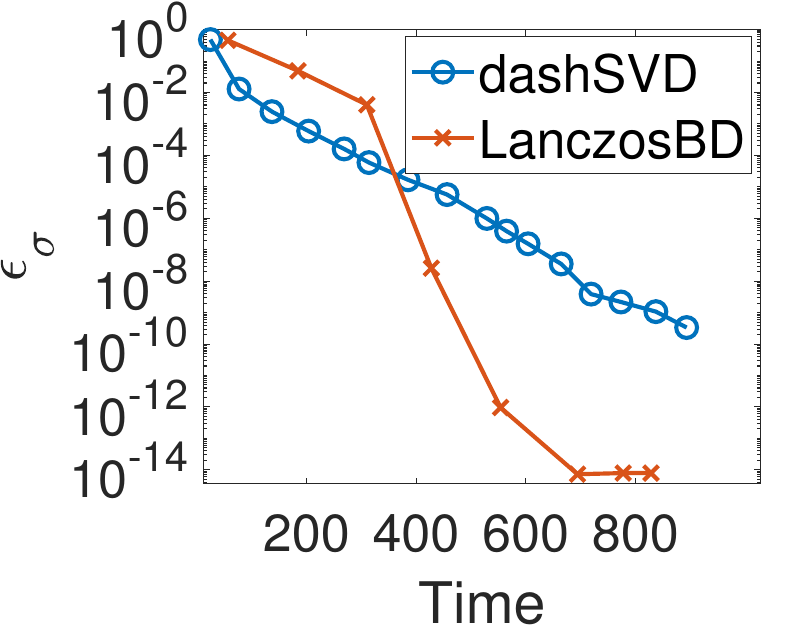}
				\end{minipage}
			}\\[-1ex]
			\subfigure[Aminer] {
				\begin{minipage}{14cm}
					\centering
					\includegraphics[width=3.4cm, trim=103 265 115 273,clip]{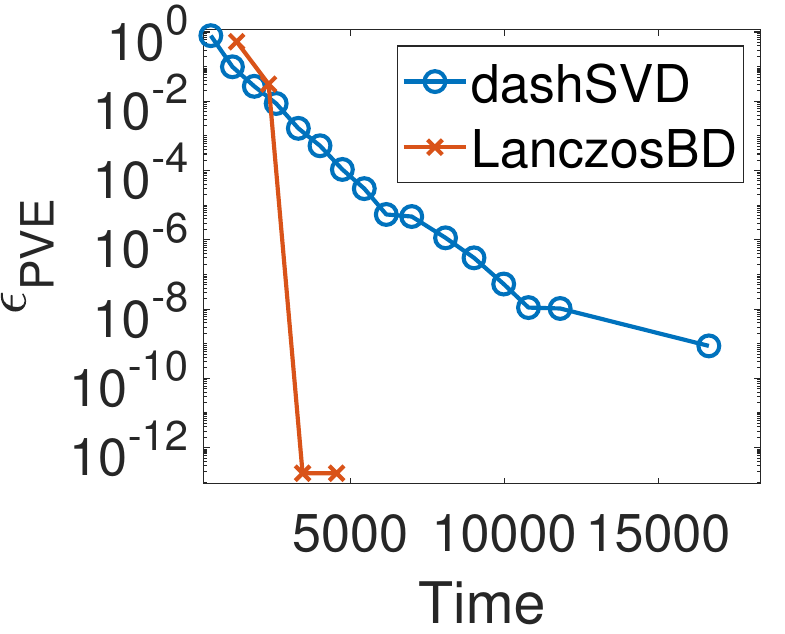}
					\includegraphics[width=3.4cm, trim=103 265 115 273,clip]{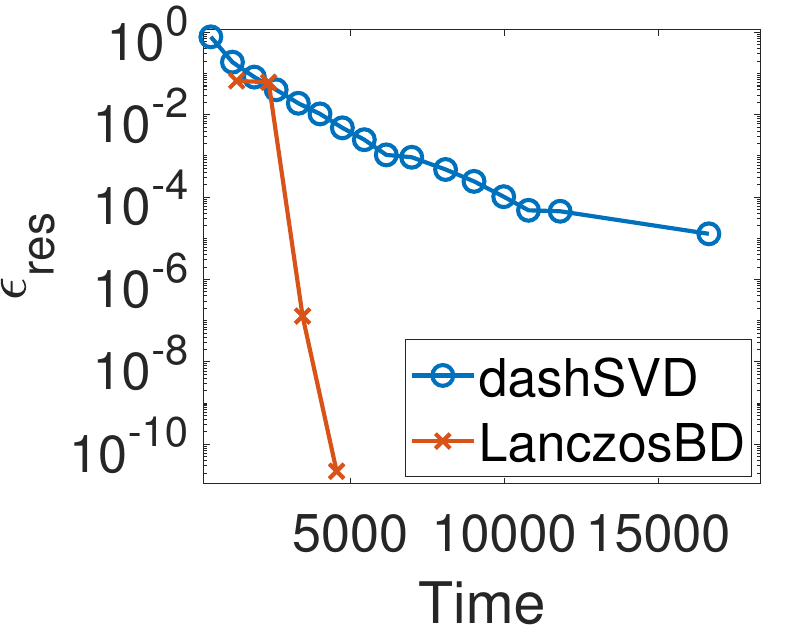}
					\includegraphics[width=3.4cm, trim=103 265 115 273,clip]{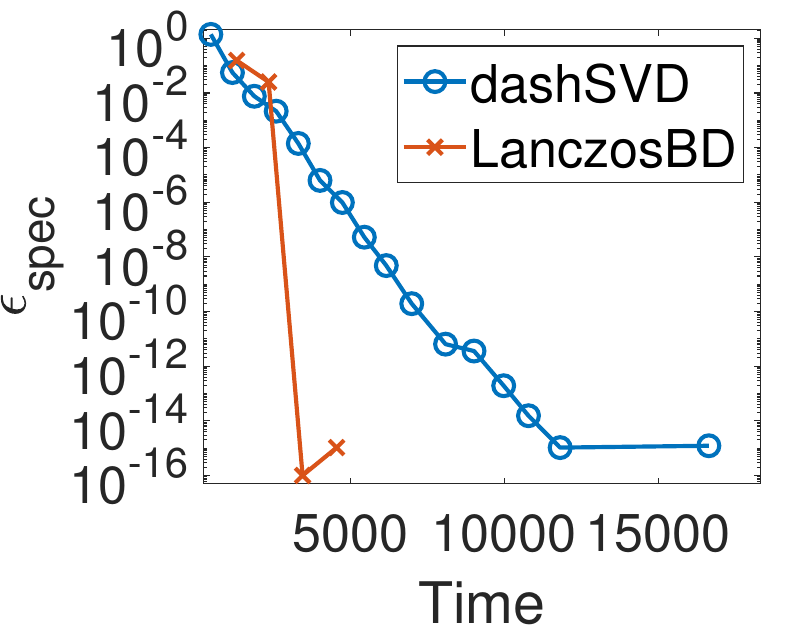}
					\includegraphics[width=3.4cm, trim=103 265 115 273,clip]{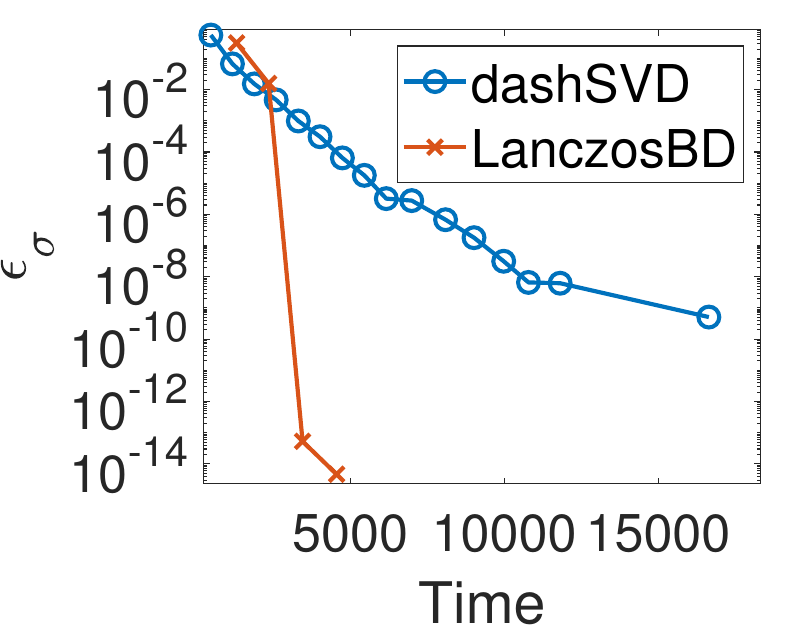}
				\end{minipage}
			}\\[-1ex]
			\subfigure[sk-2005] {
				\begin{minipage}{14cm}
					\centering
					\includegraphics[width=3.4cm, trim=103 265 115 273,clip]{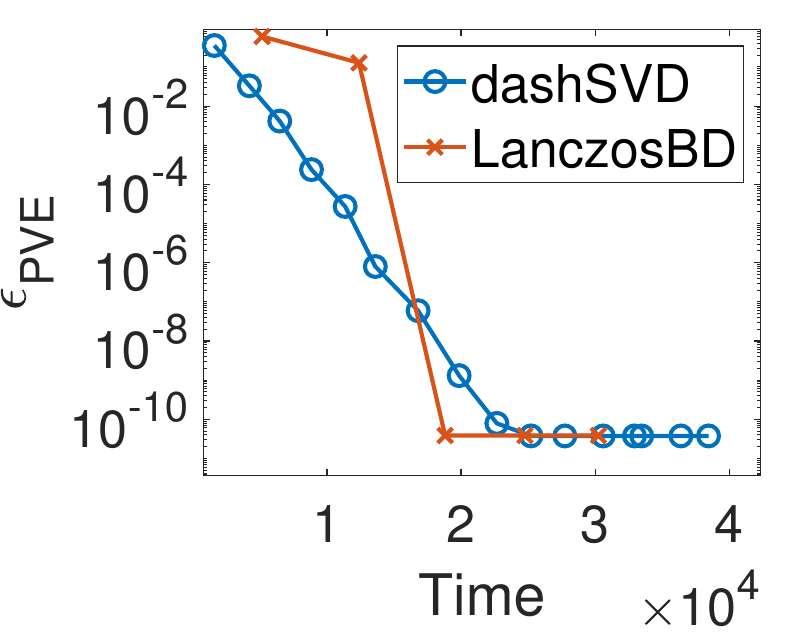}
					\includegraphics[width=3.4cm, trim=103 265 115 273,clip]{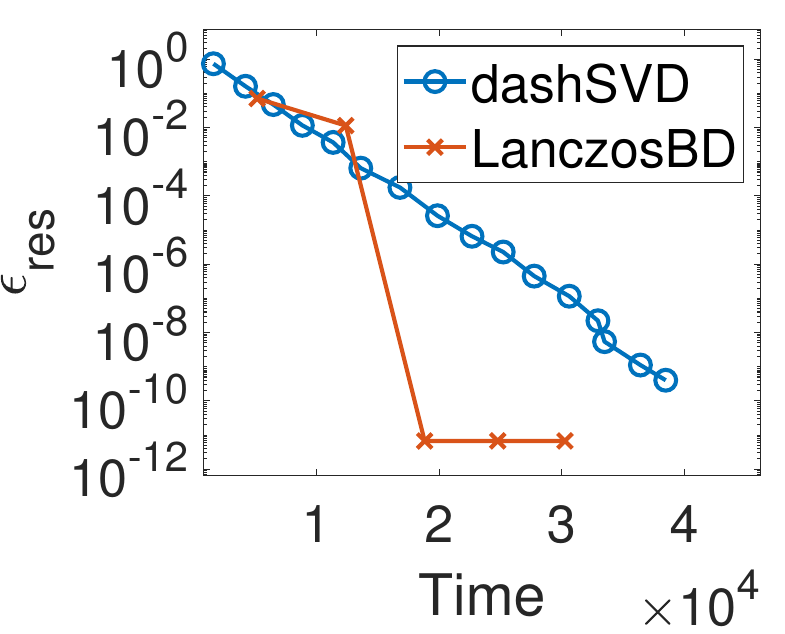}
					\includegraphics[width=3.4cm, trim=103 265 115 273,clip]{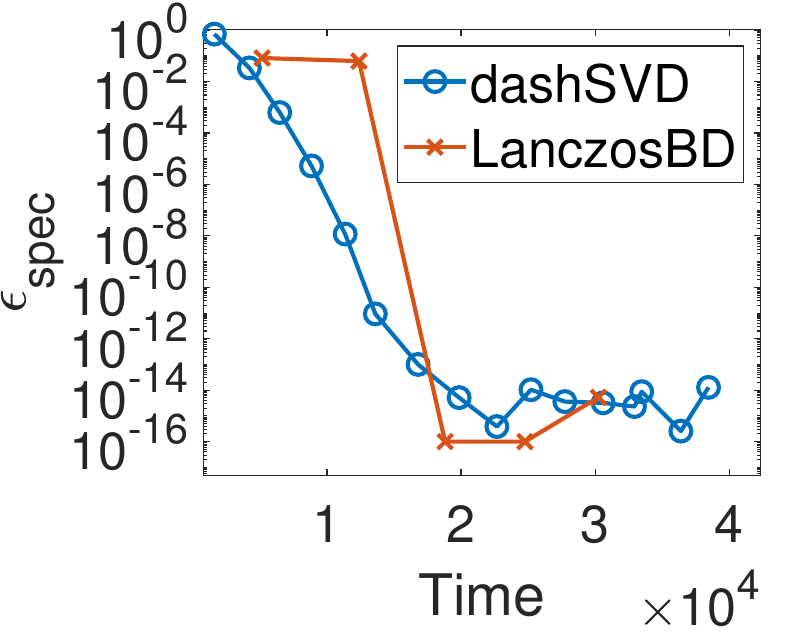}
					\includegraphics[width=3.4cm, trim=103 265 115 273,clip]{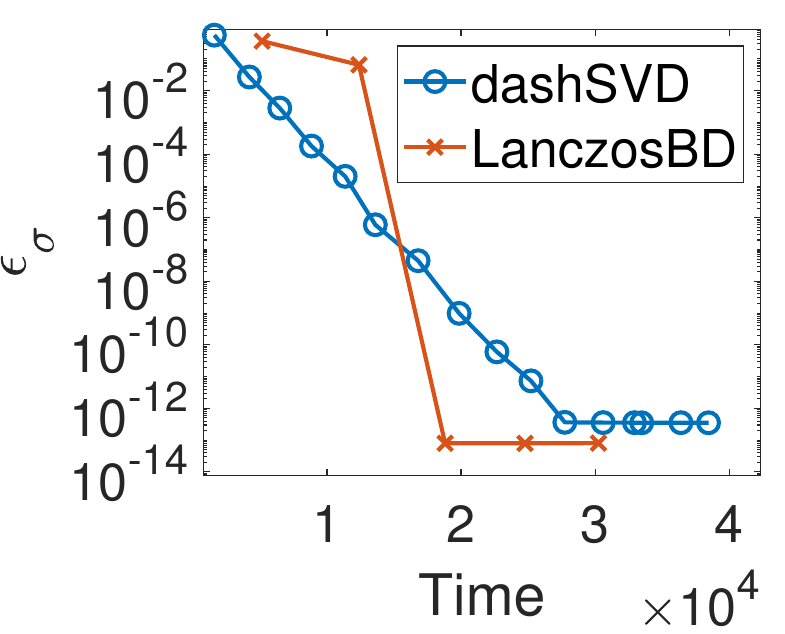}
				\end{minipage}
			}
			\caption{\notice{More error vs. time curves of dashSVD and \texttt{LanczosBD} in \texttt{svds} with single-thread computing ($k=100$). The unit of time is second. }}
			\label{fig:lbd_more}
			\centering
		\end{figure}

				The results of \texttt{LanczosBD} and dashSVD with single-thread computing for MovieLens, Rucci1, Aminer and sk-2005 are plotted in Fig.~\ref{fig:lbd_more}. \atnn{They} show that our algorithm always costs less runtime than \texttt{LanczosBD} when producing results with not high accuracy \atnn{in} error metrics $\epsilon_{\textrm{PVE}}$, $\epsilon_{\textrm{spec}}$ and $\epsilon_{\sigma}$. For the datasets with slower decaying trend of singular values (Rucci1 and sk-2005), our dashSVD gains more acceleration than other datasets. For the matrices with faster decaying trend of singular values, e.g. MovieLens and Aminer, \texttt{LanczosBD} performs better than dashSVD on the $\epsilon_{\textrm{res}}$ criterion.
				
				\begin{figure}[!t]
					\setlength{\abovecaptionskip}{0.01 cm}
					\setlength{\belowcaptionskip}{0.01 cm}
					\centering
					\subfigure[MovieLens] {
						\begin{minipage} {14cm}
							\centering
							\includegraphics[width=3.4cm, trim=103 265 115 273,clip]{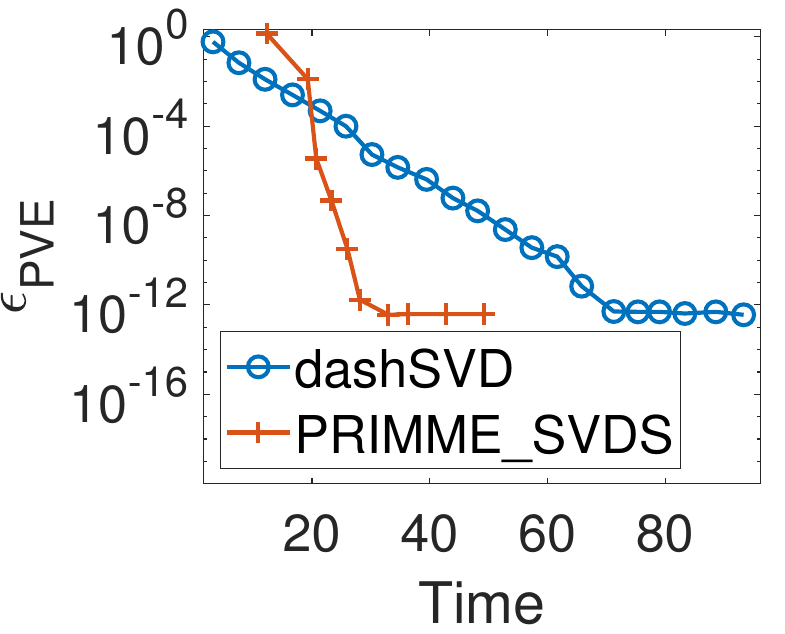} 
							\includegraphics[width=3.4cm, trim=103 265 115 273,clip]{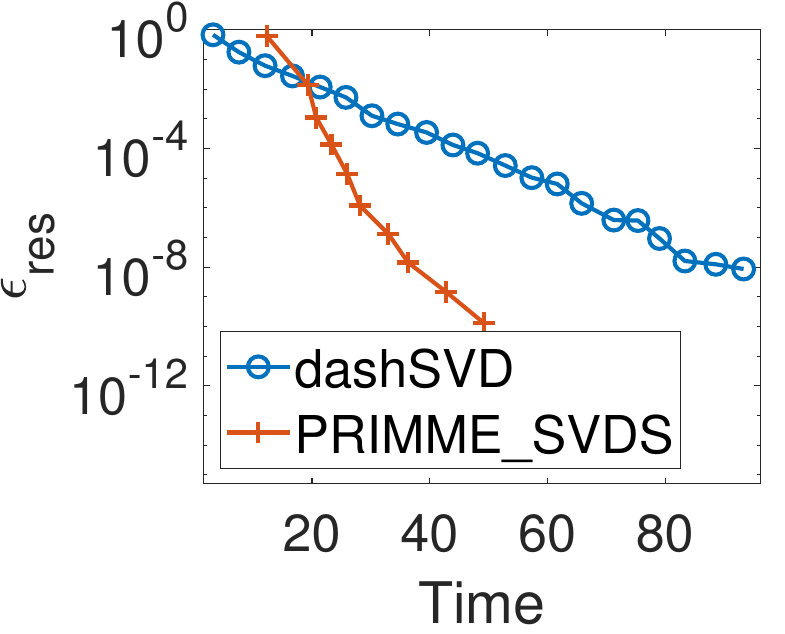} 
							\includegraphics[width=3.4cm, trim=103 265 115 273,clip]{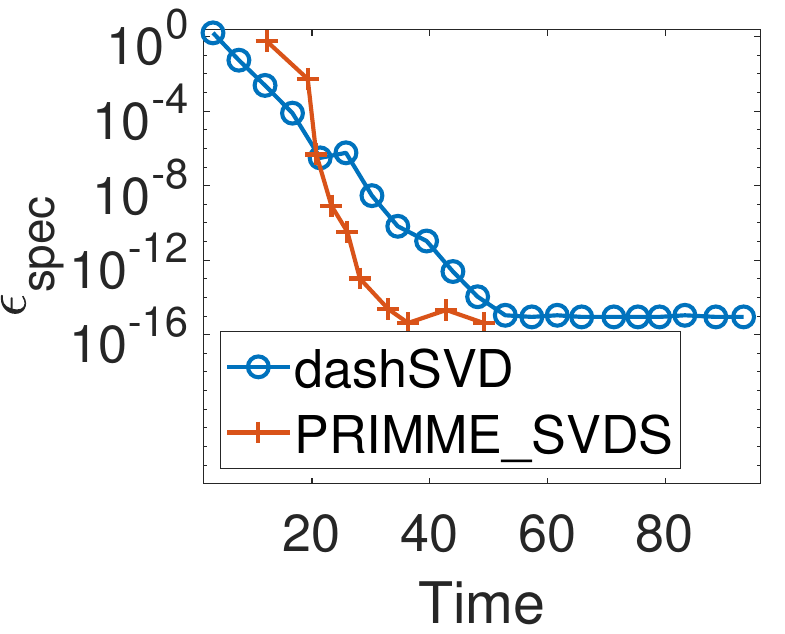} 
							\includegraphics[width=3.4cm, trim=103 265 115 273,clip]{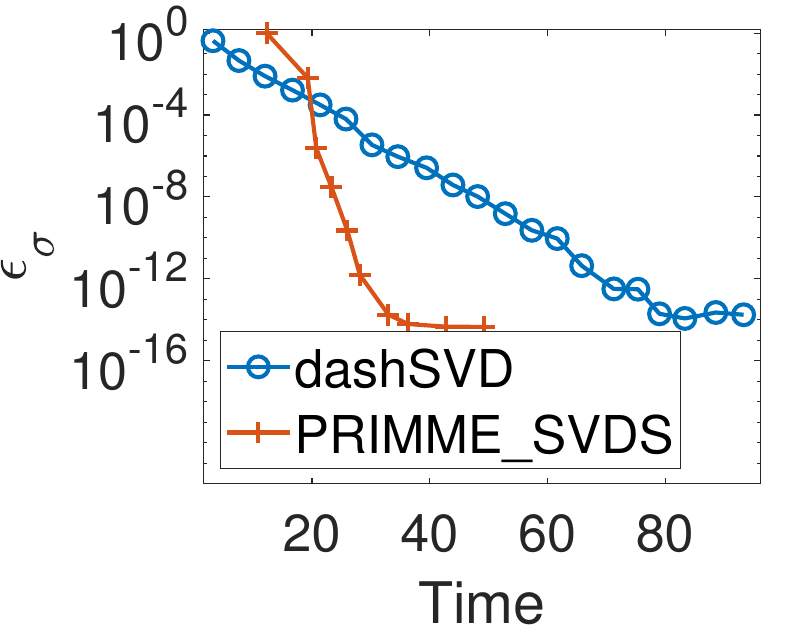} 
						\end{minipage}
					}\\[-1ex]
					\subfigure[Rucci1] {
						\begin{minipage} {14cm}
							\centering
							\includegraphics[width=3.4cm, trim=103 265 115 273,clip]{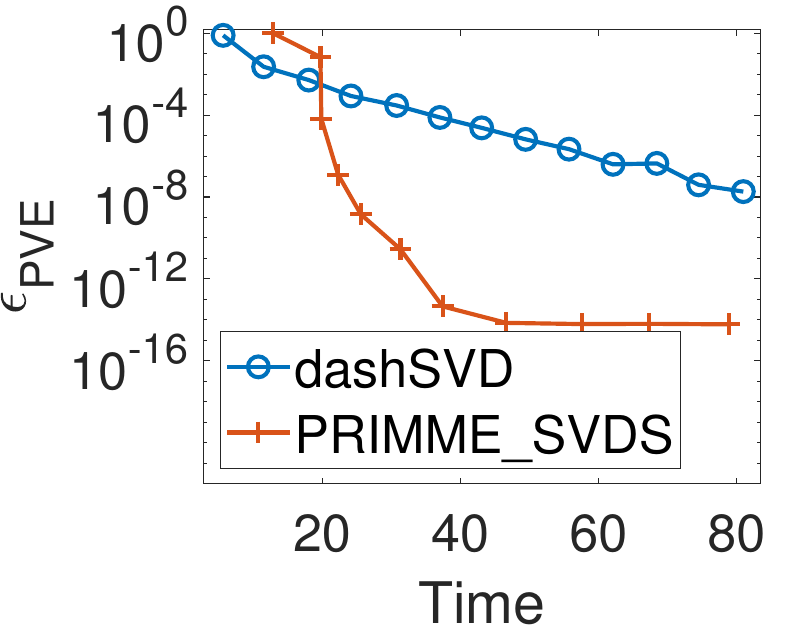} 
							\includegraphics[width=3.4cm, trim=103 265 115 273,clip]{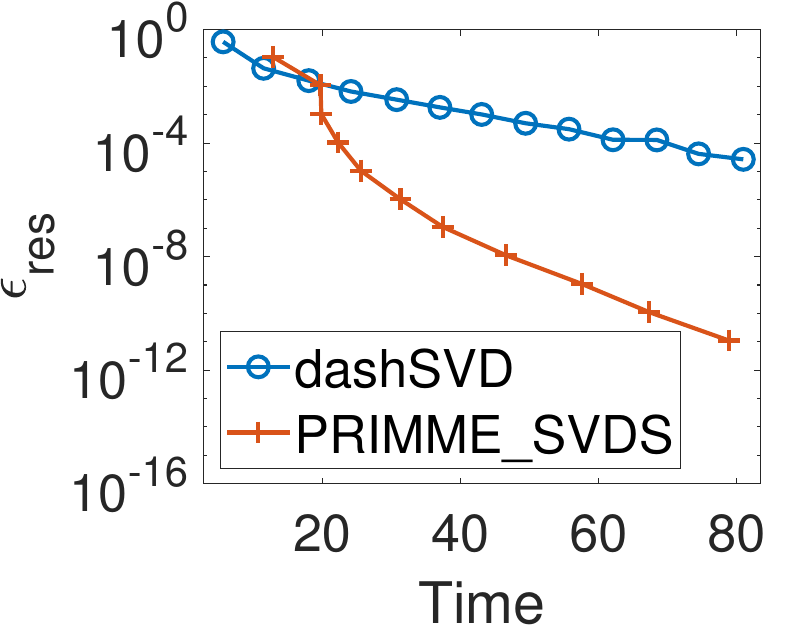} 
							\includegraphics[width=3.4cm, trim=103 265 115 273,clip]{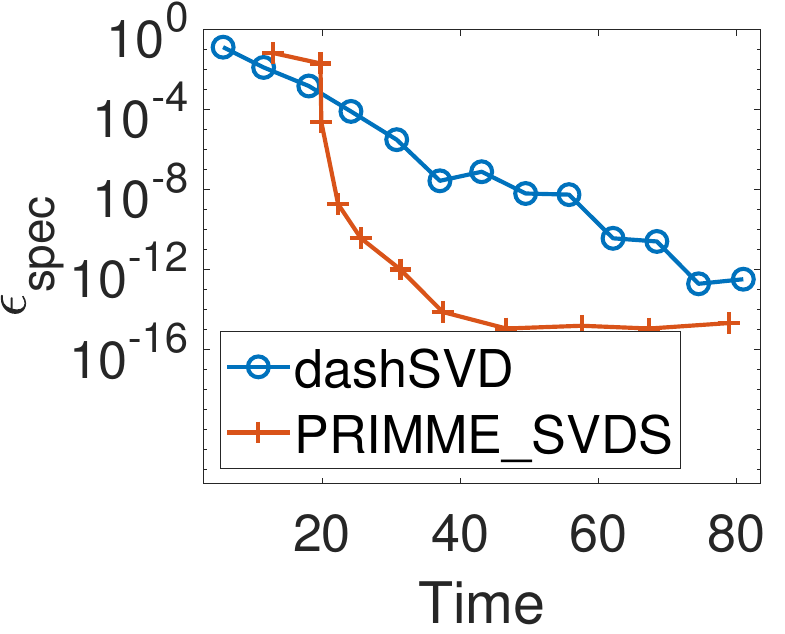} 
							\includegraphics[width=3.4cm, trim=103 265 115 273,clip]{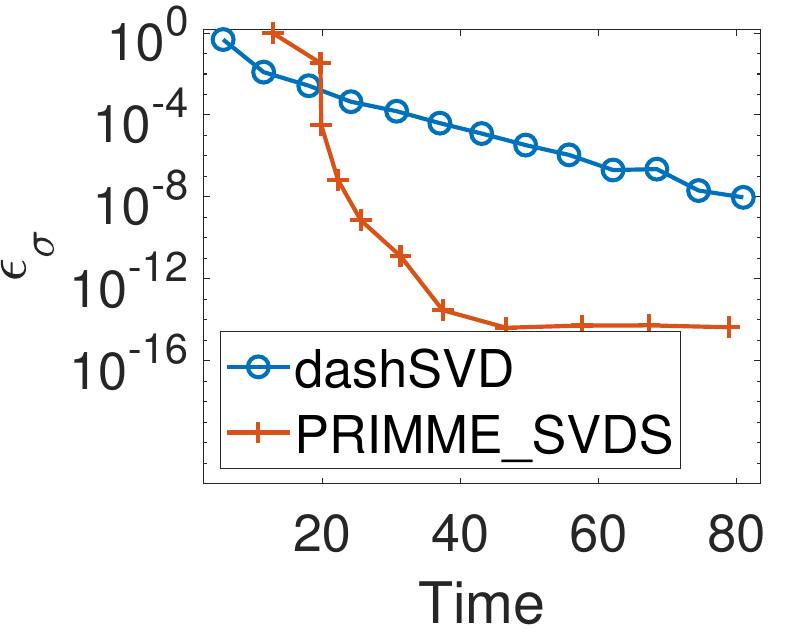} 
						\end{minipage}
					}\\[-1ex]
					\subfigure[Aminer] {
						\begin{minipage} {14cm}
							\centering
							\includegraphics[width=3.4cm, trim=103 265 115 273,clip]{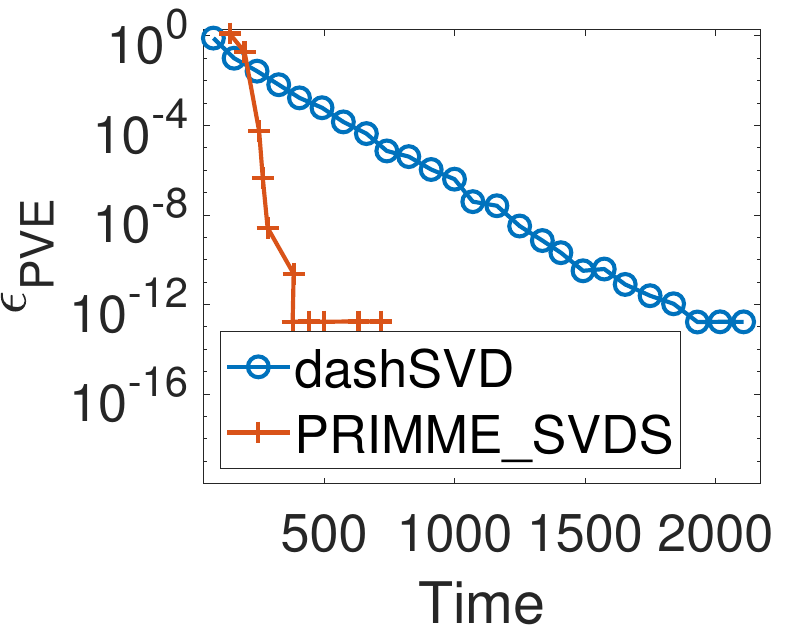} 
							\includegraphics[width=3.4cm, trim=103 265 115 273,clip]{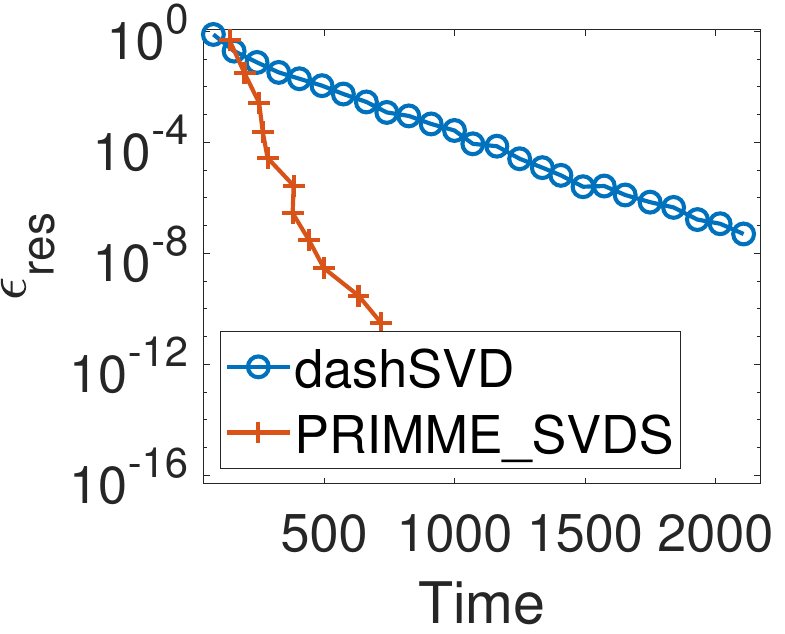} 
							\includegraphics[width=3.4cm, trim=103 265 115 273,clip]{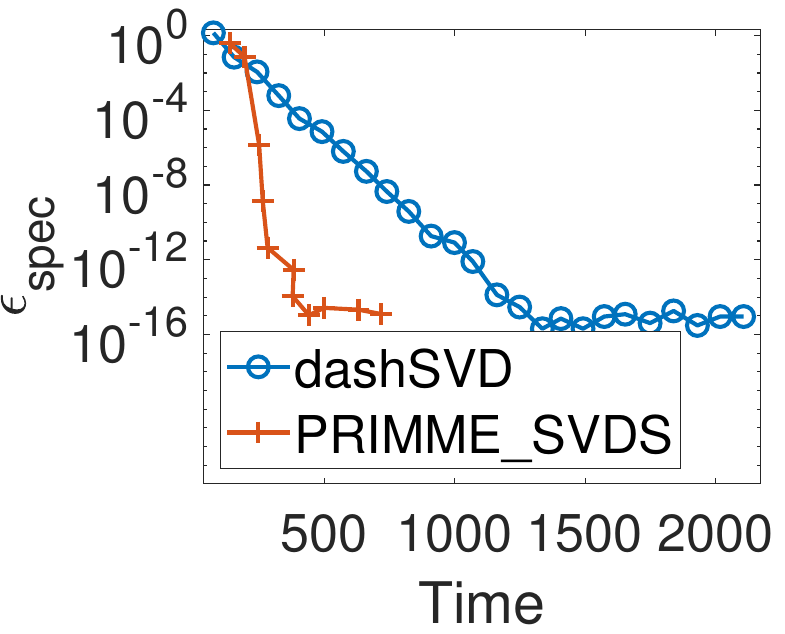} 
							\includegraphics[width=3.4cm, trim=103 265 115 273,clip]{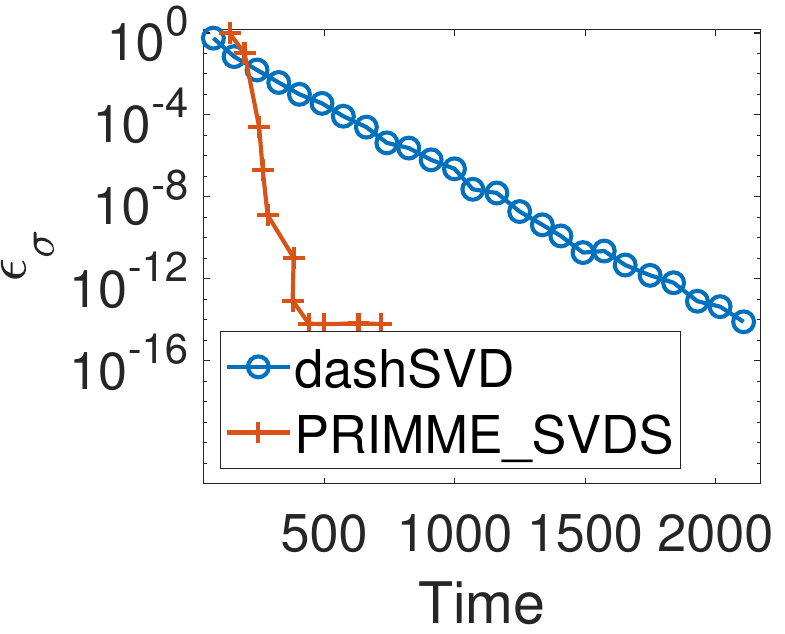} 
						\end{minipage}
					}\\[-1ex]
					\subfigure[sk-2005] { 
						\begin{minipage} {14cm}
							\centering
							\includegraphics[width=3.4cm, trim=103 265 115 273,clip]{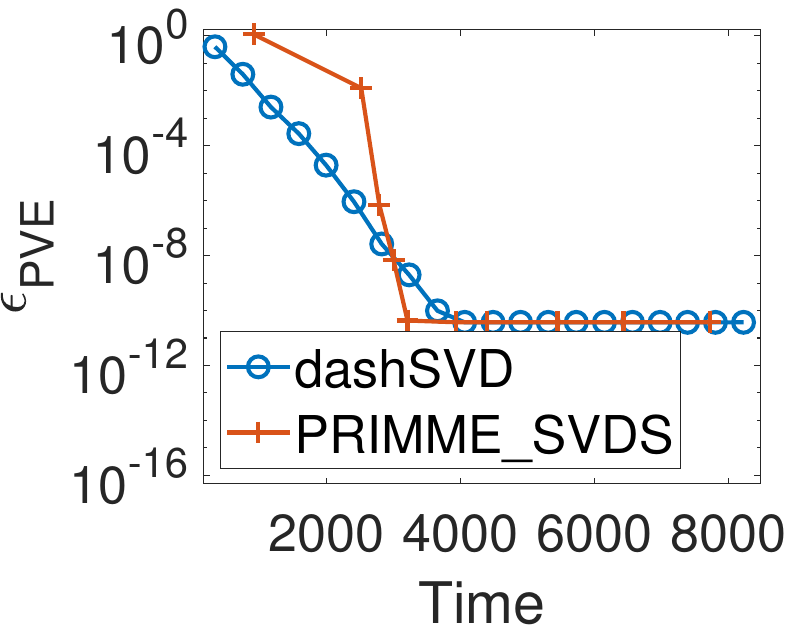} 
							\includegraphics[width=3.4cm, trim=103 265 115 273,clip]{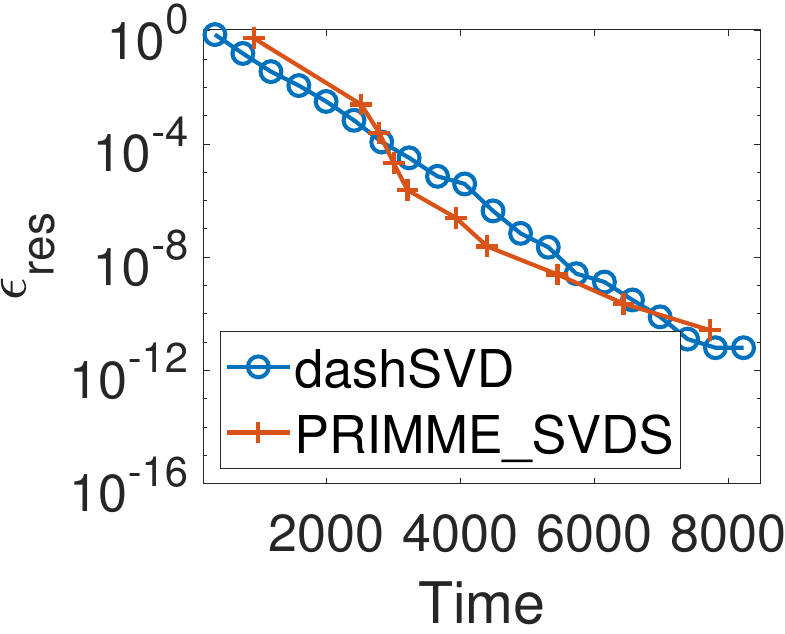} 
							\includegraphics[width=3.4cm, trim=103 265 115 273,clip]{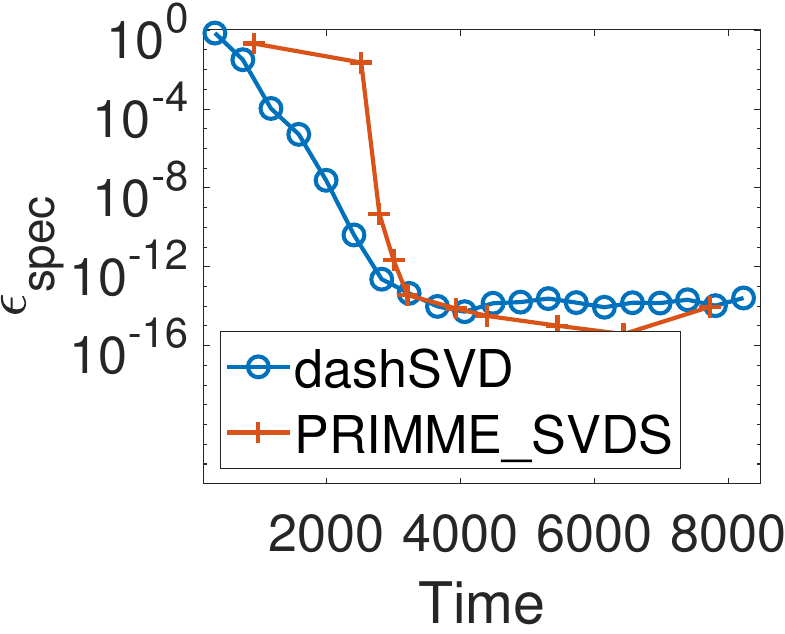} 
							\includegraphics[width=3.4cm, trim=103 265 115 273,clip]{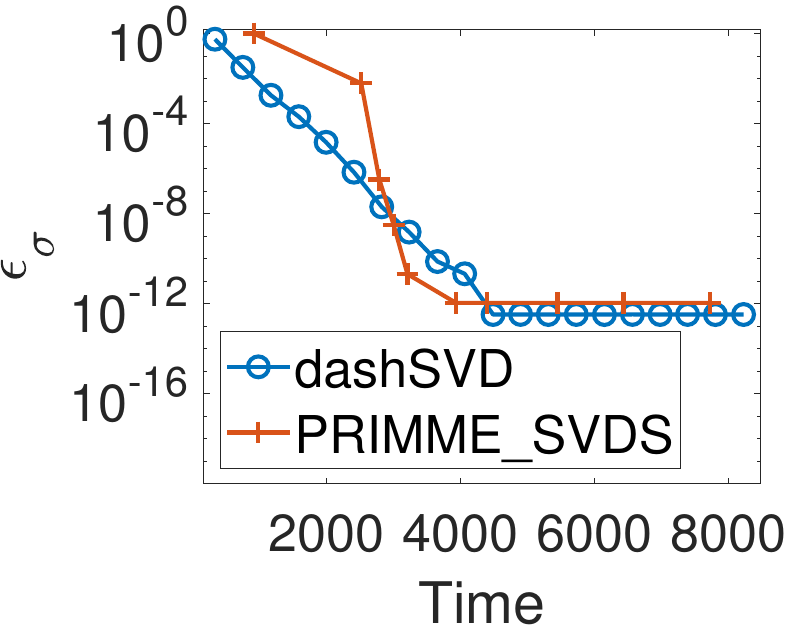} 
						\end{minipage}
					}\\[-1ex]	
					\caption{\notice{More error vs. runtime curves of dashSVD and PRIMME\_SVDS with 8-thread computing ($k=100$). The unit of time is second. }}
					\label{fig:primme_more}
					\centering
				\end{figure}

				The results of PRIMME\_SVDS and dashSVD with 8-thread computing for MovieLens, Rucci1, Aminer and sk-2005 are plotted in Fig.~\ref{fig:primme_more}. Fig.~\ref{fig:primme_more} shows \atn{that} our dashSVD costs less runtime than PRIMME\_SVDS when producing results with not high accuracy. 
				Because PRIMME\_SVDS firstly computes the eigenvalue decomposition of $\mathbf{A}^\mathrm{T}\mathbf{A}$, it performs better on the matrix when $m\gg n$, e.g. MovieLens, Rucci1 and Aminer. Therefore, the speed-up ratios of our dashSVD on the three matrices are smaller than those on SNAP, uk-2005 and sk-2005.

\begin{table}[h]
	\setlength{\abovecaptionskip}{0.01 cm}
	\setlength{\belowcaptionskip}{0.01 cm}
	\caption{Computational results of LazySVD, lansvd and dashSVD (tol=1e-2) with 16-thread computing ($k$=100). 
	}
	\label{table:lazy}
	\centering
	\small
	\begin{spacing}{0.8}
		\renewcommand{\multirowsetup}{\centering}
		\begin{tabular}{@{\,}c@{\,}c@{\,}c@{\,}c@{\,}c@{\,}c@{\,}c@{\,}c@{\,}c@{\,}c@{\,}c@{\,}c@{\,}c@{\,}c@{\,}} 
			\toprule
			\multirow{2}{*}{Matrix} & \multicolumn{3}{c}{LazySVD} & \multicolumn{3}{c}{lansvd} & \multicolumn{6}{c}{dashSVD} &\\
			\cmidrule(r){2-4}
			\cmidrule(r){5-7}
			\cmidrule(r){8-13}
			& Time\atnn{(s)} & $\epsilon_{\textrm{PVE}}$ & $\epsilon_{\sigma}$ & Time\atnn{(s)} & $\epsilon_{\textrm{PVE}}$ & $\epsilon_{\sigma}$ & Time\atnn{(s)} & Mem\atnn{(GB)} & $\epsilon_{\textrm{PVE}}$ & $\epsilon_{\sigma}$ & SP1 & SP2\\
			\midrule
			SNAP & 9.58 & 6.5E-3 & 3.2E-3 & 4.31 & NA$^1$ & 1.9E-2 & 1.93 & 0.30 & 5.7E-3 & 3.3E-3 & 5.0 & 2.2 \\
			MovieLens & 55.5 & 3.3E-3 & 1.6E-3 & 41.6 & 5.0E-2 & 2.6E-2 & 14.4 & 0.96 & 2.6E-3 & 1.7E-3 & 3.9 & 2.9 \\
			Rucci1 & 152 & 2.5E-2 & 1.2E-2 & 49.8 & NA$^1$ & 3.0E-2 & 13.6 & 4.66 & 1.9E-2 & 1.0E-2 & 11 & 3.7 \\
			Aminer & 1961 & 6.8E-3 & 3.4E-3 & 796 & NA$^1$ & 1.6E-2 & 341 & 31.5 & 3.1E-3 & 1.8E-3 & 5.8 & 2.3 \\
			uk-2005 & 7707 & 3.6E-3 & 1.7E-3 & NA$^2$ & NA$^2$ & NA$^2$ & 877 & 147 & 2.4E-3 & 1.5E-3 & 8.8 & NA$^2$\\
			sk-2005 & 13634 & 1.5E-3 & 7.4E-4 & NA$^2$ & NA$^2$ & NA$^2$ & 1385 & 200 & 8.4E-4 & 6.2E-4 & 9.8 & NA$^2$ \\
			\bottomrule 
		\end{tabular}
	\end{spacing}
	\vspace{-0.5pt}
	\begin{tablenotes}
		\item[] Mem \atn{represents} the memory cost. SP1 and SP2 represent the speed-up ratio from dashSVD to LazySVD and lansvd. NA$^1$ represents the singular vectors computed by lansvd contains NaN, which makes some metrics of error criterion unavailable. NA$^2$ represents program breaks down due to segment fault. 
	\end{tablenotes}
\end{table}	
		\begin{figure}[h]
			\setlength{\abovecaptionskip}{0.1 cm}
			\setlength{\belowcaptionskip}{0.1 cm}
			\centering
			\subfigure[\atn{The error vs. time curves of dashSVD and \texttt{LanczosBD} in \texttt{svds} with single-thread computing ($k=50$)}] {
				\label{fig:err_snap_50_1} 
				\begin{minipage}{14cm}
					\centering
					\includegraphics[width=3.4cm, trim=103 265 115 273,clip]{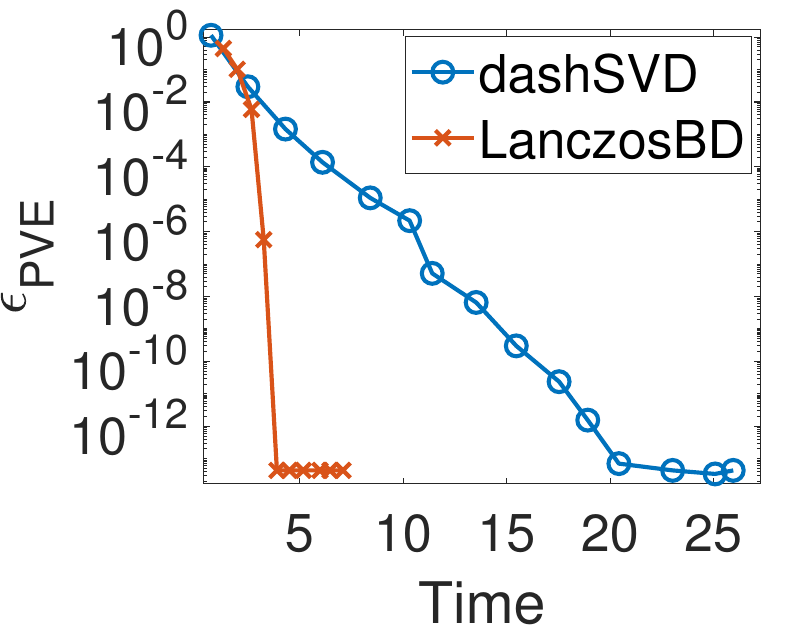}
					\includegraphics[width=3.4cm, trim=103 265 115 273,clip]{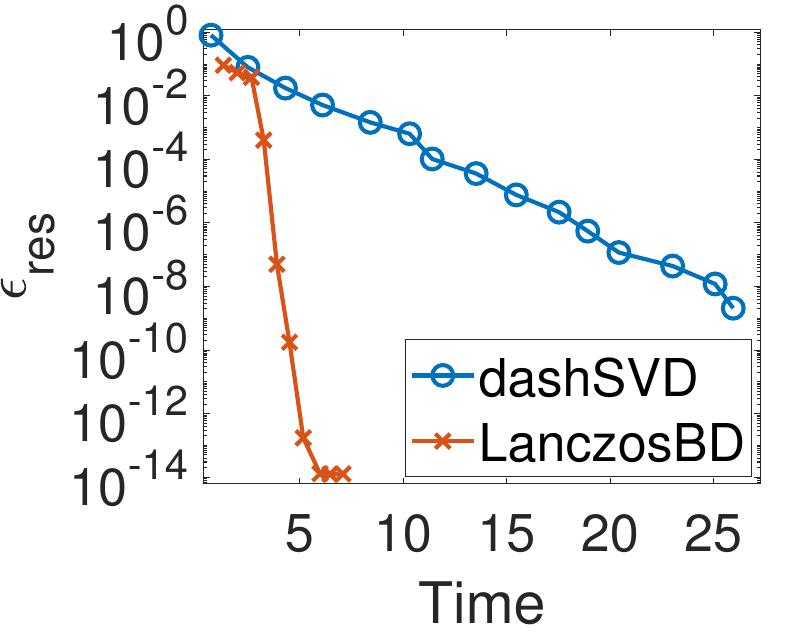}
					\includegraphics[width=3.4cm, trim=103 265 115 273,clip]{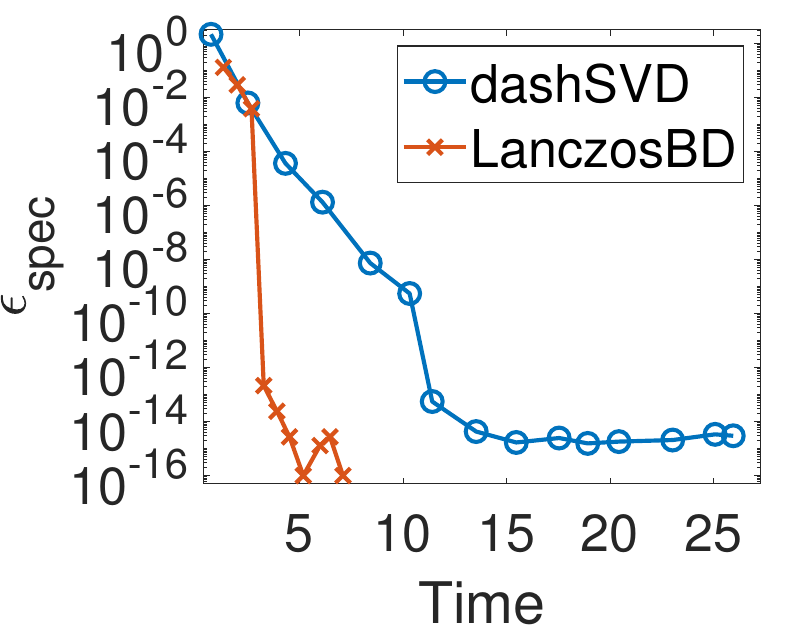}
					\includegraphics[width=3.4cm, trim=103 265 115 273,clip]{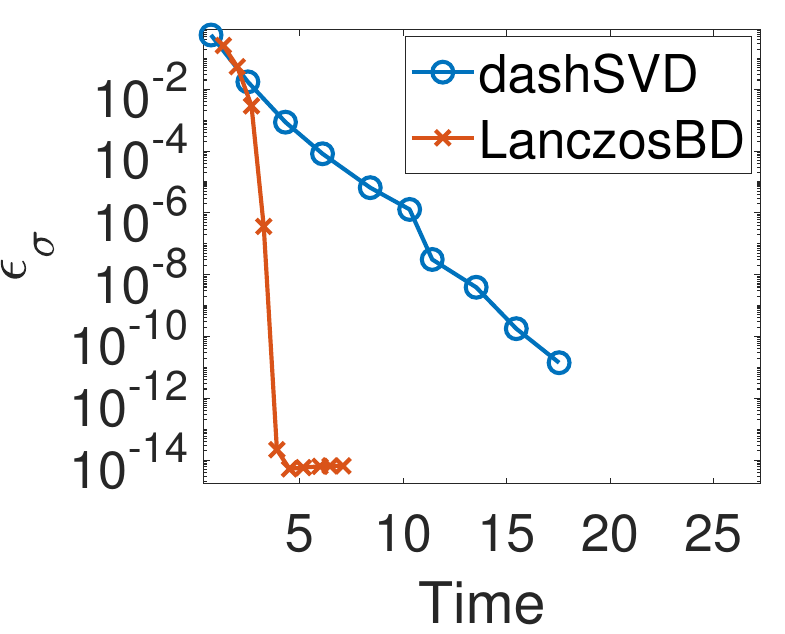}
				\end{minipage}
			}\\[-1ex]
			\subfigure[\atn{The error vs. time curves of dashSVD and PRIMME\_SVDS with 8-thread computing ($k=50$)}] {
				\label{fig:err_snap_50_2} 
				\begin{minipage} {14cm}
					\centering
					\includegraphics[width=3.4cm, trim=103 265 115 273,clip]{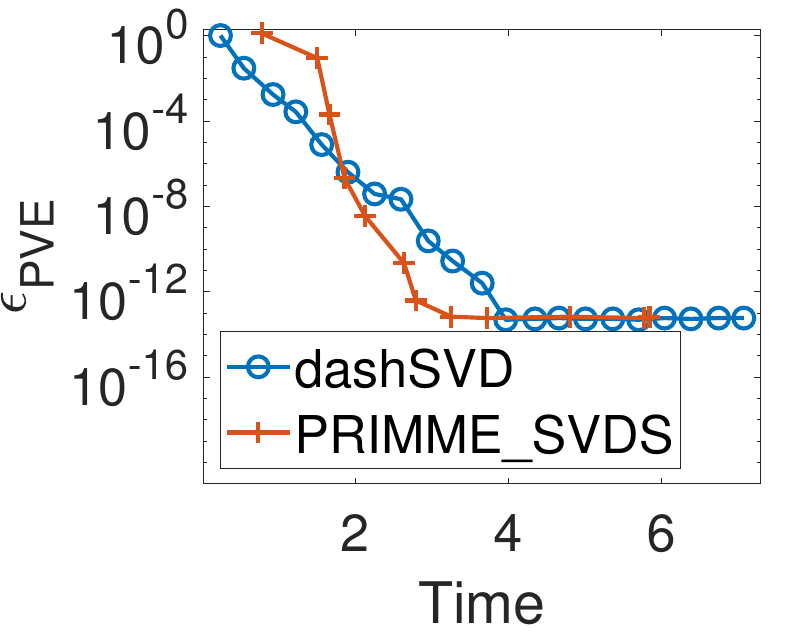} 
					\includegraphics[width=3.4cm, trim=103 265 115 273,clip]{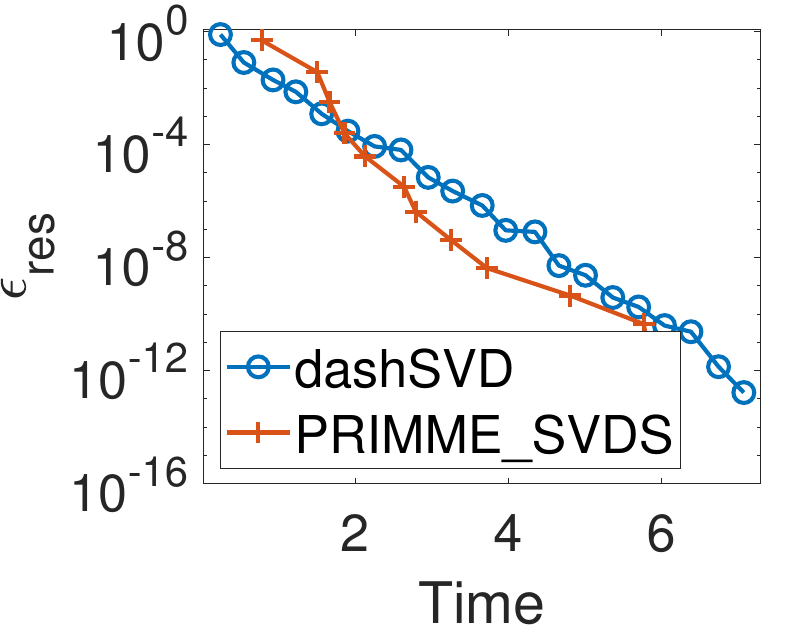} 
					\includegraphics[width=3.4cm, trim=103 265 115 273,clip]{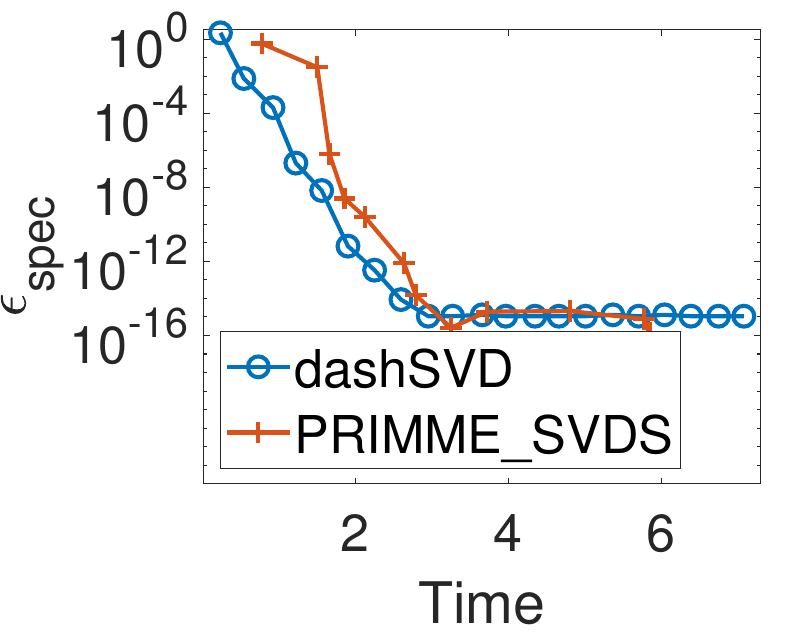} 
					\includegraphics[width=3.4cm, trim=103 265 115 273,clip]{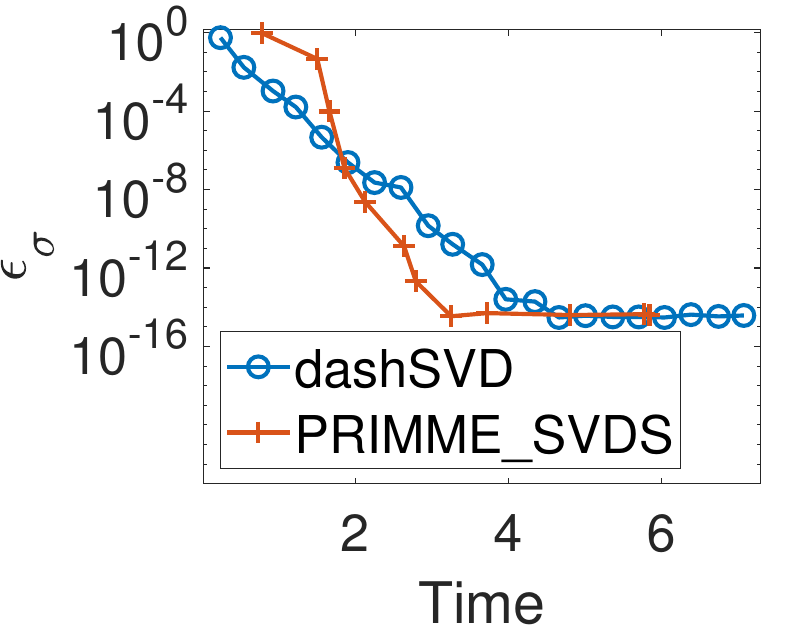} 
				\end{minipage}
			}\\[-1ex]
                \subfigure[\atn{The error vs. time curves of dashSVD and \texttt{LanczosBD} in \texttt{svds} with single-thread computing ($k=200$)}] {
				\label{fig:err_snap_200_1} 
				\begin{minipage}{14cm}
					\centering
					\includegraphics[width=3.4cm, trim=103 265 115 273,clip]{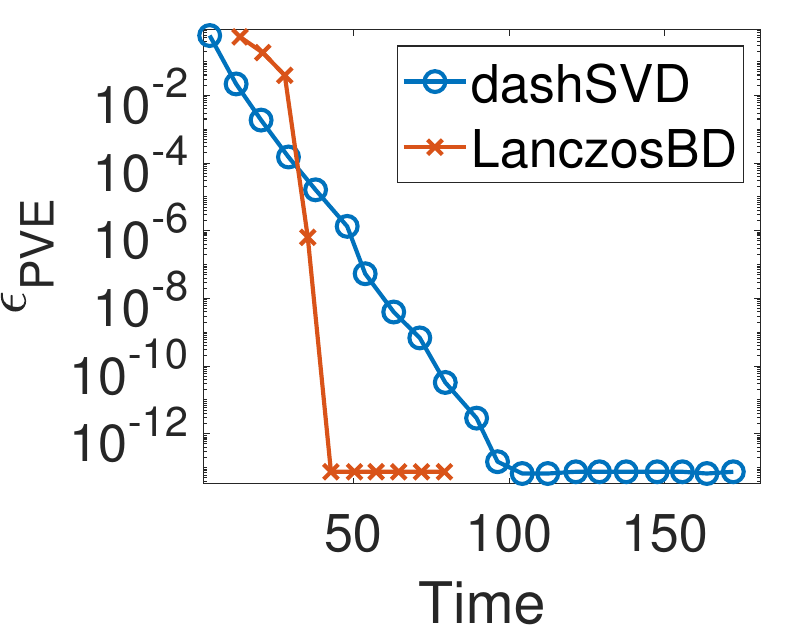}
					\includegraphics[width=3.4cm, trim=103 265 115 273,clip]{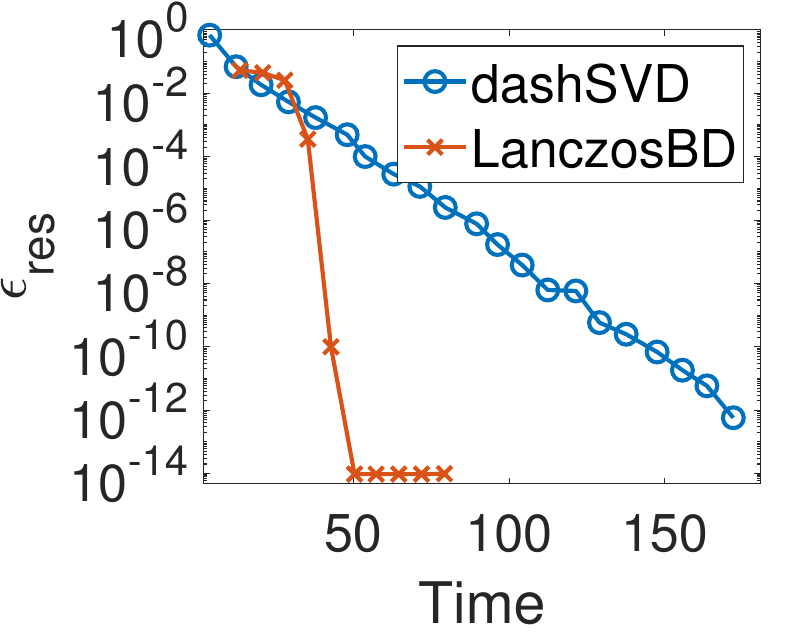}
					\includegraphics[width=3.4cm, trim=103 265 115 273,clip]{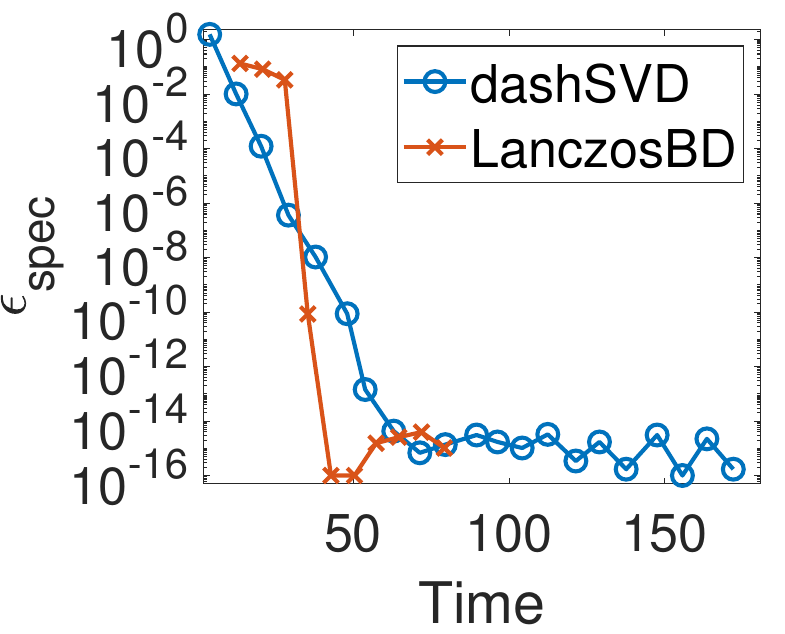}
					\includegraphics[width=3.4cm, trim=103 265 115 273,clip]{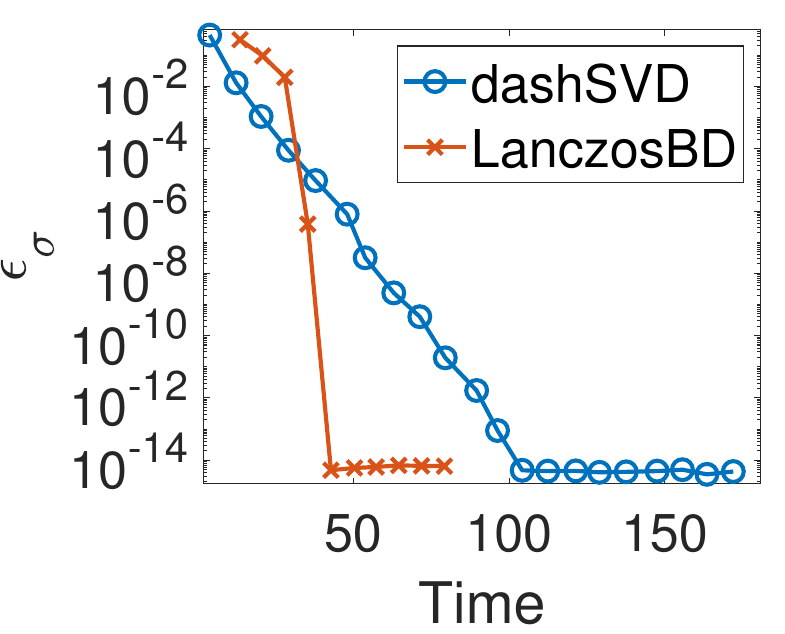}
				\end{minipage}
			}\\[-1ex]
			\subfigure[\atn{The error vs. time curves of dashSVD and PRIMME\_SVDS with 8-thread computing ($k=200$)}] {
				\label{fig:err_snap_200_2} 
				\begin{minipage} {14cm}
					\centering
					\includegraphics[width=3.4cm, trim=103 265 115 273,clip]{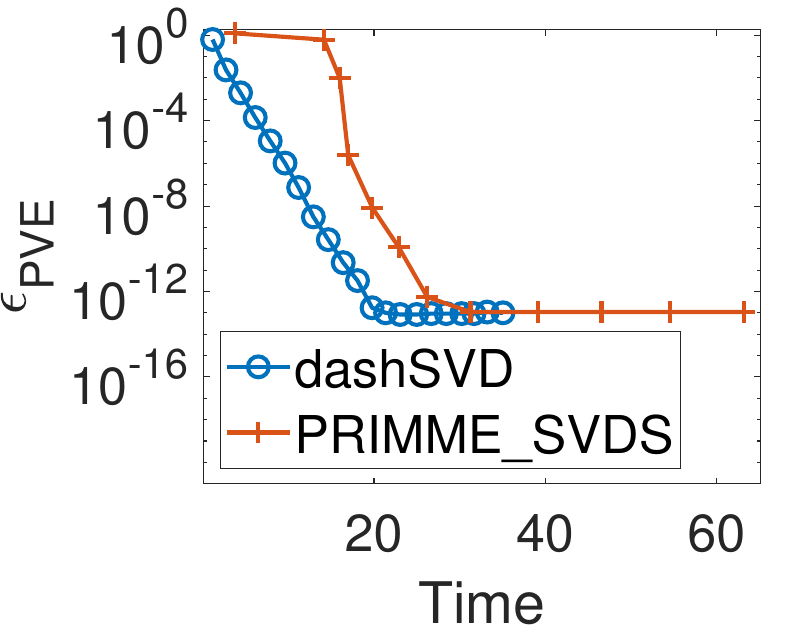} 
					\includegraphics[width=3.4cm, trim=103 265 115 273,clip]{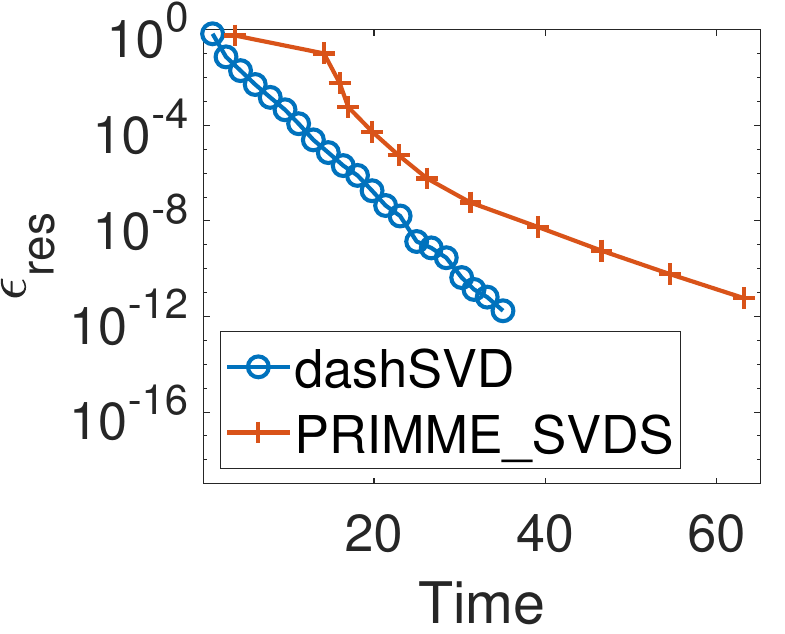} 
					\includegraphics[width=3.4cm, trim=103 265 115 273,clip]{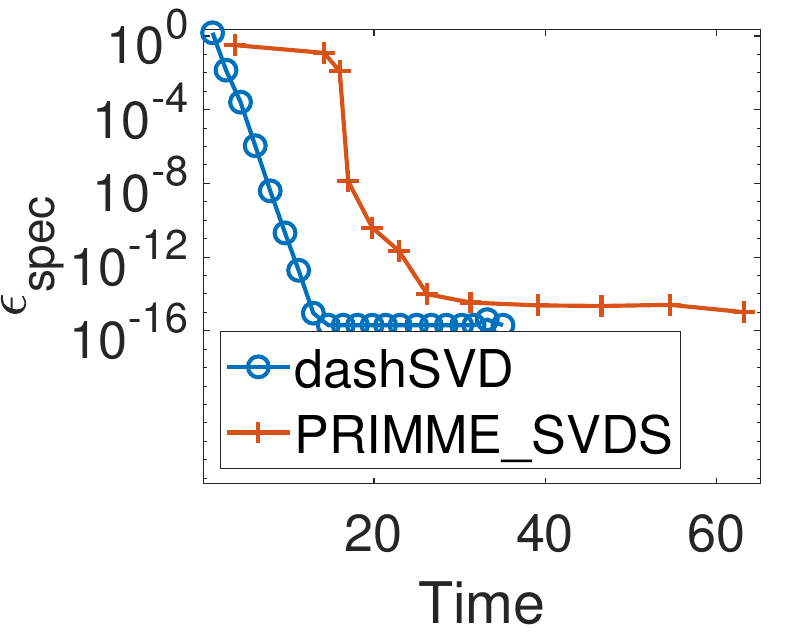} 
					\includegraphics[width=3.4cm, trim=103 265 115 273,clip]{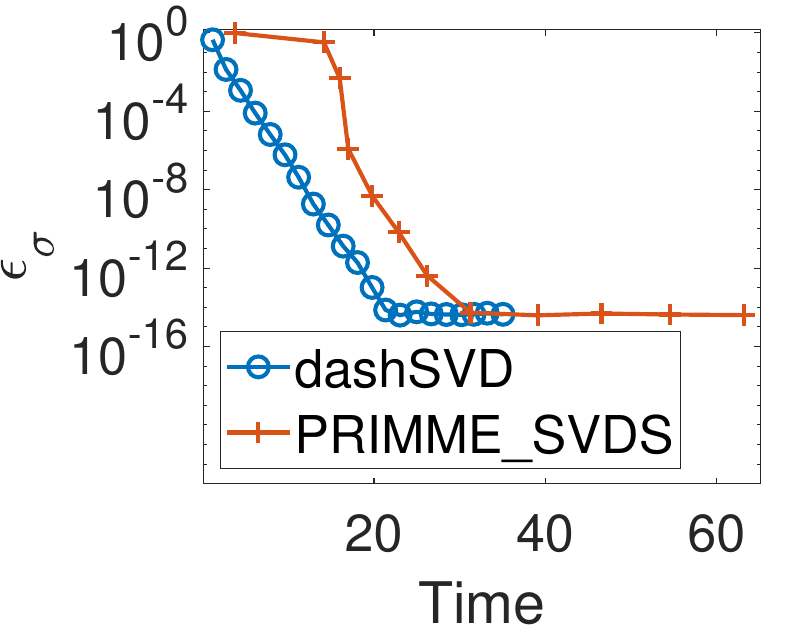} 
				\end{minipage}
			}
			\caption{\notice{The error vs. time curves of dashSVD compared with \texttt{LanczosBD} in \texttt{svds} and PRIMME\_SVDS for SNAP ($k=50$ and $200$). The unit of time is second. }}
			\label{fig:err_snap_vark}
			\centering
		\end{figure}

Then, the computational results with similar accuracy of LazySVD, lansvd and dashSVD \atn{with 16 threads} are listed in Table \ref{table:lazy}. For fair comparison, LazySVD is also implemented in C with MKL and the subspace dimensionality of lansvd is fixed to $1.5k$. We varies the $tol$ in lansvd to get the results with similar accuracy compared with dashSVD. From Table~\ref{table:lazy}, we see that dashSVD is \textbf{3.9X} to \textbf{11X} faster than LazySVD and \textbf{2.2X} to \textbf{3.7X} faster than lansvd for achieving similar accuracy. Besides, LazySVD and lansvd costs less memory than our dashSVD when the dimensionalities of subspace are the same. Notice that lansvd may produce singular vectors with NaN, which implies lansvd is not robust for computing the truncated SVD of large datasets.

\atn{Besides, we set $k=50,200$ on SNAP dataset to validate the efficiency of dashSVD with different rank parameters. The error vs. runtime curves are plotted in Fig.~\ref{fig:err_snap_vark}. The results of dashSVD compared with PRIMME\_SVDS and \texttt{LanczosBD} show that dashSVD performs better than the competitors when $k=200$, rather than when $k=50$ and $k=100$. This implies that the advantage of dashSVD would increase when $k$ increases. Besides, the memory cost of dashSVD, \texttt{LanczosBD} and PRIMME\_SVDS   are 0.16 GB, 0.17 GB, 0.37 GB  respectively for $k=50$, and 0.59 GB, 0.59 GB and 0.74 GB respectively for $k=200$.}

\atn{Finally, to show the variance of dashSVD when computing truncated SVD, we run dashSVD on SNAP with $k=100$ 
for 100 times for each $p=0,4,8,12,16,20$. The whisker plots of the four error metrics are shown in Fig.~\ref{fig:err_snap_var}. It shows that when computing low-accuracy SVD (with $p=0, 4, 8$), dashSVD produces results with little variance. For these cases, the standard deviations of most error metrics are no more than $5\%$ of mean value.}

\begin{figure}[h]
			\setlength{\abovecaptionskip}{0.1 cm}
			\setlength{\belowcaptionskip}{0.1 cm}
			\centering 
				\begin{minipage}{14cm}
					\centering
					\includegraphics[width=3.4cm, trim=103 265 115 273,clip]{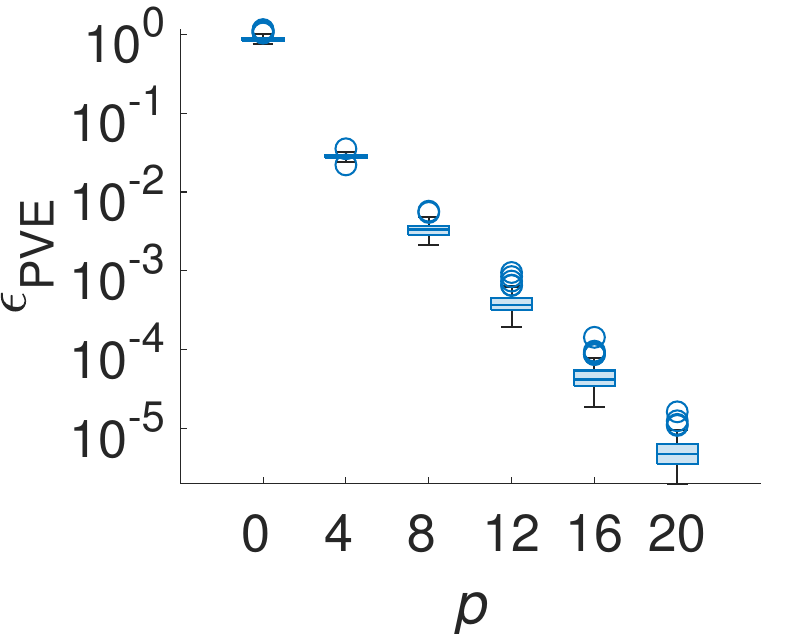}
					\includegraphics[width=3.4cm, trim=103 265 115 273,clip]{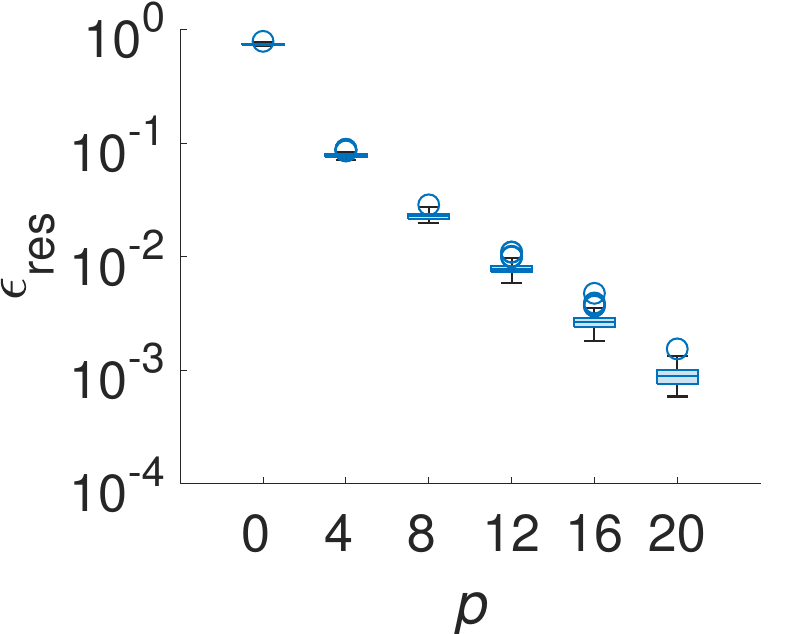}
					\includegraphics[width=3.4cm, trim=103 265 115 273,clip]{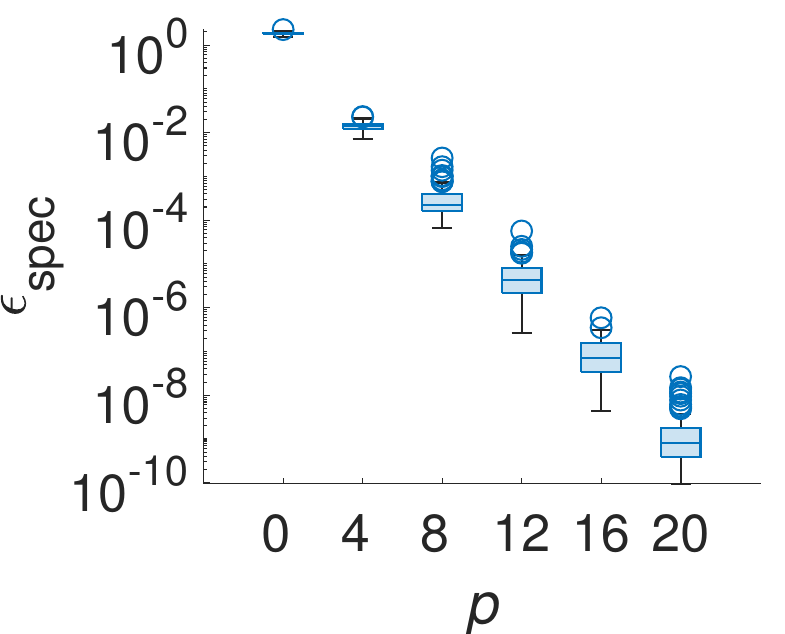}
					\includegraphics[width=3.4cm, trim=103 265 115 273,clip]{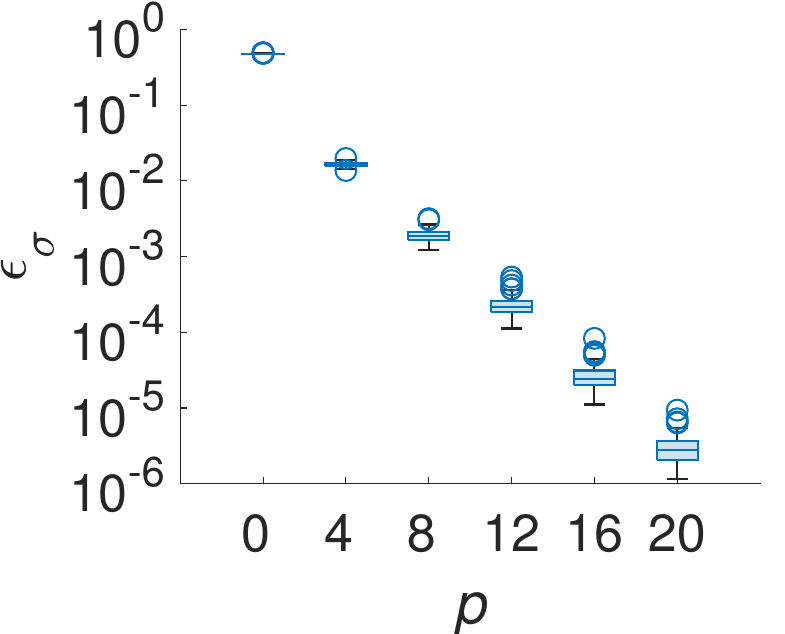}
				\end{minipage}
			\caption{\notice{The whisker plots of four error metrics vs. $p$ of dashSVD for SNAP ($k=100$).}}
			\label{fig:err_snap_var}
			\centering
		\end{figure}
	\bibliographystyle{ACM-Reference-Format}
	\bibliography{icml22}


\begin{thebibliography}{34}


\ifx \showCODEN    \undefined \def \showCODEN     #1{\unskip}     \fi
\ifx \showDOI      \undefined \def \showDOI       #1{#1}\fi
\ifx \showISBNx    \undefined \def \showISBNx     #1{\unskip}     \fi
\ifx \showISBNxiii \undefined \def \showISBNxiii  #1{\unskip}     \fi
\ifx \showISSN     \undefined \def \showISSN      #1{\unskip}     \fi
\ifx \showLCCN     \undefined \def \showLCCN      #1{\unskip}     \fi
\ifx \shownote     \undefined \def \shownote      #1{#1}          \fi
\ifx \showarticletitle \undefined \def \showarticletitle #1{#1}   \fi
\ifx \showURL      \undefined \def \showURL       {\relax}        \fi
\providecommand\bibfield[2]{#2}
\providecommand\bibinfo[2]{#2}
\providecommand\natexlab[1]{#1}
\providecommand\showeprint[2][]{arXiv:#2}

\bibitem[\protect\citeauthoryear{??}{Ami}{2021}]%
        {Aminer}
 \bibinfo{year}{2021}\natexlab{}.
\newblock \bibinfo{title}{Aminer}.
\newblock \bibinfo{howpublished}{\url{https://www.aminer.cn}}.
\newblock


\bibitem[\protect\citeauthoryear{??}{Int}{2021}]%
        {Intel}
 \bibinfo{year}{2021}\natexlab{}.
\newblock \bibinfo{title}{Intel one{API} {M}ath {K}ernel {L}ibrary}.
\newblock
  \bibinfo{howpublished}{\url{https://software.intel.com/content/www/us/en/develop/tools/oneapi/components/onemkl.html}}.
\newblock


\bibitem[\protect\citeauthoryear{??}{ran}{2021}]%
        {randqb-code}
 \bibinfo{year}{2021}\natexlab{}.
\newblock \bibinfo{title}{RandQR}.
\newblock
  \bibinfo{howpublished}{\url{https://users.oden.utexas.edu/~pgm/Codes/randqb_codes_intel_mkl.zip}}.
\newblock


\bibitem[\protect\citeauthoryear{??}{frP}{2022}]%
        {frPCA-code}
 \bibinfo{year}{2022}\natexlab{}.
\newblock \bibinfo{title}{frPCA\_sparse}.
\newblock
  \bibinfo{howpublished}{\url{https://github.com/XuFengthucs/frPCA_sparse}}.
\newblock


\bibitem[\protect\citeauthoryear{??}{pri}{2022}]%
        {primme-code}
 \bibinfo{year}{2022}\natexlab{}.
\newblock \bibinfo{title}{primme}.
\newblock \bibinfo{howpublished}{\url{https://github.com/primme/primme}}.
\newblock


\bibitem[\protect\citeauthoryear{Allen-Zhu and Li}{Allen-Zhu and Li}{2016}]%
        {LazySVD}
\bibfield{author}{\bibinfo{person}{Zeyuan Allen-Zhu} {and}
  \bibinfo{person}{Yuanzhi Li}.} \bibinfo{year}{2016}\natexlab{}.
\newblock \showarticletitle{Lazy{SVD}: Even faster {SVD} decomposition yet
  without agonizing pain}. In \bibinfo{booktitle}{\emph{Advances in Neural
  Information Processing Systems}}. \bibinfo{pages}{974--982}.
\newblock


\bibitem[\protect\citeauthoryear{ARB}{ARB}{2022}]%
        {openMP}
\bibfield{author}{\bibinfo{person}{The~OpenMP ARB}.}
  \bibinfo{year}{2022}\natexlab{}.
\newblock \bibinfo{title}{The OpenMP API specification for parallel
  programming}.
\newblock \bibinfo{howpublished}{\url{https://www.openmp.org/}}.
\newblock


\bibitem[\protect\citeauthoryear{{Baglama} and {Reichel}}{{Baglama} and
  {Reichel}}{2005}]%
        {baglama2005augmented}
\bibfield{author}{\bibinfo{person}{James {Baglama}} {and}
  \bibinfo{person}{Lothar {Reichel}}.} \bibinfo{year}{2005}\natexlab{}.
\newblock \showarticletitle{Augmented implicitly restarted Lanczos
  bidiagonalization methods}.
\newblock \bibinfo{journal}{\emph{SIAM Journal on Scientific Computing}}
  \bibinfo{volume}{27}, \bibinfo{number}{1} (\bibinfo{year}{2005}),
  \bibinfo{pages}{19--42}.
\newblock


\bibitem[\protect\citeauthoryear{Balay, Abhyankar, Adams, Benson, Brown, Brune,
  Buschelman, Constantinescu, Dalcin, Dener, Eijkhout, Gropp, Hapla, Isaac,
  Jolivet, Karpeev, Kaushik, Knepley, Kong, Kruger, May, McInnes, Mills,
  Mitchell, Munson, Roman, Rupp, Sanan, Sarich, Smith, Zampini, Zhang, Zhang,
  and Zhang}{Balay et~al\mbox{.}}{2022}]%
        {petsc-web-page}
\bibfield{author}{\bibinfo{person}{Satish Balay}, \bibinfo{person}{Shrirang
  Abhyankar}, \bibinfo{person}{Mark~F. Adams}, \bibinfo{person}{Steven Benson},
  \bibinfo{person}{Jed Brown}, \bibinfo{person}{Peter Brune},
  \bibinfo{person}{Kris Buschelman}, \bibinfo{person}{Emil~M. Constantinescu},
  \bibinfo{person}{Lisandro Dalcin}, \bibinfo{person}{Alp Dener},
  \bibinfo{person}{Victor Eijkhout}, \bibinfo{person}{William~D. Gropp},
  \bibinfo{person}{V\'{a}clav Hapla}, \bibinfo{person}{Tobin Isaac},
  \bibinfo{person}{Pierre Jolivet}, \bibinfo{person}{Dmitry Karpeev},
  \bibinfo{person}{Dinesh Kaushik}, \bibinfo{person}{Matthew~G. Knepley},
  \bibinfo{person}{Fande Kong}, \bibinfo{person}{Scott Kruger},
  \bibinfo{person}{Dave~A. May}, \bibinfo{person}{Lois~Curfman McInnes},
  \bibinfo{person}{Richard~Tran Mills}, \bibinfo{person}{Lawrence Mitchell},
  \bibinfo{person}{Todd Munson}, \bibinfo{person}{Jose~E. Roman},
  \bibinfo{person}{Karl Rupp}, \bibinfo{person}{Patrick Sanan},
  \bibinfo{person}{Jason Sarich}, \bibinfo{person}{Barry~F. Smith},
  \bibinfo{person}{Stefano Zampini}, \bibinfo{person}{Hong Zhang},
  \bibinfo{person}{Hong Zhang}, {and} \bibinfo{person}{Junchao Zhang}.}
  \bibinfo{year}{2022}\natexlab{}.
\newblock \bibinfo{title}{{PETS}c {W}eb page}.
\newblock \bibinfo{howpublished}{\url{https://petsc.org/}}.
\newblock


\bibitem[\protect\citeauthoryear{Benjamin~Erichson, Brunton, and
  Nathan~Kutz}{Benjamin~Erichson et~al\mbox{.}}{2017}]%
        {Erichson_2017_ICCV}
\bibfield{author}{\bibinfo{person}{N. Benjamin~Erichson},
  \bibinfo{person}{Steven~L. Brunton}, {and} \bibinfo{person}{J. Nathan~Kutz}.}
  \bibinfo{year}{2017}\natexlab{}.
\newblock \showarticletitle{Compressed singular value decomposition for image
  and video processing}. In \bibinfo{booktitle}{\emph{Proc. IEEE International
  Conference on Computer Vision (ICCV)}}. \bibinfo{pages}{1880--1888}.
\newblock


\bibitem[\protect\citeauthoryear{Boldi, Rosa, Santini, and Vigna}{Boldi
  et~al\mbox{.}}{2011}]%
        {BRSLLP}
\bibfield{author}{\bibinfo{person}{Paolo Boldi}, \bibinfo{person}{Marco Rosa},
  \bibinfo{person}{Massimo Santini}, {and} \bibinfo{person}{Sebastiano Vigna}.}
  \bibinfo{year}{2011}\natexlab{}.
\newblock \showarticletitle{Layered Label Propagation: A MultiResolution
  Coordinate-Free Ordering for Compressing Social Networks}. In
  \bibinfo{booktitle}{\emph{Proc. the 20th international conference on World
  Wide Web}}. \bibinfo{publisher}{ACM Press}, \bibinfo{pages}{587--596}.
\newblock


\bibitem[\protect\citeauthoryear{Boldi and Vigna}{Boldi and Vigna}{2004}]%
        {BoVWFI}
\bibfield{author}{\bibinfo{person}{Paolo Boldi} {and}
  \bibinfo{person}{Sebastiano Vigna}.} \bibinfo{year}{2004}\natexlab{}.
\newblock \showarticletitle{The {W}eb{G}raph Framework {I}: {C}ompression
  Techniques}. In \bibinfo{booktitle}{\emph{Proc. the Thirteenth International
  World Wide Web Conference (WWW 2004)}}. \bibinfo{publisher}{ACM Press},
  \bibinfo{address}{Manhattan, USA}, \bibinfo{pages}{595--601}.
\newblock


\bibitem[\protect\citeauthoryear{Davis and Hu}{Davis and Hu}{2011}]%
        {davis2011university}
\bibfield{author}{\bibinfo{person}{Timothy~A Davis} {and}
  \bibinfo{person}{Yifan Hu}.} \bibinfo{year}{2011}\natexlab{}.
\newblock \showarticletitle{The University of Florida sparse matrix
  collection}.
\newblock \bibinfo{journal}{\emph{ACM Transactions on Mathematical Software
  (TOMS)}} \bibinfo{volume}{38}, \bibinfo{number}{1} (\bibinfo{year}{2011}),
  \bibinfo{pages}{1--25}.
\newblock


\bibitem[\protect\citeauthoryear{Ding, Yu, Xie, and Liu}{Ding
  et~al\mbox{.}}{2020}]%
        {ding2020efficient}
\bibfield{author}{\bibinfo{person}{Xiangyun Ding}, \bibinfo{person}{Wenjian
  Yu}, \bibinfo{person}{Yuyang Xie}, {and} \bibinfo{person}{Shenghua Liu}.}
  \bibinfo{year}{2020}\natexlab{}.
\newblock \showarticletitle{Efficient model-based collaborative filtering with
  fast adaptive {PCA}}. In \bibinfo{booktitle}{\emph{2020 IEEE 32nd
  International Conference on Tools with Artificial Intelligence (ICTAI)}}.
  IEEE, \bibinfo{pages}{955--960}.
\newblock


\bibitem[\protect\citeauthoryear{Eckart and Young}{Eckart and Young}{1936}]%
        {eckart1936}
\bibfield{author}{\bibinfo{person}{Carl Eckart} {and} \bibinfo{person}{Gale
  Young}.} \bibinfo{year}{1936}\natexlab{}.
\newblock \showarticletitle{The approximation of one matrix by another of lower
  rank}.
\newblock \bibinfo{journal}{\emph{Psychometrika}} \bibinfo{volume}{1},
  \bibinfo{number}{3} (\bibinfo{year}{1936}), \bibinfo{pages}{211--218}.
\newblock


\bibitem[\protect\citeauthoryear{Feng, Xie, Song, Yu, and Tang}{Feng
  et~al\mbox{.}}{2018}]%
        {pmlr-v95-feng18a}
\bibfield{author}{\bibinfo{person}{Xu Feng}, \bibinfo{person}{Yuyang Xie},
  \bibinfo{person}{Mingye Song}, \bibinfo{person}{Wenjian Yu}, {and}
  \bibinfo{person}{Jie Tang}.} \bibinfo{year}{2018}\natexlab{}.
\newblock \showarticletitle{Fast randomized {PCA} for sparse data}. In
  \bibinfo{booktitle}{\emph{Proc. the 10th Asian Conference on Machine Learning
  (ACML)}}. \bibinfo{pages}{710--725}.
\newblock


\bibitem[\protect\citeauthoryear{Golub and Van~Loan}{Golub and
  Van~Loan}{2012}]%
        {matrix2012}
\bibfield{author}{\bibinfo{person}{Gene~H Golub} {and}
  \bibinfo{person}{Charles~F Van~Loan}.} \bibinfo{year}{2012}\natexlab{}.
\newblock \bibinfo{booktitle}{\emph{Matrix Computations}}.
\newblock \bibinfo{publisher}{JHU Press}.
\newblock


\bibitem[\protect\citeauthoryear{Halko, Martinsson, and Tropp}{Halko
  et~al\mbox{.}}{2011}]%
        {Halko2011Finding}
\bibfield{author}{\bibinfo{person}{N. Halko}, \bibinfo{person}{P.~G.
  Martinsson}, {and} \bibinfo{person}{J.~A. Tropp}.}
  \bibinfo{year}{2011}\natexlab{}.
\newblock \showarticletitle{Finding structure with randomness: Probabilistic
  algorithms for constructing approximate matrix decompositions}.
\newblock \bibinfo{journal}{\emph{SIAM Rev.}} \bibinfo{volume}{53},
  \bibinfo{number}{2} (\bibinfo{year}{2011}), \bibinfo{pages}{217--288}.
\newblock


\bibitem[\protect\citeauthoryear{Harper and Konstan}{Harper and
  Konstan}{2016}]%
        {movielens}
\bibfield{author}{\bibinfo{person}{F.~Maxwell Harper} {and}
  \bibinfo{person}{Joseph~A. Konstan}.} \bibinfo{year}{2016}\natexlab{}.
\newblock \showarticletitle{The {M}ovielens datasets: History and context}.
\newblock \bibinfo{journal}{\emph{{ACM} Transactions on Interactive Intelligent
  Systems (TiiS)}} \bibinfo{volume}{5}, \bibinfo{number}{4}
  (\bibinfo{year}{2016}), \bibinfo{pages}{19}.
\newblock


\bibitem[\protect\citeauthoryear{Heath}{Heath}{2018}]%
        {heath2018scientific}
\bibfield{author}{\bibinfo{person}{Michael~T Heath}.}
  \bibinfo{year}{2018}\natexlab{}.
\newblock \bibinfo{booktitle}{\emph{Scientific Computing: An Introductory
  Survey, Revised Second Edition}}.
\newblock \bibinfo{publisher}{SIAM}.
\newblock


\bibitem[\protect\citeauthoryear{Hernandez, Roman, and Tomas}{Hernandez
  et~al\mbox{.}}{2008}]%
        {andez2008robust}
\bibfield{author}{\bibinfo{person}{Vicente Hernandez},
  \bibinfo{person}{Jos{\'e}~E. Roman}, {and} \bibinfo{person}{Andr{\'e}s
  Tomas}.} \bibinfo{year}{2008}\natexlab{}.
\newblock \showarticletitle{A robust and efficient parallel {SVD} solver based
  on restarted Lanczos bidiagonalization}.
\newblock \bibinfo{journal}{\emph{Electronic Transactions on Numerical
  Analysis}}  \bibinfo{volume}{31} (\bibinfo{year}{2008}),
  \bibinfo{pages}{68--85}.
\newblock


\bibitem[\protect\citeauthoryear{Horn and Johnson}{Horn and Johnson}{1991}]%
        {horn1991topics}
\bibfield{author}{\bibinfo{person}{Roger~A. Horn} {and}
  \bibinfo{person}{Charles~R. Johnson}.} \bibinfo{year}{1991}\natexlab{}.
\newblock \bibinfo{booktitle}{\emph{Topics in Matrix Analysis}}.
\newblock \bibinfo{publisher}{Cambridge University Press}.
\newblock
\urldef\tempurl%
\url{https://doi.org/10.1017/CBO9780511840371}
\showDOI{\tempurl}


\bibitem[\protect\citeauthoryear{Larsen}{Larsen}{2004}]%
        {propack}
\bibfield{author}{\bibinfo{person}{Rasmus~Munk Larsen}.}
  \bibinfo{year}{2004}\natexlab{}.
\newblock \showarticletitle{PROPACK-Software for large and sparse {SVD}
  calculations}.
\newblock \bibinfo{journal}{\emph{Available online.
  \url{http://sun.stanford.edu/~rmunk/PROPACK}}} (\bibinfo{year}{2004}).
\newblock


\bibitem[\protect\citeauthoryear{Leskovec and Krevl}{Leskovec and
  Krevl}{2014}]%
        {snapnets}
\bibfield{author}{\bibinfo{person}{Jure Leskovec} {and} \bibinfo{person}{Andrej
  Krevl}.} \bibinfo{year}{2014}\natexlab{}.
\newblock \bibinfo{title}{{SNAP datasets}: {Stanford} large network dataset
  collection}.
\newblock \bibinfo{howpublished}{\url{http://snap.stanford.edu/data}}.
\newblock


\bibitem[\protect\citeauthoryear{Li, Linderman, Szlam, Stanton, Kluger, and
  Tygert}{Li et~al\mbox{.}}{2017}]%
        {alg971}
\bibfield{author}{\bibinfo{person}{H. Li}, \bibinfo{person}{G.~C. Linderman},
  \bibinfo{person}{A. Szlam}, \bibinfo{person}{K.~P. Stanton},
  \bibinfo{person}{Y. Kluger}, {and} \bibinfo{person}{M. Tygert}.}
  \bibinfo{year}{2017}\natexlab{}.
\newblock \showarticletitle{Algorithm 971: An implementation of a randomized
  algorithm for principal component analysis.}
\newblock \bibinfo{journal}{\emph{ACM Trans. Math. Software}}
  \bibinfo{volume}{43}, \bibinfo{number}{3} (\bibinfo{year}{2017}),
  \bibinfo{pages}{1--14}.
\newblock


\bibitem[\protect\citeauthoryear{Mahoney}{Mahoney}{2011}]%
        {mahoney2011}
\bibfield{author}{\bibinfo{person}{Michael~W. Mahoney}.}
  \bibinfo{year}{2011}\natexlab{}.
\newblock \showarticletitle{Randomized algorithms for matrices and data}.
\newblock \bibinfo{journal}{\emph{Foundations and Trends{\textregistered} in
  Machine Learning}} \bibinfo{volume}{3}, \bibinfo{number}{2}
  (\bibinfo{year}{2011}), \bibinfo{pages}{123--224}.
\newblock


\bibitem[\protect\citeauthoryear{{Martinsson} and {Tropp}}{{Martinsson} and
  {Tropp}}{2020}]%
        {martinsson2020randomized}
\bibfield{author}{\bibinfo{person}{Per-Gunnar {Martinsson}} {and}
  \bibinfo{person}{Joel~A. {Tropp}}.} \bibinfo{year}{2020}\natexlab{}.
\newblock \showarticletitle{Randomized numerical linear algebra: Foundations
  and algorithms}.
\newblock \bibinfo{journal}{\emph{Acta Numerica}}  \bibinfo{volume}{29}
  (\bibinfo{year}{2020}), \bibinfo{pages}{403--572}.
\newblock


\bibitem[\protect\citeauthoryear{Martinsson and Voronin}{Martinsson and
  Voronin}{2016}]%
        {martinsson2016randomized2}
\bibfield{author}{\bibinfo{person}{P.~G. Martinsson} {and} \bibinfo{person}{S.
  Voronin}.} \bibinfo{year}{2016}\natexlab{}.
\newblock \showarticletitle{A randomized blocked algorithm for efficiently
  computing rank-revealing factorizations of matrices}.
\newblock \bibinfo{journal}{\emph{SIAM J. Sci. Comput}}  \bibinfo{volume}{38}
  (\bibinfo{year}{2016}), \bibinfo{pages}{S485--–S507}.
\newblock


\bibitem[\protect\citeauthoryear{Musco and Musco}{Musco and Musco}{2015}]%
        {musco2015}
\bibfield{author}{\bibinfo{person}{Cameron Musco} {and}
  \bibinfo{person}{Christopher Musco}.} \bibinfo{year}{2015}\natexlab{}.
\newblock \showarticletitle{Randomized block {K}rylov methods for stronger and
  faster approximate singular value decomposition}. In
  \bibinfo{booktitle}{\emph{Advances in Neural Information Processing
  Systems}}. \bibinfo{pages}{1396--1404}.
\newblock


\bibitem[\protect\citeauthoryear{Oleg, Matthias, Laura, and Victor}{Oleg
  et~al\mbox{.}}{2023}]%
        {ICML2023}
\bibfield{author}{\bibinfo{person}{Balabanov Oleg}, \bibinfo{person}{Beaupère
  Matthias}, \bibinfo{person}{Grigori Laura}, {and} \bibinfo{person}{Lederer
  Victor}.} \bibinfo{year}{2023}\natexlab{}.
\newblock \showarticletitle{Block subsampled randomized Hadamard transform for
  {Nyström} approximation on distributed architectures}. In
  \bibinfo{booktitle}{\emph{Proc. International Conference on Machine
  Learning}}. \bibinfo{pages}{1564--1576}.
\newblock


\bibitem[\protect\citeauthoryear{Rokhlin, Szlam, and Tygert}{Rokhlin
  et~al\mbox{.}}{2010}]%
        {rokhlin2010randomized}
\bibfield{author}{\bibinfo{person}{Vladimir Rokhlin}, \bibinfo{person}{Arthur
  Szlam}, {and} \bibinfo{person}{Mark Tygert}.}
  \bibinfo{year}{2010}\natexlab{}.
\newblock \showarticletitle{A randomized algorithm for principal component
  analysis}.
\newblock \bibinfo{journal}{\emph{SIAM J. Matrix Anal. Appl.}}
  \bibinfo{volume}{31}, \bibinfo{number}{3} (\bibinfo{year}{2010}),
  \bibinfo{pages}{1100--1124}.
\newblock


\bibitem[\protect\citeauthoryear{Stathopoulos and McCombs}{Stathopoulos and
  McCombs}{2010}]%
        {2010primme}
\bibfield{author}{\bibinfo{person}{Andreas Stathopoulos} {and}
  \bibinfo{person}{James~R McCombs}.} \bibinfo{year}{2010}\natexlab{}.
\newblock \showarticletitle{PRIMME: Preconditioned iterative multimethod
  eigensolver: Methods and software description}.
\newblock \bibinfo{journal}{\emph{ACM Transactions on Mathematical Software
  (TOMS)}} \bibinfo{volume}{37}, \bibinfo{number}{2} (\bibinfo{year}{2010}),
  \bibinfo{pages}{1--30}.
\newblock


\bibitem[\protect\citeauthoryear{Voronin and Martinsson}{Voronin and
  Martinsson}{2015}]%
        {rsvdpack}
\bibfield{author}{\bibinfo{person}{Sergey Voronin} {and}
  \bibinfo{person}{Per-Gunnar Martinsson}.} \bibinfo{year}{2015}\natexlab{}.
\newblock \showarticletitle{{RSVDPACK}: An implementation of randomized
  algorithms for computing the singular value, interpolative, and {CUR}
  decompositions of matrices on multi-core and {GPU} architectures}.
\newblock \bibinfo{journal}{\emph{arXiv preprint arXiv:1502.05366}}
  (\bibinfo{year}{2015}).
\newblock


\bibitem[\protect\citeauthoryear{Wu, Romero, and Stathopoulos}{Wu
  et~al\mbox{.}}{2017}]%
        {wu2017primme_svds}
\bibfield{author}{\bibinfo{person}{Lingfei Wu}, \bibinfo{person}{Eloy Romero},
  {and} \bibinfo{person}{Andreas Stathopoulos}.}
  \bibinfo{year}{2017}\natexlab{}.
\newblock \showarticletitle{{PRIMME\_SVDS}: A high-performance preconditioned
  {SVD} solver for accurate large-scale computations}.
\newblock \bibinfo{journal}{\emph{SIAM Journal on Scientific Computing}}
  \bibinfo{volume}{39}, \bibinfo{number}{5} (\bibinfo{year}{2017}),
  \bibinfo{pages}{S248--S271}.
\newblock


\end{thebibliography}

\end{document}